  \let\oldparagraph\paragraph
  \renewcommand{\paragraph}{
    \@ifstar
      \xxxParagraphStar
      \xxxParagraphNoStar
  }
  \newcommand{\xxxParagraphStar}[1]{\oldparagraph*{#1}\mbox{}}
  \newcommand{\xxxParagraphNoStar}[1]{\oldparagraph{#1}\mbox{}}
  \let\oldsubparagraph\subparagraph
  \renewcommand{\subparagraph}{
    \@ifstar
      \xxxSubParagraphStar
      \xxxSubParagraphNoStar
  }
  \newcommand{\xxxSubParagraphStar}[1]{\oldsubparagraph*{#1}\mbox{}}
  \newcommand{\xxxSubParagraphNoStar}[1]{\oldsubparagraph{#1}\mbox{}}
\patchcmd\longtable{\par}{\if@noskipsec\mbox{}\fi\par}{}{}
\def\maxwidth{\ifdim\Gin@nat@width>\linewidth\linewidth\else\Gin@nat@width\fi}
\def\maxheight{\ifdim\Gin@nat@height>\textheight\textheight\else\Gin@nat@height\fi}
\def\fps@figure{htbp}
  \renewcommand*\contentsname{Table of contents}
  \newcommand\contentsname{Table of contents}
  \renewcommand*\listfigurename{List of Figures}
  \newcommand\listfigurename{List of Figures}
  \renewcommand*\listtablename{List of Tables}
  \newcommand\listtablename{List of Tables}
  \renewcommand*\figurename{Figure}
  \newcommand\figurename{Figure}
  \renewcommand*\tablename{Table}
  \newcommand\tablename{Table}
\newcommand{\anon}{1}
\def\1{\mbox{\bf 1}}
\def\R{\mathbb{R}}
\def\N{\mathbb{N}}
\def\P{\mathbb{P}}
\def\E{\mathbb{E}}
\def\R{\mathbb{R}}
\newcolumntype{E}{>{\centreing\arraybackslash}m{1.9cm}} 
\newcommand{\op}{\mathrm{op}}
\newtheorem{theo}{Theorem}
\newtheorem{lem}{Lemma}
\newtheorem{prop}{Proposition}
\newtheorem{Def/Prop}{Definition-Proposition}
\begin{document}

\def\spacingset#1{\renewcommand{\baselinestretch}%
{#1}\small\normalsize} \spacingset{1}

%%%%%%%%%%%%%%%%%%%%%%%%%%%%%%%%%%%%%%%%%%%%%%%%%%%%%%%%%%%%%%%%%%%%%%%%%%%%%%

\if1\anon
{
  \title{\bf Learning Centre Partitions from Summaries}
  \author{ Zinsou Max Debaly  \\  
    Faculté des Sciences,  Département de mathématiques,   
    Université de Sherbrooke \\
    Health Data Research Network\\
    Jean-François Ethier \\
    Faculté de médecine et des sciences de la santé, 
    Université de Sherbrooke \\
    Health Data Research Network\\
    Michael H. Neumann\\
    Institut für Mathematik, \\
    Friedrich-Schiller-Universität\\ 
    Félix Camirand Lemyre \\
    Faculté des Sciences,  Département de mathématiques,    
    Université de Sherbrooke \\
    Health Data Research Network}
  \maketitle
} \fi

\if0\anon
{
  \bigskip
  \bigskip
  \bigskip
  \begin{centre}
    {\LARGE\bf Title}
\end{centre}
  \medskip
} \fi

 \begin{abstract}
Multi-centre studies increasingly rely on distributed inference, where sites share only
centre-level summaries. In practice, homogeneity of parameters across centres is often
violated, which calls for procedures that both \emph{test} equality and \emph{learn}
centre groupings prior to estimation. We develop multivariate Cochran-type tests
operating solely on summary statistics, derive their asymptotic $\chi^2$-mixture null
distributions, and provide plug-in estimators that make them fully implementable from
centre-level outputs. These tests are embedded in a sequential, test-driven
\emph{Clusters-of-Centres} (\textsc{CoC}) algorithm that merges centres only when
equality is not rejected. To improve finite-sample behaviour, we introduce a multi-round
bootstrap \textsc{CoC} procedure that re-evaluates candidate fusions across independently
resampled sets of summaries; under mild regularity conditions and a separation
requirement, the true partition is recovered with probability tending to one as the
number of rounds grows. We complement these asymptotic guarantees with explicit type-I and
type-II error bounds derived via Berry--Esseen approximations and
$\sqrt{(\log n)/n}$ deviation inequalities, characterise a
detectability threshold of the same order in terms of centre
separability, and analyse a bootstrap variant
with a shrinking rejection region under which both error rates vanish in probability
simultaneously. Simulations and an application to U.S.\ airline on-time performance
data illustrate accurate heterogeneity detection and reliable partition recovery across
a range of configurations.
\end{abstract}

\noindent%
{\it Keywords:} Distributed inference; heterogeneity testing; partition recovery.
\vfill

\newpage
\spacingset{1.8} % DON'T change the spacing!

\section{Introduction}

The proliferation of large-scale,  distributed data systems across domains, ranging from healthcare and finance to education,  marketing,  and the environmental sciences, has transformed the landscape of modern statistical inference. Advances in cloud computing,  digital infrastructure,  and real-time data acquisition have enabled the collection of massive datasets from geographically dispersed sources. Whether through sensor networks,  enterprise data lakes,  or nationwide administrative registries,  multi-site data generation is now the norm rather than the exception.
In parallel,  growing societal and legal concerns over privacy,  confidentiality,  and ethical data use have led to the development of stringent data protection frameworks,  such as the General Data Protection Regulation (GDPR) in Europe and the Health Insurance Portability and Accountability Act (HIPAA) in the United States. These frameworks often prohibit the pooling of sensitive,  individual-level data into centralized repositories,  especially in high-stakes domains involving personal,  financial,  or health information.
As a result,  federated or distributed inference paradigms, where data remain local to each site and only aggregate statistics,  model summaries,  or encrypted representations are exchanged, have emerged as crucial tools for collaborative analytics. These approaches   enable joint modeling,  estimation,  and hypothesis testing across multiple centres. They are particularly well-suited to studies of rare events,  subgroup-specific dynamics,  or heterogeneous populations,  where a single site may lack sufficient data,  but combined information can yield robust and generalizable insights.

The first distributed inference methods typically relied on straightforward aggregation rules. For instance,  the naive average pools local parameter estimates from each centre without sophisticated weighting,  whereas inverse-variance weighted (IVW) averaging  incorporates estimation uncertainties from each centre to improve statistical efficiency. More advanced approaches have subsequently been developed,  such as the communication-efficient surrogate likelihood (CSL) proposed by \cite{Jordan03042019} and the aggregated estimating equations (AEE) introduced by \cite{lin2011aggregated}. These methods significantly enhance estimation accuracy compared to simplistic averaging and reduce communication burdens. For a comprehensive overview,  see the recent review works by \cite{gao2022review} and \cite{lemyre2024distributed}.

A critical challenge in multi-centre analyses is inherent heterogeneity among data centres. For instance, variations in data collection protocols,  patient demographics,  and equipment calibration lead to systematic differences across sites,  complicating straightforward aggregation. While extensive harmonization  largely resolves operational heterogeneity (workflows, coding, measurement protocols), statistical distribution heterogeneity, reflected in divergent covariate distributions and outcome relationships, often persists across sites.  Ignoring such heterogeneity can introduce biases and obscure   meaningful variations. 

Traditionally,  Cochran's Q test (see,  for example,  \cite{kulinskaya2011testing}) has been employed to detect inter-study heterogeneity in meta-analyses and can be adapted for multi-centre studies. This test evaluates whether discrepancies among site-specific estimates exceed expectations from sampling variability alone. However,  Cochran's Q test and its random-effects model extensions \citep{higgins2002quantifying} are inherently univariate,  assessing single parameters or outcomes independently and thus overlooking covariance structures among multiple correlated parameters. Multivariate extensions \citep{gasparrini2012multivariate} primarily test for the existence of random effects rather than directly assessing the equality of parameter vectors across multiple centres. But,  no multivariate Cochran-type test explicitly designed to test parameter equality in a distributed inference context exists in the literature,  likely due to traditional meta-analysis settings' limited availability of detailed summary statistics,  prompting reliance on parameter-by-parameter univariate tests.

 Recent works in distributed inference address heterogeneity primarily through two distinct approaches. The first involves designing estimation procedures that aim to identify a prevailing common parameter among the centre-specific parameters,  treating the remaining parameters as outliers. For instance,  \cite{minsker2019distributed} proposed a robust estimator at the aggregation stage,  while \cite{guo2025robust} developed a two-step procedure that first selects centres sharing the prevailing parameter via a data-driven criterion and subsequently performs post-selection inference using resampling methods. These approaches generally assume that at least $K/2$ centres share the same true parameter; see Assumption~1 in \cite{guo2025robust} or Section~2.4.1 in \cite{minsker2019distributed}. Our goal is different: we learn the full centre partition. Our paper introduces a novel perspective suitable for highly heterogeneous settings,  explicitly accommodating the extreme scenario in which each centre defines its own parameter.
The second approach posits a global model encompassing both centre-specific nuisance parameters and globally common parameters. Examples include the fully parametric distributed modeling framework of \cite{gu2023distributed} and \cite{DuanHeterogeneityAware} and  the semi-parametric partially linear framework studied by \cite{ZhaoAOSpartiallyLinearFramework}. However,  such frameworks currently lack explicit guidelines to differentiate between parameters that are genuinely common and those that are heterogeneous. 

Closely related to our setting,  \citet{chen2024heterogeneity} recently considered clustering of centres with a focus on high–dimensional models. They propose a penalized estimator that simultaneously performs variable selection (e.g.,  via a SCAD penalty) and convex–clustering–based fusion. However,  the stated guarantees require nontrivial prior knowledge of the oracle grouping structure, such as lower and upper bounds on cluster sizes, to specify tuning parameters appropriately \citep[Theorems~1–2]{chen2024heterogeneity}. Moreover,  parameter consistency does not by itself ensure clustering consistency: although convex clustering admits a unique solution for each tuning level,  that solution depends on the tuning parameter,  and for sufficiently large values all centres may be merged into a single cluster even when the underlying parameters differ. By contrast,  our test-based
merging method is free of tuning parameters and recovers the true partition of centres (see
Theorem 1).

 \paragraph*{Overview of contributions.}
In this paper, we
(i) derive multivariate Cochran-type global tests and two-block integration tests for the equality of parameter vectors based solely on centre-level summary statistics;
(ii) propose a test-driven clustering algorithm (the CoC algorithm) that sequentially fuses blocks whenever equality is not rejected, using a deterministic largest-\(p\) tie-break rule;
(iii) establish a golden-partition recovery result for the multi-round CoC algorithm based on bootstrap resamples: under standard regularity conditions and a separation assumption between the true blocks, we show that \(\P\!\big(\widehat{\mathcal C}_n^{(R(n))}=\mathcal P\big)\to 1\), where \(\widehat{\mathcal C}_n^{(R(n))}\) denotes the partition returned after \(R(n)\) rounds and \(\mathcal P\) is the true underlying partition; and
(iv) introduce fusion tests with a shrinkage rejection region that also achieve asymptotic recovery of the true partition.

\paragraph*{Notations}
Throughout,  $\Rightarrow$ denotes convergence in distribution.
For a vector $u$ and a positive–semidefinite matrix $\Sigma$, 
$\mathcal{N}(u, \Sigma)$ denotes the multivariate normal distribution with mean $u$
and covariance $\Sigma$.
We write $\xrightarrow{p}$ for convergence in probability and use
$a_n = o_p(1)$ to indicate that the random sequence $a_n$ converges in probability to $0$;
 more generally,  $a_n = o_p(b_n)$ means $a_n/b_n \xrightarrow{p} 0$.
Let $I_p$ be the $p \times p$ identity matrix.
For square matrices $M_1, \dots, M_q$ of size $p \times p$, 
$\operatorname{diag}(M_1, \dots, M_q)$ denotes the block–diagonal matrix of order $pq$
with diagonal blocks $M_1, \dots, M_q$.
For any matrix $M$,  $M^{\top}$ denotes its transpose; if $M$ is positive–semidefinite, 
$M^{1/2}$ denotes any symmetric square root satisfying $M^{1/2} M^{1/2}  = M$, $\|M\|_{\op}
\;:=\;
\sup_{\|x\|_2=1}\,\|Mx\|_2
\;=\;
\sup_{x\neq 0}\frac{\|Mx\|_2}{\|x\|_2}
$ is the operator norm of $M$.
The symbol $0$ is used generically for the zero scalar,  the zero vector,  or the zero matrix, 
according to context.
For a set $S$,  $|S|$ denotes its cardinality.
For a vector $x=(x_1, \ldots, x_d) \in \mathbb{R}^d$, 
$\|x\| := \sqrt{\sum_{i=1}^d x_i^2}$ denotes the Euclidean norm.
For two random variables (or vectors) $X$ and $Y$,  $X \overset{d}{=} Y$ denotes equality in distribution.

\subsection{Problem Setup}

Consider a distributed data environment with a fixed  \(K\)   data centres,  each collecting \(n\) observations.  At the \(k\)th centre,  let the observed data be denoted by
\[
\{Z_{i, k}: i = 1, \dots, n\}, 
\]
where each \(Z_{i, k}\) follows a distribution \(P_{i, k}\).  We do not assume a common distribution here; however,  we assume that each \(P_{i, k}\) shares a common parameter \(\varrho_{0,k} \in \Theta \subseteq \mathbb{R}^l\). We consider a transformation of $\varrho_{0,k},  \;\theta_{0,k} = \varrho(\varrho_{0,k})$ and assume that each data centre provides an estimator \(\widehat{\theta}_{n, k}\) of \(\theta_{0,k} \in \R^p\).  We do not restrict the choice of estimation procedure.  Instead,  we suppose that the sample size \(n\) is large enough for each local estimator to admit the following Bahadur decomposition:
\begin{equation}
\label{eq::asymptoticDecomposition}
\sqrt{n}\, \bigl(\widehat{\theta}_{n, k} - \theta_{0,k}\bigr)
    = V_k^{-1}\, U_{n, k} \;+\;\varepsilon_{n, k}.
\end{equation}
We will rely on the following set of assumptions :
\begin{itemize}
  \item[(A1)] \(V_k \in \mathbb{R}^{p \times p}\): the local sensitivity positive-definite matrix;
  \item[(A2)] \(U_{n, k}\Rightarrow\mathcal{N}(0, \, Q_k)\) as \(n\to\infty\),  with \(Q_k \ne 0\) is    positive semi-definite;
  \item[(A3)] \(\varepsilon_{n, k}\xrightarrow{p}0\) as \(n\to\infty\);
  \item[(A4)] the random vectors \(U_{n, 1}, \, \ldots,  U_{n, K}\) are independent across centres \(k\) for each  $n$.
\end{itemize}
We address unequal local sample sizes later in the paper.
\paragraph*{On assumptions \textbf{(A1)}–\textbf{(A4)}} For $\theta_{0,k} = \varrho_{0,k}, $ Assumptions \textbf{(A1)}–\textbf{(A3)} hold for any M-estimator whose loss is twice continuously differentiable and whose parameter lies in the interior of  a compact set ~\(\Theta\) \citep{van2000asymptotic}.  
They also cover convex,  non-differentiable losses such as the check loss in quantile regression  \citep{koenker2005quantile}. 
For a \emph{regular } well-specified parametric model (in particular, when the support of the distribution is independent of the model's parameters),  the sensitivity and variability matrices coincide (\(V_k = Q_k\)).  
In more general M-estimation,  \(Q_k\) is the score-variance matrix and is therefore positive \emph{semi}-definite.  For an arbitrary transformation $\varrho$,  for example, a vector of odds ratios in logistic regression,  the Bahadur expansion can be obtained with the Delta method under   additional  differentiability conditions on $\varrho$, see for example \cite[chap. 3]{van2000asymptotic}. For a high-dimensional parameter $\varrho_0, $ debiased estimators for each one-dimensional component of  $\varrho_0$ that satisfy the Bahadur expansion are now available,  see for example \cite{van2014AOS} or \cite{zhang2017simultaneous} for high-dimensional regression models.  Equation \eqref{eq::aee} is  also satisfied by the broad class of profile estimating equations (see,  e.g.,  \citep[eqs. (21.5)–(21.6)]{kosorok2008introduction}) for the finite-dimensional component of the parameters of  semi-parametric models,  under assumptions \citep[A1–A4,  pp. 403–404]{kosorok2008introduction}; see \citep[Theorem 21.6]{kosorok2008introduction} for more  details. More basically,  these assumptions hold for a large class of non-degenerate U-statistics via Hoeffding’s decomposition. Independence of \(U_{n, 1}, \, \ldots,  U_{n, K}\) across centres in \textbf{(A4)} is the more restrictive assumption.  
It may be violated when centres are geographically proximate or otherwise share overlapping catchment areas,  creating inter-centre correlation.   In the Supplementary
Material (Section~\ref{sec:weakAssumptions}) we show that the asymptotic null
distribution of the test statistic remains valid under cross-centre dependence,
provided the full joint covariance $\overline{Q}$ is known. In practice, however,
the off-diagonal blocks of $\overline{Q}$ are not estimable from centre-level
summaries alone, so implementation under dependence requires either matched
individual-level records or external knowledge of the dependence structure. We
therefore maintain~\textbf{(A4)} as the working assumption throughout, and regard
the extension to dependent centres as an open problem.

We will make use of the following assumption :
\begin{description}
    \item[(A5)] For each centre $k, $ there are some matrices $\widehat V_{n, k},  \text{ and }\widehat Q_{n, k}$ that converge to $V_k$ and $Q_k$ with probability tending to $1.$
\end{description}

\citet{lin2011aggregated} defined the aggregated estimating equations (AEE) estimator by weighted aggregation using the sensitivity matrices \(V_k\):
\begin{equation}
\label{eq::aee}
\widehat{\theta}_n
  \;=\;\Bigl(\sum_{k=1}^K \widehat V_{n, k}\Bigr)^{-1}
    \bigl(\sum_{k=1}^K \widehat V_{n, k}\, \widehat{\theta}_{n, k}\bigr).
\end{equation}

\paragraph*{Computational note: AEE vs.\ inverse–variance–weighted (IVW) estimators.}
Under correct specification for regular models,  the Aggregated Estimating Equation (AEE)
and fixed–effect IVW estimators coincide. Computationally,  AEE is lightweight at the
data centres: after fitting,  each centre transmits its local estimate,  sensitivity,  and score
variance,  thereby avoiding per–centre matrix inversions; by contrast,  IVW typically requires
covariance estimates,  whose assembly and inversion are more expensive and brittle on
massive shards. At the aggregator,  both reduce to accumulating one $p\times p$ matrix and
one $p$–vector and solving a single $p\times p$ system,  so central cost is essentially identical.
These features make AEE preferable for online analytical processing and other low–latency
decision systems: point estimation is inexpensive to obtain (only the local estimators and
the sensitivity matrices are required),  and the asymptotic variance is the same as that of the
pooled–data estimator,  see for example \cite[Theorem~5.2]{lin2011aggregated} for generalized linear models maximum likelihood estimator and diverging number of centres and equations \eqref{eq::CANAEE} below in general case for fixed $K$.  Define
\( 
V = \sum_{k=1}^K V_k, \,
Q = \sum_{k=1}^K Q_k, \,
W = V^{-1}Q\, V^{-1}.
\) 
Then,  under assumptions \textbf{(A1)}--\textbf{(A5)} and the null hypothesis of homogeneity 
\(
H_0:\quad \theta_{0, 1} = \theta_{0, 2} = \cdots = \theta_{0, K} = \theta_0, 
\)
we have :
\begin{equation}
\label{eq::CANAEE}
   \sqrt{n}\, \bigl(\widehat{\theta}_n - \theta_0\bigr)
  \;\Rightarrow\;
\mathcal{N}(0, \, W). 
\end{equation}
 
Although violating the homogeneity assumption does not,  in itself,  invalidate the   aggregated estimators,  it may lead to misleading statistical conclusions, particularly when the centres correspond to meaningful subpopulations (e.g.,  administrative units,  hospitals,  or geographic regions). Consider the case of two centres with parameters \(\theta_1\) and \(\theta_2\),  and suppose that the population target is defined as the simple average \(\theta_0 = \tfrac{1}{2}(\theta_1 + \theta_2)\). If the centres are homogeneous,  i.e.,  \(\theta_1 = \theta_2\),  then \(\theta_0 = \theta_1 = \theta_2\),  and aggregation is both valid and interpretable. However,  in the heterogeneous case where \(\theta_2 = -\theta_1 \neq 0\),  the aggregate \(\theta_0 = 0\) is   valid,  yet it masks the presence of two subpopulations with opposing effects. In a regression context,  such cancellation may result in a non-significant coefficient at the population level,  even though each centre exhibits a significant effect in opposite directions. This example illustrates the critical need to assess homogeneity before aggregation: identifying such structural differences is essential for sound statistical inference and informed decision-making. The homogeneity test we propose thus serves as a safeguard to justify aggregation under cluster-level concordance,  while the associated clustering procedure offers a practical tool for uncovering and interpreting heterogeneous structure when centre-specific effects differ.

\paragraph*{Paper structure.}
The remainder of the paper is organized as follows.
Section~2 develops our core methodological contributions: (i) a multivariate Cochran-type homogeneity test; (ii) a new \emph{centre-integration} test; and (iii) the \emph{Clusters-of-Centres} algorithm (\textsc{CoC}-algorithm) (iv) the  control of type-I and type-II errors.
Section~3 is devoted to numerical experiments  
Section~4 presents an application to real data.
Concluding remarks are given in Section~5,  and    the proofs of the main results  and  several applications to some well-known models are collected in the supplementary materials.

\section{On the Homogeneity Across Data centres}
\label{sec:cochran-test}

\subsection{Cochran-Type Tests for   Homogeneity}
\label{subsec:cochran-test}

Define the $Kp\times Kp$ block matrix $H$ blockwise (with each block of size $p\times p$) by
\[
H_{k, j} =
\begin{cases}
I_p,  & \text{if } j\neq k, \\[1mm]
I_p \;-\;\Bigl(\sum_{i=1}^K V_i\Bigr)\, V_k^{-1},  & \text{if } j=k.
\end{cases}
\]

Next,  let $\overline{Q}$ be the $(Kp)\times(Kp)$ block-diagonal covariance matrix with diagonal blocks $Q_1, \dots, Q_K$,  and let $\overline{V}$ be the $(Kp)\times(Kp)$ block-diagonal matrix whose each diagonal block is 
\(
\sum_{k=1}^K V_k.
\)
Finally,  consider the statistic
\[
T_n
\;=\;
\begin{pmatrix}
(\widehat\theta_n - \widehat{\theta}_{n, 1})^\top,  &
(\widehat\theta_n - \widehat{\theta}_{n, 2})^\top,  &
\ldots,  &
(\widehat\theta_n - \widehat{\theta}_{n, K})^\top
\end{pmatrix}^\top.
\]

\begin{lem}
\label{th::testHomogeneity}
Consider the centre-specific asymptotic decomposition \eqref{eq::asymptoticDecomposition} and suppose that the assumptions   \textbf{(A1)} to \textbf{(A4)} hold.  Then,  under the null hypothesis
\(
H_0 : \theta_{0, 1} = \theta_{0, 2} = \cdots = \theta_{0, K} = \theta_0, 
\)
we have
\[
n\, T_n^\top\, \overline{V}^\top\, \overline{V}\, T_n
 \Rightarrow
\sum_{\ell=1}^{Kp} \lambda_\ell\, \chi_\ell^2, 
\]
where \(\{\lambda_\ell\}_{\ell=1}^{Kp}\) are the nonnegative eigenvalues of 
\(\overline{Q}^{1/2}\, H^\top H\, \overline{Q}^{1/2}\) 
and \(\{\chi_\ell^2\}_{\ell=1}^{Kp}\) are independent $\chi^2_1$ random variables. 
\end{lem}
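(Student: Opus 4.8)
The plan is to reduce the quadratic form to the squared Euclidean norm of an asymptotically Gaussian vector and then read off the $\chi^2$-mixture from the spectrum of its covariance. First I would record the linear (Bahadur) representation under $H_0$. Setting $\theta_{0,k}=\theta_0$ in \eqref{eq::asymptoticDecompositionHeterogeneous} and weighting by $V_k$, assumptions \textbf{(A1)} and \textbf{(A3)} give
\[
\sqrt{n}\,\bigl(\check\theta_n-\theta_0\bigr)=V^{-1}\sum_{k=1}^{K}\bigl(U_{n,k}+V_k\varepsilon_{n,k}\bigr)=V^{-1}\sum_{k=1}^{K}U_{n,k}+o_p(1),
\]
with $V=\sum_k V_k$. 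Invoking the stated asymptotic equivalence $\sqrt n(\widehat\theta_n-\check\theta_n)=o_p(1)$ transfers this to the feasible estimator $\widehat\theta_n$. Writing $U_n=(U_{n,1}^\top,\dots,U_{n,K}^\top)^\top$, the $k$th block of $\sqrt n\,\overline V\,T_n$ is $V\sqrt n(\widehat\theta_n-\widehat\theta_{n,k})=\sum_j U_{n,j}-V V_k^{-1}U_{n,k}+o_p(1)$; comparing this with $\sum_j H_{k,j}U_{n,j}=\sum_{j\ne k}U_{n,j}+(I_p-VV_k^{-1})U_{n,k}$ shows that the definition of $H$ is engineered so that, blockwise,
\[
\sqrt{n}\,\overline V\,T_n=H\,U_n+o_p(1).
\]
Because $\overline V$ and $H$ are fixed matrices, the remainder stays $o_p(1)$ after multiplication.

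Next I would establish the joint limit of $U_n$. By \textbf{(A2)} each $U_{n,k}\Rightarrow\mathcal N(0,Q_k)$, and by \textbf{(A4)} the $U_{n,k}$ are independent across $k$; factorizing characteristic functions shows marginal convergence plus independence yields joint convergence $U_n\Rightarrow\mathcal N(0,\overline Q)$. Then the continuous mapping theorem gives $H U_n\Rightarrow\mathcal N(0,H\overline Q H^\top)$, and Slutsky's lemma together with the representation above yields $\sqrt n\,\overline V\,T_n\Rightarrow G$ with $G\sim\mathcal N(0,H\overline Q H^\top)$.

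Finally I would pass to the quadratic form. Since $n\,T_n^\top\overline V^\top\overline V\,T_n=\bigl\|\sqrt n\,\overline V\,T_n\bigr\|^2$, the continuous mapping theorem gives convergence to $\|G\|^2=G^\top G$. Diagonalizing the covariance as $H\overline Q H^\top=P\Lambda P^\top$ with $P$ orthogonal and writing $G=(H\overline Q H^\top)^{1/2}Z_0$, $Z_0\sim\mathcal N(0,I_{Kp})$, one obtains $G^\top G=\sum_{\ell=1}^{Kp}\lambda_\ell\,\chi_\ell^2$ with independent $\chi_\ell^2\sim\chi_1^2$, the $\lambda_\ell$ being the eigenvalues of $H\overline Q H^\top$. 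To match the stated form I would invoke the identity that, for the square matrix $B:=H\overline Q^{1/2}$, the matrices $BB^\top=H\overline Q H^\top$ and $B^\top B=\overline Q^{1/2}H^\top H\,\overline Q^{1/2}$ share the same (necessarily nonnegative) eigenvalues with multiplicities; this replaces $H\overline Q H^\top$ by $\overline Q^{1/2}H^\top H\,\overline Q^{1/2}$ in the weights and completes the proof.

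The main obstacle I anticipate is the first step rather than the distributional bookkeeping: rigorously justifying the feasible-versus-infeasible equivalence $\sqrt n(\widehat\theta_n-\check\theta_n)=o_p(1)$, which is where consistency of the plug-in sensitivity estimators $\widehat V_k$ and the boundedness of $\sqrt n(\widehat\theta_{n,k}-\theta_0)=O_p(1)$ must be combined to control the perturbation of $V^{-1}$. Once that is in hand, the algebraic verification $\sqrt n\,\overline V\,T_n=HU_n+o_p(1)$ and the $BB^\top$ versus $B^\top B$ spectral identity are routine.
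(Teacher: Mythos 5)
Your proposal is correct and follows essentially the same route as the paper's proof: Bahadur expansion of the aggregated estimator, feasible--infeasible equivalence via $\widehat V_{n,k}-V_k=o_p(1)$ and $\sqrt n(\widehat\theta_{n,k}-\theta_0)=O_p(1)$, the stacked identity $\sqrt n\,\overline V\,T_n=H\overline U_n+o_p(1)$, joint normality from \textbf{(A2)}/\textbf{(A4)}, and continuous mapping to the $\chi^2$-mixture. The only cosmetic difference is that you diagonalize $H\overline Q H^\top$ and invoke the $BB^\top$ versus $B^\top B$ spectral identity, whereas the paper writes the limit directly as $Z^\top\overline Q^{1/2}H^\top H\,\overline Q^{1/2}Z$ --- an equivalent bookkeeping step.
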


Define
\[
\widehat{\overline Q}_n \;=\; \operatorname{diag}\!\bigl(\widehat Q_{n,1}, \dots, \widehat Q_{n,K}\bigr), \quad
\widehat{\overline V}_n \;=\; \operatorname{diag}\!\Bigl(\sum_{k=1}^K \widehat V_{n, k}, \dots, \sum_{k=1}^K \widehat V_{n, k}\Bigr), 
\]
and construct \(\widehat H_n\) by substituting \(\widehat V_{n, k}\) for \(V_k\) in the definition of \(H\).  The next corollary shows that replacing the unknown quantities by these plug-in estimates preserves the asymptotic size of the test.

\begin{prop}
\label{cor::testHomogemeity}
Consider the centre-specific asymptotic Bahadur representation ~\eqref{eq::asymptoticDecomposition} and suppose that the
assumptions (A1) to (A5) hold. Then,  the null hypothesis $H_0 : \theta_{0, 1} = \cdots = \theta_{0, K} = \theta_0$ is rejected at
asymptotic level~$\alpha$ whenever
\[
n\,  T_n^{\top} \,  \widehat{\overline V}_n^{\top} \,  \widehat{\overline V}_n \,  T_n \; > \; \widehat{q}_{1-\alpha, n}
 \quad \text{where} \quad 
\mathbb{P} \left( \sum_{\ell=1}^{Kp} \widehat{\lambda}_{\ell, n} \,  \chi^2_{\ell} \; \ge \; \widehat{q}_{1-\alpha, n} \right) = \alpha
\]
  where \(\{\widehat \lambda_\ell\}_{\ell=1}^{Kp}\) are the nonnegative eigenvalues of  
\(\widehat {\overline{Q}}^{1/2}_n\, \widehat H^\top_n \widehat H_n\, \widehat {\overline{Q}}^{1/2}_n\) 
and \(\{\chi_\ell^2\}_{\ell=1}^{Kp}\) are independent $\chi^2_1$ random variables. 
\end{prop}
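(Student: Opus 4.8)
The plan is to reduce Proposition~\ref{cor::testHomogemeity} to Lemma~\ref{th::testHomogeneity} by showing that neither the plug-in normalisation $\widehat{\overline V}_n$ inside the quadratic form nor the data-driven critical value $\widehat q_{1-\alpha,n}$ disturbs the limiting experiment. Concretely, under $H_0$ I must establish two facts: (i) the feasible statistic $n\,T_n^\top\widehat{\overline V}_n^\top\widehat{\overline V}_n T_n$ has the same weak limit $L:=\sum_{\ell=1}^{Kp}\lambda_\ell\chi_\ell^2$ as its infeasible counterpart from Lemma~\ref{th::testHomogeneity}; and (ii) $\widehat q_{1-\alpha,n}\xrightarrow{p}q_{1-\alpha}$, the $(1-\alpha)$-quantile of $L$. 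Granting these, a final Slutsky step delivers asymptotic level $\alpha$.

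For (i), note first that $\overline V$ is block-diagonal with each block equal to the positive-definite matrix $\sum_k V_k$ (a sum of the positive-definite $V_k$ from (A1)), so $\overline V^\top\overline V$ is positive definite; bounding the quadratic form below by its smallest eigenvalue times $n\|T_n\|^2$ and using Lemma~\ref{th::testHomogeneity} forces $n\|T_n\|^2=O_p(1)$, i.e.\ $S_n:=\sqrt n\,T_n=O_p(1)$. By (A5) and continuity of matrix products, $\widehat{\overline V}_n^\top\widehat{\overline V}_n-\overline V^\top\overline V=o_p(1)$ in operator norm, whence
\[
n\,T_n^\top\widehat{\overline V}_n^\top\widehat{\overline V}_n T_n
= n\,T_n^\top\overline V^\top\overline V T_n
+ S_n^\top\bigl(\widehat{\overline V}_n^\top\widehat{\overline V}_n-\overline V^\top\overline V\bigr)S_n
= n\,T_n^\top\overline V^\top\overline V T_n + o_p(1),
\]
since the remainder is bounded by $\|S_n\|^2\,o_p(1)=o_p(1)$. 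Slutsky's theorem and Lemma~\ref{th::testHomogeneity} then give convergence to $L$.

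For (ii), I would write $\widehat q_{1-\alpha,n}=q_{1-\alpha}(\widehat\lambda_n)$, where $q_{1-\alpha}(\cdot)$ sends a non-negative coefficient vector to the $(1-\alpha)$-quantile of $\sum_\ell \lambda_\ell\chi_\ell^2$ and $\widehat\lambda_n$ collects the sorted eigenvalues of $\widehat{\overline Q}_n^{1/2}\widehat H_n^\top\widehat H_n\widehat{\overline Q}_n^{1/2}$. By (A5), $\widehat{\overline Q}_n\xrightarrow{p}\overline Q$ and $\widehat H_n\xrightarrow{p}H$; continuity of the PSD square root and of matrix products yields $\widehat{\overline Q}_n^{1/2}\widehat H_n^\top\widehat H_n\widehat{\overline Q}_n^{1/2}\xrightarrow{p}\overline Q^{1/2}H^\top H\overline Q^{1/2}$, and since the sorted eigenvalues of a symmetric matrix are continuous in its entries (Weyl), $\widehat\lambda_n\xrightarrow{p}\lambda$. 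The crux is the continuity of $q_{1-\alpha}(\cdot)$ at $\lambda$: because $\overline V$ is nonsingular and $L$ is nondegenerate (at least one $\lambda_\ell>0$, a consequence of $Q_k\neq0$ and $K\ge2$), $L$ is absolutely continuous with distribution function $F$ that is continuous and strictly increasing at $q_{1-\alpha}$; since $\lambda^{(m)}\to\lambda$ implies $\sum_\ell\lambda^{(m)}_\ell\chi_\ell^2\Rightarrow L$ by continuous mapping, the standard quantile-convergence lemma gives continuity of $q_{1-\alpha}(\cdot)$ at $\lambda$ (the law being permutation-invariant in the $\lambda_\ell$, sorting is harmless). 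The continuous mapping theorem for convergence in probability then gives $\widehat q_{1-\alpha,n}=q_{1-\alpha}(\widehat\lambda_n)\xrightarrow{p}q_{1-\alpha}$.

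Finally, combining (i) and (ii) via Slutsky gives joint convergence to $(L,q_{1-\alpha})$ with a constant second coordinate, so
\[
\mathbb{P}\bigl(n\,T_n^\top\widehat{\overline V}_n^\top\widehat{\overline V}_n T_n>\widehat q_{1-\alpha,n}\bigr)
\longrightarrow \mathbb{P}(L>q_{1-\alpha})=\alpha,
\]
using continuity of $F$ at $q_{1-\alpha}$ so that the boundary is negligible and $\mathbb{P}(L\le q_{1-\alpha})=1-\alpha$ holds exactly. I expect the main obstacle to be the quantile-continuity argument in (ii): one must rule out degeneracy of $L$, verify strict monotonicity of $F$ at the quantile, and then pass carefully from the deterministic quantile-continuity statement to the random eigenvalues $\widehat\lambda_n$ through the continuous mapping theorem. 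The statistic-perturbation step (i) is routine once the tightness $S_n=O_p(1)$ is in hand.
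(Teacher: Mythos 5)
Your proposal is correct and follows essentially the same route as the paper's proof: (i) replace $\widehat{\overline V}_n$ by $\overline V$ using \textbf{(A5)} together with tightness of $\sqrt n\,T_n$ to get asymptotic equivalence with the statistic of Lemma~\ref{th::testHomogeneity}, (ii) use continuity of eigenvalues plus nondegeneracy of the $\chi^2$-mixture ($\lambda_1>0$, since $H\overline Q^{1/2}\neq 0$) to get $\widehat q_{1-\alpha,n}\xrightarrow{p}q_{1-\alpha}$, and (iii) conclude by Slutsky. The only cosmetic differences are that you derive tightness of $\sqrt n\,T_n$ from the quadratic form and positive definiteness of $\overline V^\top\overline V$ rather than quoting its asymptotic normality, and you package the quantile step as continuity of the quantile functional where the paper invokes uniform convergence of distribution functions (van der Vaart, Lemma 2.11) followed by quantile convergence; both arguments are equivalent in substance.
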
\subsubsection{Local alternatives analysis}
\paragraph*{Local (Pitman) alternatives.}
Consider a sequence of contiguous alternatives that deviates
from the null at the usual $n^{-1/2}$ rate:
\begin{equation}
    \label{eq::alt_local}
\exists\,  k \in \{1,  \dots,  K\} : \quad \theta_{0, k} - \theta_0 = \frac{\Delta_k}{\sqrt{n}},  \quad \Delta_k \neq 0.
\end{equation}
Define
$$
\overline W_k = \sum_{j=1}^K V_j \Delta_j - V \Delta_k \in \mathbb{R}^p, 
$$
and the $(Kp)$-vector
\(
\overline W = \big(\overline W_1^{\top},  \dots,  \overline W_K^{\top} \big)^{\top}.
\)
Write the spectral decomposition 
$$
H \overline Q H^{\top} = O \,  \mathrm{diag}(\lambda_1,  \dots,  \lambda_{Kp}) \,  O^{\top}.
$$
The following lemma gives the power of the statistical test of Proposition~1 under
the local Pitman alternatives~\eqref{eq::alt_local}.

\begin{lem}
\label{lem::local_alt_power}
    For the centre-specific decomposition \eqref{eq::asymptoticDecomposition},  suppose that the assumptions (A1) to (A5)
hold. Under the local alternatives \eqref{eq::alt_local},  the   statistic 
\( n\,  T_n^{\top} \widehat{\overline{V}}_n^{\top} \widehat{\overline{V}}_n\,  T_n \) converges in law to  
$$
(H \overline Q^{1/2} Z + \overline W)^{\top} (H \overline Q^{1/2} Z + \overline W)
\quad \text{where} \quad 
Z\overset{d}{=}  \mathcal{N}(0,  I_{Kp}), $$  and  
\[
(H \overline Q^{1/2} Z + \overline W)^{\top} (H \overline Q^{1/2} Z + \overline W) 
\ \overset{d}{=} \
\sum_{j:\, \lambda_j > 0} \lambda_j \,  \chi^2_{j}\!\left(\frac{\delta_j^2}{\lambda_j}\right) 
+ \sum_{j:\, \lambda_j = 0} \delta_j^2, 
\]
where \( (\delta_1,  \dots,  \delta_{Kp})^{\top} := O^{\top} \overline W \).  
For \( j \) with \( \lambda_j > 0 \),  the variables 
\( \chi^2_{j}(\delta_j^2/\lambda_j) \) are independent non-central \(\chi^2_1\) 
with non-centrality parameter \(\delta_j^2/\lambda_j\).
\end{lem}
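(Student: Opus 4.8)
The plan is to reduce the statistic to a Gaussian quadratic form by a first–order expansion of $\sqrt n\,\overline V T_n$, exactly as in the proof of Lemma~\ref{th::testHomogeneity}, the only new feature being a deterministic drift generated by the local shifts $\theta_{0,k}-\theta_0=\Delta_k/\sqrt n$. First I would expand $\sqrt n\,\overline V T_n$ block by block. Writing the stacked score $\mathcal U_n:=(U_{n,1}^\top,\dots,U_{n,K}^\top)^\top$ and combining \eqref{eq::asymptoticDecompositionHeterogeneous} with $\theta_{0,k}=\theta_0+\Delta_k/\sqrt n$ gives $\sqrt n(\widehat\theta_{n,k}-\theta_0)=V_k^{-1}U_{n,k}+\Delta_k+o_p(1)$ by (A3). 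Aggregating with the feasible AEE weights and invoking $\sqrt n(\widehat\theta_n-\check\theta_n)=o_p(1)$ — which still holds under the local alternative because $\sqrt n(\widehat\theta_{n,k}-\theta_0)=O_p(1)$ and $\widehat V_{n,k}-V_k=o_p(1)$ — yields $\sqrt n(\check\theta_n-\theta_0)=V^{-1}\sum_j U_{n,j}+V^{-1}\sum_j V_j\Delta_j+o_p(1)$ with $V=\sum_i V_i$. Subtracting and premultiplying the $k$th block by $V$ then gives, directly from the definitions of $H$ and $W_k$,
\[
\sqrt n\,(\overline V T_n)_k=\Big(\sum_j U_{n,j}-VV_k^{-1}U_{n,k}\Big)+\Big(\sum_j V_j\Delta_j-V\Delta_k\Big)+o_p(1)=(H\mathcal U_n)_k+W_k+o_p(1),
\]
so that $\sqrt n\,\overline V T_n=H\mathcal U_n+W+o_p(1)$.

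Second, assumptions (A2) and (A4) give $\mathcal U_n\Rightarrow\mathcal N(0,\overline Q)$, i.e.\ $\mathcal U_n$ has the limiting law of $\overline Q^{1/2}Z$ with $Z\sim\mathcal N_{Kp}(0,I_{Kp})$; since $H$ is a fixed linear map and $W$ is deterministic, the continuous mapping theorem and Slutsky's lemma yield $\sqrt n\,\overline V T_n\Rightarrow H\overline Q^{1/2}Z+W$. Replacing $\overline V$ by the consistent plug-in $\widehat{\overline V}_n$ (granting $\widehat{\overline V}_n\xrightarrow{p}\overline V$, as in Proposition~\ref{cor::testHomogemeity}) costs only an $o_p(1)$ term, because $\sqrt n\,\overline V T_n=O_p(1)$ and $\widehat{\overline V}_n\overline V^{-1}\xrightarrow{p}I_{Kp}$, so $\sqrt n\,\widehat{\overline V}_n T_n$ has the same weak limit. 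Applying the continuous map $x\mapsto x^\top x$ then delivers the first assertion,
\[
n\,T_n^\top\,\widehat{\overline V}_n^\top\widehat{\overline V}_n\,T_n\;\Rightarrow\;(H\overline Q^{1/2}Z+W)^\top(H\overline Q^{1/2}Z+W).
\]

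Third, I would diagonalize the Gaussian quadratic form. The limit $G:=H\overline Q^{1/2}Z+W$ is $\mathcal N\!\big(W,\,H\overline Q H^\top\big)$, since $\operatorname{Cov}(H\overline Q^{1/2}Z)=H\overline Q^{1/2}\overline Q^{1/2}H^\top=H\overline Q H^\top$; note the eigenvalues here coincide with the $\lambda_\ell$ of Lemma~\ref{th::testHomogeneity}, as $H\overline Q H^\top$ and $\overline Q^{1/2}H^\top H\overline Q^{1/2}$ share their full spectrum. Using the spectral decomposition $H\overline Q H^\top=O\,\mathrm{diag}(\lambda_1,\dots,\lambda_{Kp})\,O^\top$, the rotated vector $Y:=O^\top G$ is $\mathcal N\!\big(\delta,\mathrm{diag}(\lambda_1,\dots,\lambda_{Kp})\big)$ with $\delta=O^\top W$, and orthogonality gives $G^\top G=Y^\top Y=\sum_j Y_j^2$ with independent coordinates $Y_j$. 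For $\lambda_j>0$, $Y_j/\sqrt{\lambda_j}\sim\mathcal N(\delta_j/\sqrt{\lambda_j},1)$, so $Y_j^2=\lambda_j\,\chi^2_j(\delta_j^2/\lambda_j)$; for $\lambda_j=0$ the coordinate is degenerate at its mean, so $Y_j^2=\delta_j^2$ deterministically. Summing the two groups reproduces the stated representation, and the non-central $\chi^2_1$ terms are independent because the $Y_j$ are.

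The genuinely new bookkeeping relative to Lemma~\ref{th::testHomogeneity} is the deterministic drift $W$, so the main obstacle is Step~1: checking that the local shift $\Delta_k/\sqrt n$ contributes \emph{exactly} $W_k=\sum_j V_j\Delta_j-V\Delta_k$ after aggregation, while the remainders $\varepsilon_{n,k}$ and the AEE/plug-in approximation errors remain $o_p(1)$ under the alternative rather than under $H_0$ — in particular re-verifying the $\sqrt n(\widehat\theta_n-\check\theta_n)=o_p(1)$ equivalence with the $\Delta_k$ present. The spectral step is routine once the degenerate directions $\lambda_j=0$ are correctly isolated, where the Gaussian fluctuation vanishes and only the squared mean $\delta_j^2$ survives; this is precisely what produces the second, deterministic sum in the representation.
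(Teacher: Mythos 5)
Your proposal is correct and follows essentially the same route as the paper's proof: the same block-wise expansion yielding $\sqrt n\,\overline V T_n = H\overline U_n + W + o_p(1)$ with the drift $W_k=\sum_j V_j\Delta_j - V\Delta_k$, the same continuous-mapping step, and the same spectral diagonalization of $H\overline Q H^\top$ splitting the quadratic form into noncentral $\chi^2_1$ terms for $\lambda_j>0$ and deterministic contributions $\delta_j^2$ for $\lambda_j=0$. Your explicit re-verification of $\sqrt n(\widehat\theta_n-\check\theta_n)=o_p(1)$ under the local alternative is a point the paper's proof absorbs silently into its $o_P(1)$ terms, but the arguments are otherwise the same.
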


 \subsection{Cochran-Type Tests for the Fusion of two sets of  Local Estimators}
We now assume that we have   two sets,  $S_1$ and $S_2$ of   homogeneous centres,  i.e.,   
\begin{equation}
\label{eq::homogeneousClusters}
   \forall k \in S_1,  \theta_{0, k} = \theta_0^1 \; \text{ and } \quad \forall k \in S_2,  \theta_{0, k} = \theta_0^2.  
\end{equation}
for some $\theta_0^1,  \theta_0^2 \in \R^p$. Now,  we consider the null hypothesis $\tilde H_0 : \theta_0^1 = \theta_0^2.$
For these sets of clusters,  denote $\widehat\theta_n^1$ and $\widehat\theta_n^2$ their aggregated estimators.
 
The AEE estimator combining $\widehat\theta_n^1$ and $\widehat\theta_n^2$ is
\begin{equation}
\label{eq::aeeIntegration}
\widehat \theta_n^{1, 2}
\;=\;
\Bigl(\sum_{k \in S_1 \cup S_2}  \widehat V_k\Bigr)^{-1}
\Bigl[\bigl(\sum_{k\in S_1} \widehat V_k\bigr)\, \widehat\theta_n^1
  \;+\;
  \bigl(\sum_{k\in S_2} \widehat V_k\bigr)\, \widehat\theta_n^2
\Bigr].
\end{equation}

We consider the statistic
\[
\tilde T_n
\;=\;
\begin{pmatrix}
(\widehat \theta_n^{1, 2} - \widehat\theta_n^1)^\top,  &
(\widehat \theta_n^{1, 2} - \widehat\theta_n^2)^\top
\end{pmatrix}^\top.
\]
This test reduces to the previous one with two clusters.  In this special case,  $H$ becomes
\[
\tilde H
=
\begin{pmatrix}
I_p - \bigl(\sum_{k \in S_1 \cup S_2} V_k\bigr)\bigl(\sum_{k \in S_1} V_k\bigr)^{-1}
  & I_p\\[2mm]
I_p
  & I_p - \bigl(\sum_{k \in S_1 \cup S_2} V_k\bigr)\bigl(\sum_{k \in S_2} V_k\bigr)^{-1}
\end{pmatrix}.
\]
Let $\tilde Q$ be the $(2p)\times(2p)$ block‐diagonal covariance matrix with diagonal blocks $\sum_{k\in S_1} Q_k$ and $\sum_{k\in S_2} Q_k$,  and let $\tilde V$ be the $(2p)\times(2p)$ block‐diagonal matrix whose each diagonal block equals 
\(
\sum_{k \in S_1 \cup S_2} V_k.
\)
Also,  let $\widehat{\tilde Q}_n,  \; \widehat{\tilde V}_n$ and $\widehat{\tilde H}_n$ be defined by replacing $V_k$ and $Q_k$ by with their consistent estimators,  as specified by (\textbf{A5}).

\begin{prop}
\label{prop::integration}
 Consider the centre-specific decomposition~\eqref{eq::asymptoticDecomposition} and suppose that the assumptions
(A1) to (A5) and the equality ~\eqref{eq::homogeneousClusters} hold.  
Then,  under~$\widetilde{H}_0$, 
 \[
n\, \tilde T_n^\top\, \tilde V^\top\, \tilde V\, \tilde T_n
\Rightarrow
\sum_{\ell=1}^{2p}\lambda_\ell\, \chi_\ell^2, 
\]
where \(\{\lambda_\ell\}_{\ell=1}^{2p}\) are the nonnegative eigenvalues of 
\(\tilde Q^{1/2}\, \tilde H^\top\tilde H\, \tilde Q^{1/2}\)
and \(\{\chi_\ell^2\}_{\ell=1}^{2p}\) are independent $\chi^2_1$ random variables.  Consequently,  the homogeneity hypothesis $\tilde H_0$ is rejected at asymptotic level~$\alpha$ when
\[
n\, \tilde T_n^\top\, \widehat {\tilde V}_n^\top\, \widehat {\tilde V}_n\, \tilde T_n
\;>\;
\widehat q_{1-\alpha, n}, 
\text{ and  } \;
\mathbb{P}\Bigl(\sum_{\ell=1}^{2p}\widehat\lambda_{\ell, n}\, \chi_\ell^2\ge \widehat q_{1-\alpha, n}\Bigr)
=\alpha
\]
with   \(\{\widehat \lambda_{\ell, n}\}_{\ell=1}^{2p}\) are the nonnegative eigenvalues of 
\( {\widehat{\tilde Q}_n}^{1/2}\, 
      \widehat {\tilde{H}}_n^{{\!\top}}\widehat {\tilde H}_n\, 
      {\widehat{\tilde Q}_n}^{1/2}\).
\end{prop}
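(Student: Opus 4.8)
The plan is to recognize Proposition~\ref{prop::integration} as a direct specialization of Lemma~\ref{th::testHomogeneity} to the two-cluster setting, after first establishing the appropriate Bahadur-type decomposition for the two aggregated estimators $\widehat\theta_n^1$ and $\widehat\theta_n^2$. First I would note that, under the hypothesis \eqref{eq::homogeneousClusters} that each $S_i$ is internally homogeneous, the convergence \eqref{eq::CANAEE} applied separately within each cluster gives $\sqrt n(\widehat\theta_n^i - \theta_0^i)\Rightarrow\mathcal N(0,\cdot)$, with the relevant linear representation $\sqrt n(\widehat\theta_n^i-\theta_0^i)=\bigl(\sum_{k\in S_i}V_k\bigr)^{-1}\sum_{k\in S_i}U_{n,k}+o_p(1)$. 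Crucially, the two score sums $\sum_{k\in S_1}U_{n,k}$ and $\sum_{k\in S_2}U_{n,k}$ are independent by \textbf{(A4)}, and their respective asymptotic covariances are $\sum_{k\in S_1}Q_k$ and $\sum_{k\in S_2}Q_k$, which are exactly the diagonal blocks of $\tilde Q$. This identifies the two-cluster problem with the generic problem of Lemma~\ref{th::testHomogeneity} in which each ``centre'' is itself an already-aggregated block.

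Next I would verify, under $\tilde H_0:\theta_0^1=\theta_0^2=:\theta_0$, that the centred vector $\sqrt n\,\tilde T_n$ admits the representation $\tilde H\,\tilde{\overline Q}^{1/2}Z+o_p(1)$ for $Z\sim\mathcal N_{2p}(0,I_{2p})$, with $\tilde H$ as displayed. The argument mirrors the proof of Lemma~\ref{th::testHomogeneity}: I would expand each component $\widehat\theta_n^{1,2}-\widehat\theta_n^i$ using \eqref{eq::aeeIntegration}, substitute the block Bahadur expansions, and check that the resulting linear map applied to the stacked score vector is precisely $\tilde H$. In particular, the diagonal block $I_p-\bigl(\sum_{k\in S_1\cup S_2}V_k\bigr)\bigl(\sum_{k\in S_i}V_k\bigr)^{-1}$ arises exactly as in the general $H$ when the two groups play the roles of the $K$ centres with $V_k$ replaced by the cluster sensitivities $\sum_{k\in S_i}V_k$. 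The factor $\sqrt n\,\tilde V\,\tilde T_n$ then converges to $\tilde V\,\tilde H\,\tilde{\overline Q}^{1/2}Z$, and a standard diagonalization of the Gaussian quadratic form yields the $\chi^2$-mixture $\sum_{\ell=1}^{2p}\lambda_\ell\chi^2_\ell$ with weights the eigenvalues of $\tilde Q^{1/2}\tilde H^\top\tilde H\,\tilde Q^{1/2}$, using $\tilde V^\top\tilde V$ to absorb the deterministic prefactor into the quadratic form exactly as before.

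Finally, for the feasible rejection rule I would invoke \textbf{(A5)}: the plug-in matrices $\widehat{\tilde V}_n,\widehat{\tilde Q}_n,\widehat{\tilde H}_n$ converge in probability to $\tilde V,\tilde Q,\tilde H$, so Slutsky's theorem upgrades the convergence of the studentized statistic $n\,\tilde T_n^\top\widehat{\tilde V}_n^\top\widehat{\tilde V}_n\tilde T_n$ to the same limiting law, and continuity of the eigenvalue map ensures $\widehat\lambda_{\ell,n}\xrightarrow{p}\lambda_\ell$, hence $\widehat q_{1-\alpha,n}\xrightarrow{p}q_{1-\alpha}$ at continuity points of the limiting distribution function. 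This gives asymptotic level $\alpha$.

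The main obstacle I anticipate is not the algebra of identifying $\tilde H$ but the rigorous passage from the within-cluster Bahadur expansions to a \emph{joint} asymptotic representation with the correct independent block structure: one must be careful that the feasible aggregated estimators $\widehat\theta_n^i$ (built with estimated sensitivities) are asymptotically equivalent to the infeasible $\check\theta_n^i$ uniformly enough that the remainder terms $\varepsilon_{n,k}$ and the estimation error in $\sum_{k\in S_i}\widehat V_{n,k}$ both vanish in the joint limit. Handling this cleanly will require the same $o_p(1)$-equivalence argument referenced in the proof of Lemma~\ref{th::testHomogeneity}, now applied blockwise, together with \textbf{(A5)} to control the plug-in sensitivities.
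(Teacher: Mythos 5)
Your proposal is correct and matches the paper's own treatment: the paper gives no separate proof of Proposition~\ref{prop::integration}, stating instead that ``this test reduces to the previous one with two clusters,'' which is precisely the specialization of Lemma~\ref{th::testHomogeneity} and Proposition~\ref{cor::testHomogemeity} that you carry out, with the blocks $S_1,S_2$ playing the role of centres, cluster sensitivities $\sum_{k\in S_i}V_k$ and variabilities $\sum_{k\in S_i}Q_k$ replacing $V_k$ and $Q_k$, and \textbf{(A5)} plus Slutsky and eigenvalue continuity handling the feasible version. Your blockwise $o_p(1)$-equivalence step for the feasible aggregated estimators is exactly the argument the paper invokes by reference to the proof of Lemma~\ref{th::testHomogeneity}.
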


\subsection*{Extension to unequal centre sample sizes}
The preceding results extend   to settings in which centres contribute different sample sizes. For \(k = 1, \ldots, K\),  let \(n_k\) denote the sample size of centre \(k\) and write \(n_k =   n_k(n) \in \N \),  where  \(n_k/n \rightarrow \rho_k \in (0, 1)\) with   \(\sum_{k=1}^{K} \rho_k = 1\) and \(n = \sum_{k=1}^{K} n_k\) is the pooled sample size. The asymptotic expansion~\eqref{eq::asymptoticDecomposition} becomes
\begin{equation}
    \label{eq::diffLenghtBahadur}
      \sqrt{n}\, \bigl(\widehat{\theta}_{n_k, k} - \theta_{0, k}\bigr)
    = V_k^{-1}\, \rho_k^{-1/2}U_{n_k, k} + \rho_k^{-1/2}\varepsilon_{n_k, k} =:  V_k^{-1}\,  \mathcal U_{n_k, k} + o_p(1) .
\end{equation} 
Accordingly,  assumptions \textbf{(A2)}  and \textbf{(A4)} are replaced by
\begin{description}
    \item[(A2')]    \(\mathcal U_{n_k, k}\Rightarrow\mathcal{N}(0, \, Q_k)\) as \(n\to\infty\),  with \(Q_k \ne 0\) is    positive semi-definite;
    \item[(A4')] the random processes  \( \mathcal U_{n_1, 1}, \, \ldots,  \mathcal U_{n_K, K} \) are independent across centres \(k\) at each $n$. 
\end{description}
Although \(n\) now refers to the pooled sample size,  the asymptotic results remain valid  for   \(\min_k n_k \to \infty\) since each  \( n_k \to \infty\) as a  consequence of \(\rho_k > 0\).

\subsection{Clusters of Centres Algorithm}

We now introduce the Clusters of Centres Algorithm (CoC-algorithm). First,  apply the global homogeneity test of Proposition~\ref{cor::testHomogemeity} to all $K$ centres. If that test does not reject homogeneity at level $\alpha$,  declare a single cluster containing every centre and stop. Otherwise,  build the partition sequentially: initialize the first cluster with centre $1$. For each centre $j=2, \dots, K$,  compute the integration test (Proposition~\ref{prop::integration}) comparing centre $j$ to each existing cluster; collect the clusters whose integration p-value with $j$ is at least $\alpha$. If this collection is nonempty,  choose the one of those clusters with the largest p-value; if it is empty,  create a new singleton cluster containing $j$. The one-shot CoC-algorithm described above is given in Algorithm~\ref{alg:CoC-algorithm_updated} in the supplementary material.

 Denote the  true underlying partition of the centres \(\mathcal P = \{\mathcal P_1,  \ldots,  \mathcal P_L\}\),  \(L \leq K\),    \emph{i.e} the partition of the index set \(\{1, \ldots, K\}\),  such that for all \(q,  \overline{q} \in \{1,  \ldots,  L\}\),  \(q \ne \overline{q}\),  we have:
    \[
    \theta_{0, k} = \theta_{0, j},  \quad k\ne j \in \mathcal P_q, 
    \quad\text{and}\quad
    \theta_{0, k} \ne \theta_{0, i},  \quad k \in \mathcal P_q, \; i \in \mathcal P_{\overline q}.
    \]

\begin{lem}\label{lem::CoC-algorithmOracle}
Assume that $L>1$. Let \(\widehat{\mathcal C}_n\) denote the partition estimated by Algorithm \ref{alg:CoC-algorithm_updated},  consisting of \(L_n\) clusters,  and
let \(\widehat{\ell}_{n, k}\in\{1, \ldots, L_n\}\) be the cluster label assigned to the \(k\)-th centre. Suppose that
the assumptions \((\textbf{A1.})\)-\((\textbf{A2})\) -\((\textbf{A3})\)  and \(\textbf{A4}\)-\(\textbf{A5}\) hold. Then,  for all \(q, \bar q\in\{1, \ldots, L\}\) with \(q\neq \bar q\) :
\begin{enumerate}
 \item If $k$ and  $j\, (k\ne j)$ belong to the same set ${\mathcal P}_q$,  then
$
 \limsup_{n\to\infty}\P\big( \widehat{\ell}_{n, k}=\widehat{\ell}_{n, j} \big) \leq 1-\alpha.
 $
\item If $k$ and $j$ belong to different sets ${\mathcal P}_q$ and ${\mathcal P}_{\bar{q}}$,  then
$
\P\big( \widehat{\ell}_{n, k}\neq\widehat{\ell}_{n, j} \big) \, \mathop{\longrightarrow}\limits_{n\to\infty}\,  1.
$
\end{enumerate}
\end{lem}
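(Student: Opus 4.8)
The plan is to prove the two assertions separately, both anchored to a single high-probability event on which the estimated partition refines the true one. Since $L>1$, the global homogeneity test of Proposition~\ref{cor::testHomogemeity} is applied to a genuinely heterogeneous configuration: the statistic $n\,T_n^\top\widehat{\overline V}_n^\top\widehat{\overline V}_n\,T_n$ diverges (the block differences $\widehat\theta_n-\widehat\theta_{n,k}$ tend to nonzero limits whenever $\theta_{0,k}\ne\theta_0$), while the plug-in critical value stays bounded by \textbf{A5}. Hence the global test rejects with probability tending to one, and off an event of vanishing probability the algorithm enters its sequential phase. The two regimes in the statement then reflect the two behaviours of the sequential integration tests: power tending to one against the fixed between-block alternatives, and asymptotic size $\alpha$ under the within-block null.

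I would first establish the separation claim. Define $G_n$ as the event that, for \emph{every} block-homogeneous cluster $S$ (a subset of a single true block) and every centre $j$ whose true parameter differs from that of $S$, the integration test of $j$ against $S$ rejects. For any such pair, $\widehat\theta_{n,j}\to\theta_{0,j}$ and $\widehat\theta_n^{S}\to\theta_0^{q}$ with $\theta_{0,j}\ne\theta_0^{q}$, so the statistic of Proposition~\ref{prop::integration} diverges while its plug-in critical value remains bounded under \textbf{A5}; thus each such test rejects with probability tending to one. Because $K$ is fixed there are only finitely many such centre–cluster pairs, so a union bound gives $\P(G_n)\to1$. On $G_n$ a short induction on the processing order shows that every cluster produced by the algorithm is contained in a single true block: when centre $j\in\mathcal P_{\bar q}$ is processed, every existing cluster is inductively block-homogeneous, the tests against clusters from other blocks reject, and so $j$ can only join a cluster already inside $\mathcal P_{\bar q}$ or open a new singleton. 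Consequently, on $G_n$ indices from different true blocks receive different labels, and $\P\big(\widehat\ell_{n,k}\ne\widehat\ell_{n,j}\big)\ge\P(G_n)\to1$.

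For the within-block bound I would exploit the structural fact that, in Algorithm~\ref{alg:CoC-algorithm_updated}, clusters only grow and never merge with one another. Hence, taking $k<j$ without loss of generality, $\widehat\ell_{n,k}=\widehat\ell_{n,j}$ holds if and only if $j$ is assigned to the cluster $C_k$ containing $k$ at the moment $j$ is processed, and this forces the bridging integration test of $j$ against $C_k$ not to reject. On $G_n$ we have $C_k\subseteq\mathcal P_q$ and $j\in\mathcal P_q$, so the bridging test is a homogeneous comparison whose plug-in critical value is calibrated to the realized $C_k$ via \textbf{A5}; taking its asymptotic size to be $\alpha$ yields
\[
\limsup_{n\to\infty}\P\big(\widehat\ell_{n,k}=\widehat\ell_{n,j}\big)
\;\le\;
\limsup_{n\to\infty}\P\big(\text{bridging test does not reject}\big)\;\le\;1-\alpha .
\]

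The main obstacle is precisely this last inequality, because $C_k$ is data-dependent and enters the algorithm adaptively: conditioning on the \emph{identity} of $C_k$ can distort the law of its aggregate $\widehat\theta_n^{C_k}$, so the bridging test is not automatically level $\alpha$ once we condition on which cluster $k$ landed in. I would control this using \textbf{A4}: the later centre $j$ is processed only after $C_k$ has been formed, so $\widehat\theta_{n,j}$ is independent of the entire $\sigma$-field generated by centres $1,\dots,j-1$, hence of the event $\{C_k=S\}$. The bridging statistic is asymptotically a quadratic form in $\sqrt n\,(\widehat\theta_{n,j}-\widehat\theta_n^{C_k})$, so the fresh centre contributes its full asymptotic variance irrespective of the conditioning, while the merge decisions defining $C_k$ constrain only the within-block \emph{difference} directions and leave the limiting law of the aggregate $\widehat\theta_n^{C_k}$ essentially intact—exactly so in the balanced-information case $Q_iV_i^{-1}\equiv Q_{i'}V_{i'}^{-1}$, and in general up to a perturbation that I would argue is asymptotically negligible by a continuity argument in the plug-in spectrum. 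Summing over the finitely many admissible $C_k\subseteq\mathcal P_q$ would then deliver the uniform $1-\alpha$ bound, and this conditioning-versus-adaptivity step is the one I expect to require the most care.
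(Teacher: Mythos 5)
Your proposal follows essentially the same route as the paper's proof: separation of heterogeneous pairs via consistency of the integration test against fixed alternatives plus a union bound over finitely many centre--cluster pairs (your induction making block-purity explicit is a cleaner rendering of what the paper leaves implicit), and the within-block bound via the same inclusion, namely that $\widehat{\ell}_{n,k}=\widehat{\ell}_{n,j}$ forces the bridging test of $j$ against the cluster containing $k$ to accept, an event whose probability is then bounded using the asymptotic level $\alpha$ of Proposition~\ref{prop::integration}. The ``conditioning-versus-adaptivity'' obstacle you flag in your final paragraph is genuine, but you should know that the paper's own proof does not resolve it either: it conditions on the (data-dependent) number $d$ of pure $\mathcal P_q$-clusters present when $j$ is inserted, labels the one containing the earlier centre as cluster ``1'', and simply asserts $\P(A_n\mid d)\to 1-\alpha$ ``since the asymptotic level of the test is $\alpha$'', i.e., it applies the unconditional level result to a test whose reference cluster is random. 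Your attempted patch is not itself complete --- conditioning on past acceptance/rejection events does distort the joint law of the cluster aggregate, and outside the balanced-information case $Q_iV_i^{-1}\equiv Q_{i'}V_{i'}^{-1}$ the aggregate deviation and the within-cluster contrasts are correlated in the limit, so ``essentially intact'' would need a real argument --- but the step it tries to fill is precisely the step the paper takes for granted without comment. In short: same approach and same standard of rigor on the delicate step, except that you identified the difficulty explicitly while the paper glossed over it.
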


Lemma \ref{lem::CoC-algorithmOracle} gives a negative result: the probability that Algorithm \ref{alg:CoC-algorithm_updated} recovers the true partition is at most \(1-\alpha\). Indeed,  since the power under non-local alternatives approaches \(1\) and the asymptotic significance level is \(\alpha>0\),  the one-shot CoC algorithm may asymptotically fail to merge homogeneous centres,  separating them into distinct clusters with probability approximately \(\alpha\). However,  the probability of erroneously merging heterogeneous centres tends to zero. To improve on this,  we propose iterating the clustering procedure starting from the initial partition.

That said,  if the first iteration with the same data fails to merge homogeneous clusters,  repeating the aggregation step on the same data will not change the result; the partition stabilises after the first iteration. To circumvent this limitation,  we introduce bootstrap resampling of the summary statistics: only the point estimator of the parameter is recomputed on the bootstrap samples,  while the matrices \(\widehat V_{n, k}\) and \(\widehat Q_{n, k}\) are reused across all bootstrap rounds (see Algorithm~\ref{alg:CoC-algorithm_bootstrap_iterations}). \emph{This choice is communication–efficient and light on computation:} each local centre sends \(\widehat V_{n, k}\) and \(\widehat Q_{n, k}\) once to the coordinating centre and thereafter only the bootstrap point estimates \(\{\widehat\theta_{n, k}^{(r)}\}\); by contrast,  recomputing and transmitting \(\widehat V_{n, k}^{(r)}\) and \(\widehat Q_{n, k}^{(r)}\) each round shares more information and adds extra local work,  with no asymptotic benefit under A5. To keep updates reproducible and limit order effects,  we use a simple tie-breaking rule within each round: whenever multiple candidate merges satisfy \(p\ge\alpha\),  we merge with the cluster attaining the largest \(p\)-value.

Specifically,  the multi-round bootstrap CoC algorithm takes as input \(R\) independent collections of centre-specific summary statistics,  each generated via bootstrap resampling at the individual centres,  any bootstrap scheme that satisfies (\textbf{A6}). In the first round,  the algorithm applies the one-shot CoC procedure to the first summary set to produce an initial clustering. In each subsequent round \(r=2, \dots, R\),  it uses the \(r\)-th summary set to re-evaluate and merge clusters from the initial partition via the aggregation test (Proposition~\ref{prop::integration}) at level \(\alpha\). After all \(R\) rounds,  the algorithm returns the partition from the final iteration. We show that Algorithm~\ref{alg:CoC-algorithm_bootstrap_iterations} enjoys a golden-partition recovery property (see Theorem~\ref{th:CoC-algorithmOracleIterations} below). In particular,  whereas the one-shot CoC algorithm leaves a nonzero probability that a homogeneous pair \((j, k)\) fails to merge,  each bootstrap trial provides a new opportunity for that pair to merge.

%%%%%%%%%%%%%%%%%%%%%%%%%%%%%%%%%%%%%%%%%%%%%%%%%%%%%%%%%%%%%%%%%%%%%%%

\begin{algorithm}[!htb]
\SetAlgoLined
\KwIn{Significance level \(\alpha\); fixed matrices \(\{\widehat V_{n, k}, \, \widehat Q_{n, k}\}_{k=1}^K\);
      \(R\) bootstrap sets of point estimators \(\{\widehat\theta_{n, k}^{(r)}\}_{k=1}^K\) for \(r=1, \dots, R\).}
\KwOut{Final cluster partition \(C^{(R)}\).}

% Round 1: one-shot CoC using the first bootstrap summary set
\(C^{(1)} \leftarrow \mathrm{one\_shot\_CoC}\Bigl(\{\widehat\theta_{n, k}^{(1)}\}_{k=1}^K, \;\{\widehat V_{n, k}\}_{k=1}^K, \;\{\widehat Q_{n, k}\}_{k=1}^K, \;\alpha\Bigr)\)\;

% Subsequent rounds: regroup clusters using new \theta summaries; reuse V, Q
\For{\(r \leftarrow 2\) \KwTo \(R\)}{
  Let \(\{B_1, \dots, B_L\} \leftarrow C^{(r-1)}\) \tcp*{clusters from the previous round}
  Initialize \(C^{(r)} \leftarrow \{B_1\}\)\;
  \For{\(\ell \leftarrow 2\) \KwTo \(L\)}{
    \ForEach{cluster \(D \in C^{(r)}\)}{
      \(p_D \leftarrow \mathrm{integration\_test}\Bigl(D, \, B_\ell \, \big|\,  \{\widehat\theta^{(r)}_{n, k}\}_{k=1}^K, \;\{\widehat V_{n, k}\}_{k=1}^K, \;\{\widehat Q_{n, k}\}_{k=1}^K\Bigr)\) \tcp*{two-cluster aggregation test,  Prop.~\ref{prop::integration}}
    }
    \(\mathcal{E} \leftarrow \{\, D \in C^{(r)} : p_D \ge \alpha\, \}\);\;
    \uIf{\(\mathcal{E} = \varnothing\)}{
      \(C^{(r)} \leftarrow C^{(r)} \cup \{B_\ell\}\) \tcp*{start a new cluster}
    }
    \Else{
      \(D^\star \leftarrow \arg\max_{D \in \mathcal{E}}\,  p_D\)\;
      Merge \(B_\ell\) into \(D^\star\)\;
    }
  }
}
\Return{\(C^{(R)}\)}
\caption{Multi-round bootstrap CoC-algorithm }
\label{alg:CoC-algorithm_bootstrap_iterations}
\end{algorithm}

%%%%%%%%%%%%%%%%%%%%%%%%%%%%%%%%%%%%%%%%%%%%%%%%%%%%%%%%%%%%%%%%%%%%%%%
\subsubsection{Golden-partition recovery property of multi-round bootstrap CoC-algorithm}

Fix centres $k$ and $i$ ($k\neq i$) and set
\(
   T=\{1, \dots , K\}\setminus\{k, i\}, \qquad |T|=K-2 .
\)

\noindent A single “$k$–vs–$i$’’ integration test specifies two disjoint sets
\(
      S_k\supseteq\{k\},  \qquad
      S_i\supseteq \{i\},  \qquad
      S_k\cap S_i=\varnothing .
\)

 For every $h\in T$ there are three mutually exclusive options:
(i) ignore $h$,  (ii) add $h$ to $S_k$,  (iii) add $h$ to $S_i$.
These choices are independent across the $K-2$ labels.
The number of distinct assignments is therefore
$3^{\, |T|}=3^{K-2}$.
Because $k\in S_k$ and $i\in S_i$ are fixed and the sets are disjoint, 
each assignment yields a unique ordered pair $(S_k, S_i)$ and vice versa.
Thus exactly $3^{K-2}$ distinct “$k$–vs–$i$’’ tests exist.

%%%%%%%%%%%%%%%%%%%%%%%%%%%%%%%%%%%%%%%%%%%%%%%%%%%%%%%%%%%%%%%%%%%%%%%
 For a truly heterogeneous pair $(k, i)$ and each test $J$ for $P_q$ vs.~$P_{\bar{q}}$ with $k\in P_q$ and $i\in P_{\bar{q}}$, 
define the non-rejection probability
\begin{displaymath}
\beta_{J, n} \, :=\,  \P\big( \mbox{ homogeneity test $J$ accepts homogeneity of $(k, i)$ } \big)
\end{displaymath}
and its bootstrap analogue
\begin{displaymath}
\beta_{J, n}^*(r) \, :=\,  \P^*\big( \mbox{ bootstrap homogeneity test $J$ at round $r$ accepts homogeneity of $(k, i)$ } \big),  \, r=1, \ldots, R.
\end{displaymath}
$\P^*(A)$ denotes the conditional probability $\P(A| \mathcal Z_n)$ given the original sample ${\mathcal 
Z}_n:=(Z_{i, k}\colon\,  1\leq i\leq n,  1\leq k\leq K)$.\\
Under a non-local (fixed) alternative,  $\beta_{J, n}\xrightarrow{p} 0$ as $n\to\infty$ under the assumption  (\textbf{A6}) below.
For a  homogeneous pair $(k, j)$ and each test $J$ for $P_q$ vs.~$P_{\bar{q}}$ with $k\in P_q$ and $j\in P_{\bar{q}}$, 
define the rejection probability
 $\alpha_{J, n} \, :=\,  \P\big( \mbox{ homogeneity test J rejects homogeneity of (k, j) } \big)$
 and its bootstrap analogue
\begin{displaymath}
\alpha_{J, n}^*(r) \, :=\,  \P^*\big( \mbox{ bootstrap homogeneity test $J$ at round $r$ rejects homogeneity of $(k, j)$ } \big),\,  r=1, \ldots, R.
\end{displaymath}
 The probabilities $\alpha_{J, n}^*(1),\ldots, \alpha_{J, n}^*(R)$ depend in a complex way on the original sample; see the lemma \ref{lem::growth} in the Appendices for details. For the sake of conciseness and to avoid confusion,  we will sometimes write 
$\alpha_{J(k, j), n}$ (resp.\ $\alpha^*_{J(k, j), n}(r)$,  $\beta_{J(k, j), n}$,  $\beta^*_{J(k, j), n}(r)$) 
instead of 
$\alpha_{J, n}$ (resp.\ $\alpha^*_{J, n}(r)$,  $\beta_{J, n}$,  $\beta^*_{J, n}(r)$).

We impose  the  following  assumptions.
\begin{description}
 
\item[(A6)]  {\it Conditional bootstrap CLT.}\;
For the sample $\mathcal Z_n$ and its $r-th$ bootstrap resample     $\mathcal Z_n^{(r)}, $  let $\widehat\theta_{n, k}$ (resp. $\widehat\theta_{n, k}^{(r)}$) the
      centre–specific estimator  computed from $\mathcal Z_n $ (resp. $\mathcal Z_n^{(r)}$)  and 
       assume that, 
       conditionally on $\mathcal Z_n, $  in probability,
       $$\sqrt n(\widehat\theta^{(r)}_{n, k}-\widehat\theta_{n, k})\Rightarrow\mathcal  N(0, V_k^{-1}Q_kV_k^{-1}).$$
      
\end{description}

The assumption \textbf{A6} is mild and ensures the validity of the bootstrap procedure. It is satisfied,  for instance,  by the following  universal resampling scheme (that is independent of the specific estimator) : for each bootstrap replicate \(r\),  each sample size $n$ and centre \(k\),  take $\widehat{\theta}^{(r)}_{n, k}$ as in 
\[
\sqrt{n}\big(\widehat{\theta}^{(r)}_{n, k}-\widehat{\theta}_{n, k}\big)
\ \overset{d}{=}\ 
\mathcal{N}\!\left(0, \ \widehat{V}_{n, k}^{-1}\, \widehat{Q}_{n, k}\, \widehat{V}_{n, k}^{-1}\right).
\]
More generally,  \textbf{A6} is also satisfied by standard bootstrap schemes under classical regularity conditions. For example,   the nonparametric bootstrap is consistent under mild assumptions (see,  e.g.,  \cite[Sec.~10.3 and Sec.~13.2.3]{kosorok2008introduction})
in parametric models for estimating-equation estimators. 
For the weighted (multiplier) bootstrap,  where the weights $\xi_1, \ldots, \xi_n$ are \emph{i.i.d.} copies of a square–integrable random variable $\xi$ with mean $1$ and variance $1$, \textbf{A6} holds for both parametric and semiparametric models. In the latter setting,  one may consider profile estimating-equation estimators with within-centre data assumed \emph{i.i.d.}; see \citep[Theorem~21.7]{kosorok2008introduction}. In addition,  \textbf{A6} holds for classes of M-estimators under the regularity conditions stated by \cite{arcones1992bootstrap} for parametric and semiparametric bootstrap. For instance,  \cite{hahn1995bootstrapping} uses \cite{arcones1992bootstrap} to establish \textbf{A6} for quantile regression under a semiparametric bootstrap for deterministic covariates and nonparametric bootstrap for random covariates.

We are now in position to state our Golden-Partition Recovery property. We will need of the following assumption.

\begin{description}
    \item[(A7)] {\it Growth of the number of rounds of CoC-algorithm.} As $n$ tends to $\infty, $
    $$
    R(n) \rightarrow +\infty, \quad 
     R(n)\max_{S_i,  S_j} \max_{i, j} \beta^*_{J(i, j), n}(1)    = O(1)\text{ a.s }  \text{ and }\quad R(n)\max_{S_i,  S_j} \max_{i, j} \beta^*_{J(i, j), n}(1)  \xrightarrow{p} 0.
    $$
\end{description}
%%%%%%%%%%%%%%%%%%%%%%%%%%%%%%%%%%%%%%%%%%%%%%%%%%%%%%%%%%%%%%%%%%%%%%%
\begin{theo} 
\label{th:CoC-algorithmOracleIterations}
Let   $\widehat{C}_n^{(r)}$ denote the partition produced by Algorithm \ref{alg:CoC-algorithm_bootstrap_iterations}
after $r$ bootstrap iterations.
If \((\textbf{A1})\) to \((\textbf{A7})\)   are fulfilled,  then
$
\P\big( \widehat{C}_n^{(R(n))} \, =\,  \mathcal P \big) \, \mathop{\longrightarrow}\limits_{n\to\infty}\,  1
$
where ${\mathcal P}$ denotes the true underlying partition.
\end{theo}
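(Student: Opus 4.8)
The plan is to decompose the recovery event into two complementary failure modes and bound each separately. Note that $\widehat C_n^{(R(n))}=\mathcal P$ holds if and only if (i) no two centres belonging to distinct true blocks $\mathcal P_q,\mathcal P_{\bar q}$ are ever placed in a common cluster (\emph{correct separation}), and (ii) within each block $\mathcal P_q$ all centres eventually land in a single cluster (\emph{complete coalescence}); it therefore suffices to show that each of the two complementary events has probability tending to $0$. A structural observation drives the whole argument: for every $r\ge 2$, Algorithm~\ref{alg:CoC-algorithm_bootstrap_iterations} treats the clusters of $C^{(r-1)}$ as atoms and only ever merges them, so $C^{(1)}\succeq C^{(2)}\succeq\cdots$ is monotonically coarsening. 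Hence a correct separation, once present, is never undone, and within-block coalescence is likewise irreversible.

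For the separation failure I would argue on the event that every current cluster is \emph{pure} (a subset of a single true block), which holds after round~$1$ with probability tending to $1$ by the one-shot guarantee (Lemma~\ref{lem::CoC-algorithmOracle}). A heterogeneous merge can occur at round~$r$ only if some integration test between two pure clusters lying in different true blocks falsely accepts; conditionally on $\mathcal Z_n$ this has bootstrap probability at most $\max_{J}\beta^*_{J,n}(r)$, and because the resamples are i.i.d.\ across rounds one has $\beta^*_{J,n}(r)=\beta^*_{J,n}(1)$ for all $r$. Since the number of distinct pairs and admissible tests is finite (at most $\binom{K}{2}\,3^{K-2}$), a union bound over the $R(n)$ rounds gives $\P^*(\text{some heterogeneous merge})\le c_K\,R(n)\,\max_{S_i,S_j}\max_{i,j}\beta^*_{J(i,j),n}(1)$, which tends to $0$ in probability by Assumption~\textbf{A7}. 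Taking expectations and invoking bounded convergence (the conditional probability is dominated by $1$) upgrades this to $\P(\text{separation fails})\to 0$.

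For the coalescence failure I would work on the separation/purity event and exploit the conditional independence of the rounds given $\mathcal Z_n$: the partitions form a conditional Markov chain, and once a block $\mathcal P_q$ has coalesced it stays coalesced. The key quantitative input is a uniform lower bound $\gamma_0>0$ on the probability that a given block $\mathcal P_q$ coalesces completely within one round. On the purity event, when the atoms contained in $\mathcal P_q$ are processed in order, the first such atom opens a fresh cluster (all cross-block tests reject), and every subsequent one merges into that growing cluster exactly when its homogeneous integration test accepts; since $\alpha^*_{J,n}(r)\xrightarrow{p}\alpha<1$ by Lemma~\ref{lem::growth} and there are at most $|\mathcal P_q|-1\le K-1$ such tests, the probability that all of them accept is at least $(1-\alpha-\varepsilon)^{K-1}=:\gamma_0>0$ on a high-probability event for $\mathcal Z_n$. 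Chaining this uniform per-round bound across the conditionally independent rounds yields $\P^*(\mathcal P_q\text{ not coalesced by round }R(n))\le(1-\gamma_0)^{R(n)-1}$, and a union bound over the $L$ blocks together with $R(n)\to\infty$ drives the coalescence-failure probability to $0$; a final bounded-convergence step removes the conditioning on $\mathcal Z_n$.

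The main obstacle is the coalescence step, specifically making the uniform per-round bound $\gamma_0$ rigorous in the presence of the sequential, order-dependent, largest-$p$ tie-break merging of Algorithm~\ref{alg:CoC-algorithm_bootstrap_iterations}: one must verify that on the purity event the only candidate clusters entering $\mathcal E$ for a block-$\mathcal P_q$ atom are themselves subsets of $\mathcal P_q$ (so that the argmax selects a correct merge and the ``unique growing cluster'' invariant is preserved), and that the homogeneous acceptance events across the finitely many evolving cluster configurations can be controlled simultaneously by the in-probability limits of Lemma~\ref{lem::growth}. A secondary technical point is the bookkeeping needed to run the coalescence argument on the separation event while converting all conditional bootstrap bounds into unconditional convergence via the tower property and dominated convergence.
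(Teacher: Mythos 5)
Your overall architecture (separation failure plus coalescence failure, monotone coarsening, union bound over rounds, conditional independence of bootstrap replicates given $\mathcal Z_n$, dominated convergence to remove the conditioning) is the same as the paper's, and your separation half is essentially identical to the paper's step (i): a union bound of $\beta^*_{J,n}$ over the finitely many admissible tests and the $R(n)$ rounds, killed by \textbf{A7}. (One small slip there: round $1$ of Algorithm~\ref{alg:CoC-algorithm_bootstrap_iterations} runs on the \emph{first bootstrap replicate}, so purity after round $1$ follows from the bootstrap power statement $\beta^*_{J(i,j),n}(1)\xrightarrow{p}0$ of Lemma~\ref{lem::growth}, not from Lemma~\ref{lem::CoC-algorithmOracle}, which concerns the one-shot algorithm on the original data; your union bound absorbs this if you start it at $r=1$.)

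The genuine gap is in your coalescence step, at exactly the point you flag as ``the main obstacle'': the per-round lower bound $\gamma_0=(1-\alpha-\varepsilon)^{K-1}$ on the probability that a block $\mathcal P_q$ coalesces \emph{completely within one round} multiplies the acceptance probabilities of the up-to-$(K-1)$ homogeneous integration tests as if they were independent. They are not: all tests in round $r$ are computed from the \emph{same} bootstrap replicate $\{\widehat\theta^{(r)}_{n,k}\}$, so the acceptance events are mutually dependent, and a product lower bound on $\P^*(\bigcap_i A_i)$ does not follow from $\P^*(A_i)\ge 1-\alpha-\varepsilon$. The elementary alternative $\P^*(\bigcap_i A_i)\ge 1-(K-1)(\alpha+\varepsilon)$ is vacuous unless $(K-1)\alpha<1$, and rescuing the product bound would need a joint-distribution argument (e.g., that the limiting acceptance regions are symmetric convex sets in one underlying Gaussian, invoking the Gaussian correlation inequality), which you do not supply. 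The paper avoids this issue entirely by weakening the per-round success notion: a round ``succeeds'' if \emph{at least one} merge occurs, and since on the purity event failure of a round implies that one \emph{particular} homogeneous test rejected, the per-round failure probability is bounded by a single rejection probability $\alpha_n^{*,\max}\le\overline\alpha<1$ regardless of within-round dependence; each success reduces the cluster count by one, so only $K-L$ successes among the $R(n)$ conditionally independent rounds are needed, giving a bound of order $\binom{R(n)}{\le K-L}\,\overline\alpha^{\,R(n)-(K-L)}\to 0$. Replacing your single-round full-coalescence event by this ``one merge per round'' counting repairs the argument with no new tools.
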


\paragraph*{Comment on Theorem \ref{th:CoC-algorithmOracleIterations}.}
Theorem \ref{th:CoC-algorithmOracleIterations} is   asymptotic and therefore provides limited guidance on the finite-sample behavior of our procedure. Actually, under the separation condition
\(
\theta_{0,k} = \theta_{0,j}, \quad k \neq j \in \mathcal P_q,
\quad\text{and}\quad
\theta_{0,k} \neq \theta_{0,i}, \quad k \in \mathcal P_q,\; i \in \mathcal P_{\overline q},
\)
we expect that, for sufficiently large $n$, the power of the fusion test is close to $1$, so that two distinct clusters of centres can be consistently distinguished. However, for a finite sample size $n$, the probability of a false merge remains strictly positive. A similar limitation arises when considering the test level: since the finite-sample level   is always strictly smaller than $1$, increasing the number of replications may help fuse two centres that truly belong to the same cluster. Nevertheless, Theorem~\ref{th:CoC-algorithmOracleIterations} does not quantify the probability of such correct fusion in finite samples. For this reason, we provide below a detailed analysis of the type-I and type-II errors in the CoC-algorithm.

\noindent In what follows, the class of sets $\mathcal{A}$ is the collection of Borel subsets of $\mathbb{R}^p$ satisfying
\(
\sup_{B \in \mathcal{A}} \Phi((\partial B)^{\varepsilon}) = O(\varepsilon) \quad \text{as} \quad \varepsilon \downarrow 0.
\)
Here $\Phi$ denotes the normal distribution with mean $0$ and dispersion matrix being the identity matrix. The set $(\partial B)^{\varepsilon}$ is defined as the $\varepsilon$-neighborhood of the boundary of $B$:
\(
(\partial B)^{\varepsilon} = \left\{ x \in \mathbb{R}^p : \inf_{y \in \partial B} \|x - y\| \le \varepsilon \right\}.
\) The set of convex sets of $\R^p$, for instance, satisfies this property.

\subsubsection{Control of the type-I and type-II errors}
We rely on the following deviation and Berry--Esseen-type bounds, rather than on the Bahadur decomposition in~\eqref{eq::asymptoticDecomposition}.

\begin{description}
    \item[(A8)] For each centre $k$, assume that there exist constants $a_k>0$   such that
\begin{subequations}\label{eq::nonAsymptoticConditions}
\begin{align}
\P\!\left(\|\widehat{\theta}_{n,k}-\theta_{0,k}\|
\le a_k\sqrt{\frac{\log n}{n}}\right)
&= 1 - o\!\left(n^{-1/2}\right),
\label{eq::concentration}\\
\sup_{B\in \mathcal A}
\Big|\P\!\left(\widehat{H}_{n,k}\in B\right)-\Phi(B)\Big|
&= O\!\left(n^{-1/2}\log n\right),
\label{eq::BerryEssen}
\end{align}
\end{subequations}
where
\(
\widehat{H}_{n,k}
:= \sqrt{n}\,\widehat{V}_{n,k}\big(\widehat{\theta}_{n,k}-\theta_{0,k}\big),
\)
and $\Phi$ denotes the distribution of the corresponding Gaussian limit.
\end{description}Define
\begin{equation}\label{eq::Hagg_theta}
\widehat{H}_n
:=\frac{1}{\sqrt K}\sum_{k=1}^K \widehat{H}_{n,k}
 \end{equation}
Here, we specifically adopt the joint bounds~\eqref{eq::concentration}--\eqref{eq::BerryEssen}, which are derived simultaneously in \citet{bhattacharya1978validity} for Z-estimators. We emphasize that \citet{bhattacharya1978validity} states the result for a \emph{local} solution of the estimating equation; under convexity of the contrast function, this local solution coincides with the global minimizer. This setting includes, for instance, one-parameter generalized linear models with canonical link. See, e.g., Theorem~1 of \citet{das2025pebble} for logistic regression.
Finally, the Berry--Esseen bound in~\eqref{eq::BerryEssen} can be sharpened to $o(n^{-1/2})$ under additional regularity conditions, as the Cramer's condition on characteristic function for  non-lattice distribution; see, for example, \citet[Prop.~2.1 or Thm.~2.2]{lahiri1994two} or \citet[Thm.~3]{bhattacharya1978validity}. In contrast, logistic regression corresponds to a lattice-type setting, and \citet[Thm.~1]{das2025pebble} establishes the rate $O(n^{-1/2}\log n)$, while their proposed bootstrap attains $o(n^{-1/2})$. In the present work, we adopt the least favorable rate $O(n^{-1/2}\log n)$ in order to cover both lattice and non-lattice cases within a unified framework.

We also require the  inequality for the random matrix
$\widehat V_{n,k}$.
\begin{description}
    \item[(A9)] For each $k$, assume that there exists a constant $b_k>0$ such that
\begin{equation}\label{eq::concentrationV}
\P\!\left(\|\widehat{V}_{n,k}-V_{k}\|_{\op}
\le b_k\sqrt{\frac{\log n}{n}}\right)
= 1 - o\!\left(n^{-1/2}\right),
\end{equation}
\end{description}

for some positive-definite matrix $V_k$.

A sufficient set of conditions for~\eqref{eq::concentrationV} is given as follows. Suppose that
\(
\widehat{V}_{n,k}=\frac1n\sum_{i=1}^n v_k\!\left(Z_{i,k},\widehat{\theta}_{n,k}\right),
\qquad
V_k:=\E\!\left[v_k\!\left(Z_{1,k},\theta_{0,k}\right)\right].
\)
\begin{enumerate}
\item[\textnormal{SC1.}]
For any $\theta_1,\theta_2$ and any $z$,
\(
\big\|v_k(z,\theta_1)-v_k(z,\theta_2)\big\|_{\op}
\le w_k(z)\,\|\theta_1-\theta_2\|,
\)
for some nonnegative measurable function $w_k$.

\item[\textnormal{SC2.}]
For each $k$, the observations $(Z_{i,k})_{i=1}^n$ are i.i.d., $\E[w_k(Z_{1,k})]>0$, and there exists $s\geq 3$ such that
\(
\E\!\left[w_k(Z_{1,k})^s\right]<\infty,
\qquad
\E\!\left[\big\|v_k(Z_{1,k},\theta_{0,k})\big\|_{\op}^s\right]<\infty.
\)
\end{enumerate}

\begin{lem}\label{lem::concentrationV}
Assume that~\eqref{eq::concentration} holds for $\widehat{\theta}_{n,k}$, and that \textnormal{SC1--SC2} hold.
Then \eqref{eq::concentrationV} holds for some constant $b_k>0$.
\end{lem}
\[\text{Define  }\quad
\overline{\widehat V}_n:=\frac1K\sum_{k=1}^K \widehat V_{n,k},
\qquad
\overline V:=\frac1K\sum_{k=1}^K V_k,
\qquad
r_n:=\sqrt{\frac{\log n}{n}},
\qquad
\theta_{0,V}:= V^{-1}\sum_{k=1}^K V_k\,\theta_{0,k}.
\]

 \begin{prop}\label{prop::AEE_conc_BE}
Assume that, for each $k\in\{1,\ldots,K\}$, (A8) and
 (A9) hold.
The following hold.

\noindent\textnormal{(i).}
There exists a constant $C>0$ such that
\begin{equation}\label{eq::conc_theta_agg}
\P\!\left(\|\widehat{\theta}_n-\theta_{0,V}\|\le C\,r_n\right)
=1-o\!\left(n^{-1/2}\right).
\end{equation}

 \noindent\textnormal{(ii).}
We have
\begin{equation}\label{eq::BE_theta_agg}
\sup_{B\in\mathcal  A}
\big|\P(\widehat{H}_n\in B)-\Phi(B)\big|
=
O\!\left(n^{-1/2}\log n\right).
\end{equation}

 \noindent
In particular, when
$\theta_{0,1}=\cdots=\theta_{0,K}= \theta_0$, we have $\theta_{0,V}=\theta_0$ and
\(
\widehat{H}_n
=
\sqrt{nK}\,\overline{\widehat V}_n\big(\widehat{\theta}_n-\theta_0\big)
 \)
and  \eqref{eq::Hagg_theta}--\eqref{eq::BE_theta_agg} coincide with the homogeneous-case formulation.
\end{prop}

 This result  makes explicit how
\emph{local}  large sample results stated as an asymptotic  deviation bound and a Berry--Esseen bound for each
$\widehat\theta_{n,k}$ together with  the bound for $\widehat V_{n,k}$ \emph{propagate} to the aggregated estimator
$\widehat\theta_n$.
A key feature is the separation of the arguments: the deviation of $\widehat\theta_n$ is obtained from the local deviation
of $\widehat\theta_{n,k}$ combined with the control of $\widehat V_{n,k}$, whereas the Berry--Esseen bound for the standardized aggregated statistic is deduced
directly from the local Berry--Esseen bounds through a telescoping   argument.

Let
$
\bar A_0 = \|V\|_{\op} \Big(\sum_{k=1}^K a_k^2\Big)^{1/2}
$ and fix $\eta \in (0,1)$, and define \[
c_+(\eta):=\frac{\lambda_{\max}(V)}{(1-\eta)\lambda_{\min}(V)},
\quad
c_-(\eta):=\frac{\lambda_{\min}(V)}{\lambda_{\max}(V)+\eta\lambda_{\min}(V)}, \quad c_p:=\frac{2^{1-p/2}}{\Gamma(p/2)}  \text{ and }
\]
\begin{align*}
 \bar A_1
&:=
 \|V\|_{\op}
\Big(\sqrt K\,C+\Big(\sum_{k=1}^K a_k^2\Big)^{1/2}\Big)
 +
\Big(\sum_{k=1}^K b_k\Big) \,
\Big(\sum_{k=1}^K \|\theta_{0,V}-\theta_{0,k}\|^2\Big)^{1/2}.
\end{align*}
Define, for all $n,$
\[
\widehat W_n
\, =\,  \left( \begin{array}{c}
\widehat V_n \sqrt{n} \,(\widehat{\theta}_n-\widehat{\theta}_{n, 1}) \\
\widehat V_n \sqrt{n} \,(\widehat{\theta}_n-\widehat{\theta}_{n, 2}) \\ \vdots \\
\widehat V_n \sqrt{n} \,(\widehat{\theta}_n-\widehat{\theta}_{n, K})
\end{array} \right),
\qquad
W_{n,0}
=
\left(
\begin{array}{c}
V\sqrt n(\theta_{0,V}-\theta_{0,1})\\
V\sqrt n(\theta_{0,V}-\theta_{0,2})\\
\vdots\\
V\sqrt n(\theta_{0,V}-\theta_{0,K})
\end{array}
\right)
\]  and
\begin{equation}\label{eq::A0_def}
 \Delta_V
=
\left(\sum_{k=1}^K \|V(\theta_{0,V}-\theta_{0,k})\|^2\right)^{1/2}.
\end{equation}

\begin{theo}\label{thm::tail_bounds_WhatV2}
  Assume, for each
$k\in\{1,\ldots,K\}$,(A8) and
 (A9) hold.

1. Assume  $\theta_{0,1}=\cdots=\theta_{0,K}= \theta_0$.   Let $(u_n)_{n\ge1}$ be any deterministic bounded sequence such as $A_0 = \lim\inf u_n > 2\bar A_0$.
\text{Set  }
$
\, t_n:=u_n\sqrt{\log n}.
$
Then,   for any $\eta \in (0,1),$
\begin{align}
\P\big(\|\widehat W_n\|\ge t_n\big)
&\le
c_p\left(\frac{t_n}{2Kc_+(\eta)}\right)^{p-2}
\exp\!\left(-\frac{t_n^2}{8K^2c_+(\eta)^2}\right)
\Big(1+o(1)\Big)
+
O\!\left(n^{-1/2}\log n\right)
+
o\!\left(n^{-1/2}\right),
\label{eq::upper_tail_What_tn_V2}\\[3pt]
\P\big(\|\widehat W_n\|\ge t_n\big)
&\ge
c_p\left(\frac{3t_n}{2Kc_-(\eta)}\right)^{p-2}
\exp\!\left(-\frac{9t_n^2}{8K^2c_-(\eta)^2}\right)
\Big(1+o(1)\Big)
-
O\!\left(n^{-1/2}\log n\right)
-
o\!\left(n^{-1/2}\right).
\label{eq::lower_tail_What_tn_V2}
\end{align}
2. Assume that there exists at least one $k\neq 1$ such that
\(
\theta_{0,k}\neq \theta_{0,1}
\). For any deterministic   bounded sequence $(u_n)_{n\ge 1}$ such that $\lim\inf u_n >  \bar A_1$,  set
$
t_n  = \sqrt n\,\Delta_V+u_n\sqrt{\log n}.
$ Then,
$
\P(\|\widehat W_n\|\le t_n)\ge 1-o(n^{-1/2}).
$
In particular,  for any fixed $\varepsilon\in(0,1)$,
\begin{align}
\P\!\left(\|\widehat W_n\|\le (1-\varepsilon)\sqrt n\,\Delta_V\right)
&=
o(n^{-1/2}),
\label{eq::below_relative_case2}\\
\P\!\left(\|\widehat W_n\|\le (1+\varepsilon)\sqrt n\,\Delta_V\right)
&=
1-o(n^{-1/2}).
\label{eq::above_relative_case2}
\end{align}
\end{theo}

\paragraph*{Comment (detectability threshold).}
In the heterogeneous case, the statistic $\|\widehat W_n\|$ concentrates around the
deterministic \emph{signal level} $\|W_{n,0}\|=\sqrt n\,\Delta_V$, up to fluctuations of order
$\Gamma_{n,\mathrm{alt}}=O(\sqrt{\log n})$ with failure probability $o(n^{-1/2})$.
In contrast, under homogeneity, $\|\widehat W_n\|$ lives on a much smaller \emph{noise scale}
(of order at most $\sqrt{\log n}$), so any level-$\alpha$ critical value $c_{n,\alpha}$
calibrated under $H_0$ is typically of order $c_{n,\alpha}=O(\sqrt{\log n})$.

\noindent This yields a transparent  detectability condition.
Indeed, by the reverse triangle inequality,
\(
\|\widehat W_n\|\;\ge\;\|W_{n,0}\|-\|\widehat W_n-W_{n,0}\|,
\)
so that, for any $t>0$,
\(
\big\{\|\widehat W_n-W_{n,0}\|\le t\big\}\cap \big\{\|W_{n,0}\|\ge c_{n,\alpha}+t\big\}
\ \subseteq\
\big\{\|\widehat W_n\|\ge c_{n,\alpha}\big\}.
\)
Since $\|W_{n,0}\|=\sqrt n\,\Delta_V$ is deterministic, whenever
\(
\sqrt n\,\Delta_V\;\ge\;c_{n,\alpha}+\Gamma_{n,\mathrm{alt}}
\),
we obtain
\begin{align*}
\P_{H_1}\!\big(\|\widehat W_n\|\ge c_{n,\alpha}\big)
&\ge
\P_{H_1}\!\Big(\|\widehat W_n-W_{n,0}\|\le \Gamma_{n,\mathrm{alt}}\Big)\\
&=
1-\P_{H_1}\!\Big(\|\widehat W_n-W_{n,0}\|>\Gamma_{n,\mathrm{alt}}\Big)
=
1-o(n^{-1/2}),
\end{align*}
where one may take, for instance,
\begin{align}
\Gamma_{n,\mathrm{alt}}
&:=
\Big(\|V\|_{\op}+\Big(\sum_{k=1}^K b_k\Big)r_n\Big)
\Big(\sqrt K\,C+\Big(\sum_{k=1}^K a_k^2\Big)^{1/2}\Big)\sqrt{\log n}
\nonumber\\
&\quad+
\Big(\sum_{k=1}^K b_k\Big)\sqrt{\log n}\,
\Big(\sum_{k=1}^K \|\theta_{0,V}-\theta_{0,k}\|^2\Big)^{1/2},
\label{eq::Gamma_alt_def_comment}
\end{align}
as derived in the proof (see the bound on $\|\widehat W_n-W_{n,0}\|$ under $H_1$).

\noindent Equivalently, the test can reliably detect departures from homogeneity as soon as
\(
\Delta_V\;\ge\;\frac{c_{n,\alpha}+\Gamma_{n,\mathrm{alt}}}{\sqrt n}.
\)
In particular, since $c_{n,\alpha}=O(\sqrt{\log n})$ and $\Gamma_{n,\mathrm{alt}}=O(\sqrt{\log n})$,
this detectable separation is  of order
\(
\Delta_V \;\geq \; \kappa \sqrt{\frac{\log n}{n}}
\)   for some positive $\kappa.$

\subsubsection{CoC-algorithm with  shrinkage rejection region}

Now, we impose the following conditions for the bootstrap procedure, analogous to
\eqref{eq::concentration}--\eqref{eq::BerryEssen}--\eqref{eq::concentrationV}.

\begin{description}
    \item[(A10)]  For each centre $k$ and each bootstrap round $r$, assume that there exists a constant $a_k^*>0$ such that
\begin{subequations}
\begin{align}
\P\!\left(\|\widehat{\theta}_{n,k}^{(r)}-\widehat \theta_{n,k}\|
\le a_k^*\sqrt{\frac{\log n}{n}} \,\middle|\, \mathcal Z_n\right)
&= 1 - o_p\!\left(n^{-1/2}\right),
\label{eq::concentrationBoot}\\
\sup_{B\in \mathcal A}
\Big|\P\!\left(\widehat{H}_{n,k}^{(r)}\in B \,\middle|\, \mathcal Z_n \right)-\Phi(B)\Big|
&= O_p\!\left(n^{-1/2}\log n\right),
\label{eq::BerryEssenBoot}
\end{align}
\end{subequations}
where $\widehat{H}_{n,k}^{(r)}:=\sqrt{n}\,\widehat{V}_{n,k}^{(r)}\big(\widehat{\theta}_{n,k}^{(r)}-\widehat \theta_{n,k}\big)$.
\end{description}

We also require the following matrix concentration.
\begin{description}
    \item[(A11)] For each $k$ and $r$, assume that there exists a constant $b_k^*>0$ such that
\begin{equation}\label{eq::concentrationVBoot}
\P\!\left(\|\widehat{V}_{n,k}^{(r)}-\widehat V_{n,k}\|_{\op}
\le b_k^*\sqrt{\frac{\log n}{n}} \,\middle|\, \mathcal Z_n\right)
= 1 - o_p\!\left(n^{-1/2}\right),
\end{equation}
\end{description}
which is also readily obtained under conditions similar to SC1 and SC2.

Conditions (A10) and (A11) yield the existence of some constant $C^*$ such that for any round $r,$
$
\P\!\left(\|\widehat{\theta}_n^{(r)}-\widehat \theta_{n}\|\le C^*\,r_n\middle|\mathcal Z_n \right)
=1-o\!\left(n^{-1/2}\right).
$

Denote   $\mathcal C$ is the collection of homogeneous cluster pairs and $\mathcal D$ is the collection of inhomogeneous cluster pairs. Let $ \widehat V^*_n = \sum_{k=1}^K \widehat V_{n,k}, \,
\bar A_0^* = \|\widehat V^*_n\|_{\op} \Big(\sum_{k=1}^K a_k^{*2}\Big)^{1/2}
$ and for $\eta \in (0,1)$, \[
c_+^*(\eta):=\frac{\lambda_{\max}(\widehat V^*_n)}{(1-\eta)\lambda_{\min}( \widehat V^*_n)},
\quad
c_-^*(\eta):=\frac{\lambda_{\min}( \widehat V^*_n)}{\lambda_{\max}(\widehat V^*_n)+\eta\lambda_{\min}(\widehat V^*_n)} \text{ and }
\]
\begin{align*}
 \bar A_1^*
&:=
 \|\widehat V^*_n\|_{\op}
\Big(\sqrt K\,C^*+\Big(\sum_{k=1}^K a_k^{*2}\Big)^{1/2}\Big)
 +
\Big(\sum_{k=1}^K b_k^*\Big) \,
\Big(\sum_{k=1}^K \|\widehat \theta_{n}-\widehat \theta_{n,k}\|^2\Big)^{1/2}.
\end{align*}

For all $n,$ for each pair of disjoint clusters $(S_i,S_j)$, define the bootstrap Cochran-type statistic $Q_{n,r}^{(i,j)} = \|\widehat W_{n,r}^{(i,j)}\|^2$  where    at a fixed bootstrap round $r$,
 \(
\widehat V^{(i)}_{n,r}:=\sum_{k\in S_i}\widehat V_{n, k}^{(r)}, \quad
\widehat V^{(j)}_{n,r}:=\sum_{k\in S_j}\widehat V_{n, k}^{(r)}, \quad
\widehat V^{(i, j)}_{n,r}:=\widehat V^{(i)}_{n,r}+\widehat V^{(j)}_{n,r},
\)
 the AEE  estimators by
\(
\widehat V^{(i)}_{n,r}\, \widehat{\widehat\theta}^{(i)}_{n,r}=\sum_{k\in S_i}\widehat V_{n, k}^{(r)}\, \widehat\theta^{(r)}_{n, k}, \quad
\widehat V^{(j)}_{n,r}\, \widehat{\widehat\theta}^{(j)}_{n,r}=\sum_{k\in S_j}\widehat V_{n, k}^{(r)}\, \widehat\theta^{(r)}_{n, k},
\)
and the non-bootstrap versions $\widehat{\widehat\theta}^{(i)}_n, \widehat{\widehat\theta}^{(j)}_n$ where  $\widehat\theta_{n, k}$ replaces $\widehat\theta^{(r)}_{n, k}$;  the combined AEE satisfies
\(
\widehat V^{(i, j)}_{n,r}\, \widehat{\widehat\theta}^{(i, j)}_{n,r}
= \widehat V^{(i)}_{n,r} \widehat{\widehat\theta}^{(i)}_{n,r} + \widehat  V^{(j)}_{n,r} \widehat{\widehat\theta}^{(j)}_{n,r}.
\) and consequently define its non-bootstrap version    $\widehat{\widehat\theta}^{(i, j)}_{n}$. We set $\widehat V^{(i,j)}_{n} = \sum_{k  \in S_i \cup S_j} \widehat V_{n,k},$
\[
\widehat W_{n,r}^{(i,j)}
\, =\,  \left( \begin{array}{c}
\widehat V_{n,r}^{(i,j)} \sqrt{n} \,(\widehat {\widehat{\theta}}_{n,r}^{(i,j)}-\widehat {\widehat{\theta}}_{n, r}^{(i)}) \\
\widehat V_{n,r}^{(i,j)} \sqrt{n} \,(\widehat {\widehat{\theta}}_{n,r}^{(i,j)}-\widehat {\widehat{\theta}}_{n, r}^{(j)})
\end{array} \right), \,  \widehat \Delta_V^{(i,j)}
=
\left(  \|\widehat V^{(i,j)}_{n}(\widehat{\widehat\theta}^{(i, j)}_{n}-\widehat{\widehat\theta}^{(i)}_n)\|^2 +  \|\widehat V^{(i,j)}_{n}(\widehat{\widehat\theta}^{(i, j)}_{n}-\widehat{\widehat\theta}^{(j)}_n)\|^2\right)^{1/2}
  \]
 \begin{align*}
\text{ and } \quad \Gamma_{n,\mathrm{alt}}^*
& = \left[\bar A_1^*
 +r_n \Big(\sum_{k=1}^K b_k^*\Big)
\Big(\sqrt K\,C^*+\Big(\sum_{k=1}^K a_k^{*2}\Big)^{1/2}\Big)\right]\sqrt{\log n}.
 \end{align*}

\begin{theo}\label{theo:boot_merge_split_control}
Assume (A8) to (A11).
Consider the deterministic threshold $t_n=u_n\sqrt{\log n}$ with a bounded sequence
$u_n$ that satisfies, conditionally on $\mathcal Z_n$,
$\liminf u_n > \max(2\bar A_0^*,\bar A_1^*)$,
and fix any number of bootstrap rounds $R(n)\ge 1$.
Define the rejection event
\(
\mathcal R_{n,r}^{(i,j)}:=\{Q_{n,r}^{(i,j)} > t_n^2\}.
\)
Let
\[
\overline\alpha_n
:=\max_{1\le r\le R(n)}\ \max_{(S_i,S_j)\in\mathcal C}\
\P\!\big(\mathcal R_{n,r}^{(i,j)} \,\big|\, \mathcal Z_n\big),
\quad
\overline\beta_n
:=\max_{1\le r\le R(n)}\ \max_{(S_i,S_j)\in\mathcal D}\
\P\!\big((\mathcal R_{n,r}^{(i,j)})^{c} \,\big|\, \mathcal Z_n\big).
\]

\medskip
\noindent\textnormal{(i) Uniform false splitting (type-I) bound.}
For any $\eta \in (0,1)$, define
\[
\zeta_{n}^{\mathrm{I}}(\eta)
:=
c_p\Big(\frac{t_n}{2Kc_+^*(\eta)}\Big)^{p-2}
\exp\!\Big(-\frac{t_n^2}{8K^2c_+^*(\eta)^2}\Big).
\]
Then
\begin{equation}\label{eq::type1_bound}
\P\!\Big(\,\overline\alpha_n
\;\le\;
\zeta_{n}^{\mathrm{I}}(\eta)\,\bigl(1+o_p(1)\bigr)
+
O_p\!\big(n^{-1/2}\log n\big)
+
o_p\!\big(n^{-1/2}\big)\Big)
\;\longrightarrow\; 1.
\end{equation}

\medskip
\noindent\textnormal{(ii) Uniform false merging (type-II) bound.}
Uniformly over $1\le r\le R(n)$ and $(S_i,S_j)\in\mathcal D$,
\begin{equation}\label{eq::type2_concentration}
\P\!\Big(\big|\,\|\widehat W_{n,r}^{(i,j)}\|-\sqrt n\,\widehat \Delta_{V}^{(i,j)}\,\big|
\le \Gamma_{n,\mathrm{alt}}^{*}
\,\Big|\,\mathcal Z_n\Big)
\;=\;
1-o_p\!\big(n^{-1/2}\big).
\end{equation}
Assume in addition that there exists a deterministic sequence $\underline\Delta_n>0$
such that, for every $(S_i,S_j)\in\mathcal D$, conditionally on $\mathcal Z_n$,
\(
\widehat \Delta_{V}^{(i,j)}\ge \underline\Delta_n.
\)
If the detectability condition
\begin{equation}\label{eq::detectability_boot}
\sqrt n\,\underline\Delta_n \;\ge\; t_n+\Gamma_{n,\mathrm{alt}}^{*}
\end{equation}
holds conditionally on $\mathcal Z_n$ for all large $n$, then
\begin{equation}\label{eq::type2_bound}
\P\!\Big(\,\overline\beta_n
\;\le\;
o_p\!\big(n^{-1/2}\big)\Big)
\;\longrightarrow\; 1.
\end{equation}
\end{theo}

%---------------------------------------------------------
\paragraph*{Comment on Theorem~\ref{theo:boot_merge_split_control}.}
Unlike our previous setup, the rejection region in this theorem shrinks as $n$ increases.
Nevertheless, both error rates vanish in probability simultaneously, ensuring that the
true partition is recovered with probability tending to $1$. In finite samples, the
multi-round algorithm reduces the false splitting of homogeneous centres, though this
comes at the cost of increased false merging of heterogeneous centres. This trade-off
is most pronounced when the clusters are not well separated, specifically when the
signal level $\widehat \Delta_V^{(i,j)}$ does not substantially exceed $\underline \Delta_n$.
Regarding the number of rounds, since $\overline\beta_n = o_p(n^{-1/2})$ by
part~(ii), the accumulated type-II error over $R(n)$ rounds satisfies
$R(n)\,\overline\beta_n = o_p(1)$ as soon as $R(n) = o_p(\sqrt{n})$.
In particular, any sequence $R(n)\to\infty$ with $R(n)/\sqrt{n}\to 0$ is sufficient
to ensure that both error rates vanish while still providing the golden-partition
recovery guarantee of Theorem~\ref{th:CoC-algorithmOracleIterations}.

\section{Simulation study}
\label{subsec:mc-CoC-algorithm-ari}

\textbf{Data-generating process (DGP).}
We simulate $K$ distributed centres partitioned into $L$    true clusters : 
centre $k$ belongs to a true cluster label $c(k)\in\{1,\dots,L\}$.
Within each cluster $\ell$, centers share a common logistic model
\(
\Pr(Y=1\mid X)=\sigma(X^\top \beta_\ell),\quad \sigma(t)=\frac{1}{1+e^{-t}} .
\)
Each center $k$ contains $n$ i.i.d.\ observations $(X_{ki},Y_{ki})_{i=1}^n$ with
\(
X_{ki}=(1, Z_{ki}^\top)^\top\in\mathbb{R}^{p},
\quad Z_{ki}\sim \mathcal{N}(0,\Sigma_\rho),
\quad (\Sigma_\rho)_{ab}=\rho^{|a-b|} \ \ (a,b\in\{1,\dots,p-1\}),
\)
so the covariates follow an AR(1)-type correlation structure controlled by $\rho$.
Cluster-specific coefficients are constructed as
\(
\beta_\ell=\beta_0+\delta\, d_\ell,
\)
where $\beta_0$ is a baseline vector (with an intercept component), $d_\ell$ is a random direction
(with zero intercept component) normalized to unit Euclidean norm, and $\delta>0$ controls
the separation between clusters (larger $\delta$ means better-separated clusters).

\textbf{Local estimation and Hessian.}
At each center $k$, we fit a logistic regression by MLE, producing a local estimator $\hat\theta_k$
and its   Hessian matrix
\(
\hat V_k=\frac{1}{n}\,X_k^\top D_k X_k,
\quad D_k=\mathrm{diag}\big(\hat p_{ki}(1-\hat p_{ki})\big)_{i=1}^n,
\quad \hat p_{ki}=\sigma(X_{ki}^\top \hat\theta_k).
\)

\textbf{Bootstrap scheme.}
For each center $k$, we run an \emph{i.i.d.\ nonparametric bootstrap}:
for each bootstrap round $r\in\{1,\dots,R\}$, we resample $n$ observations with replacement from the
centre's data, refit the local logistic model, and obtain bootstrap replicates
\(
\hat\theta_k^{(r)} \quad \text{and} \quad \hat V_k^{(r)}.
\)

 We run  the experiment  on two configurations:
\(
(K,L)\in\{(20,4),\ (40,6)\},
\)
with fixed $p=12$, $\rho=0.3$, and balanced cluster sizes.
We vary:
\(
n\in\{500,1000,2000,5000\},\quad
\delta\in\{0.5,1.0\},\quad
u_n\in\{1,2,4\},\quad
R\in\{50,100\}.
\)
The threshold used by the CoCA merging rule is
\(
t_n = u_n \sqrt{\log n}.
\)
For each grid point, we repeat the simulation for $B = 100$   Monte Carlo replicates, generating independent datasets and bootstrap resamples each time.

\textbf{Evaluation metrics.}
The  averages across the Monte-Carlo experiments of the  following metrics are reported at each grid point:
   {ARI (Adjusted Rand Index):} $\mathrm{ARI}(\widehat{C}_n^{(R(n))} \, ,   \mathcal P )\in[-1,1]$,
with $\mathrm{ARI}=1$ indicating perfect recovery.
   {False merge rate:} the fraction of pairs of centers belonging to \emph{different}
true clusters that are incorrectly grouped together,
  {False split rate:} the fraction of pairs of centers belonging to the \emph{same}
true cluster that are incorrectly separated,

 \paragraph*{Results : Overall patterns.}
Across both configurations, the three plots convey a consistent message:
\begin{itemize}
\item \textbf{Sample size effect.}
As $n$   increases, ARI increases monotonically in almost all settings (Figures~\ref{fig:ariL4K20} and \ref{fig:ariL6K40}), and the \emph{false split} rate decreases sharply (Figures~\ref{fig:false-splitL4K20} and \ref{fig:false-splitL6K40}). This indicates that the dominant error at small $n$ is \emph{over-splitting} (too conservative merging), which fades as local estimation noise decreases.

\item \textbf{Separation effect.}
Increasing the cluster separation $\delta$ from $0.5$ to $1$ uniformly improves performance: ARI rises and false split drops (compare the $(\delta,u_n)$ curves in Figures~\ref{fig:ariL4K20}, \ref{fig:ariL6K40} and \ref{fig:false-splitL4K20} and \ref{fig:false-splitL6K40}). In practice, $\delta=1$ behaves as an ``easy'' regime where the method can recover the partition well for moderate to large $n$.

\item \textbf{Threshold tuning $u_n$ induces a regime-dependent merge--split trade-off.}

While the \emph{optimal} $u_n$ can depend on the configuration $(K,L)$ (and on the difficulty level $\delta$), the qualitative mechanism is stable within each regime: increasing $u_n$ tends to reduce false splits but can inflate false merges when separation is weak.
In the two regimes considered here, this appears as follows:
\begin{enumerate}
\item \emph{$u_n=1$ (conservative).}
Fusions are rarely accepted: false merge is essentially $0$, but false split remains very large (often around $0.75$--$0.88$), so ARI stays low even as $n$ increases.
This is visible in both regimes: the $(\delta,u_n)=(\cdot,1)$ curves stay low in Figures~\ref{fig:ariL4K20}--\ref{fig:ariL6K40}, and the corresponding false-split curves stay high in Figures~\ref{fig:false-splitL4K20}--\ref{fig:false-splitL6K40}.

\item \emph{$u_n=4$ (aggressive).}
False split decreases because more fusions are accepted, but when separation is small (notably $\delta=0.5$) false merge can become non-negligible and may dominate the error, which pulls ARI down.
This pattern is most apparent in Figures~\ref{fig:false-mergeL4K20}--\ref{fig:false-mergeL6K40}. For example, for $(K,L)=(20,4)$ at $n=500$ and $\delta=0.5$, false merge is large (about $0.33$ when $R=50$), coinciding with poor ARI.

\item \emph{$u_n=2$ (intermediate).}
Across the settings explored, $u_n=2$ provides the best compromise: false merge stays close to $0$ while false split is substantially reduced, yielding the highest ARI overall.
In the easier separation regime $\delta=1$, performance is already strong at moderate $n$ and improves further with $n$ (Figures~\ref{fig:ariL4K20}--\ref{fig:ariL6K40}), while the false-merge and false-split levels remain small (Figures~\ref{fig:false-mergeL4K20}--\ref{fig:false-splitL4K20}--\ref{fig:false-mergeL6K40}--\ref{fig:false-splitL6K40}).
\end{enumerate}

\item \textbf{Effect of the number of bootstrap rounds.}
Comparing line styles (solid $R=50$ vs dashed $R=100$) in Figures~\ref{fig:ariL4K20}--\ref{fig:ariL6K40}--\ref{fig:false-splitL4K20}--\ref{fig:false-splitL6K40}, increasing $R$ from $50$ to $100$ yields \emph{modest but systematic} gains in many regimes: ARI increases slightly and false split decreases, most noticeably in the difficult/finite-sample settings (small $n$, smaller $\delta$, and/or $u_n\in\{1,2\}$). The false merge rate can increase slightly in some cases (because additional rounds allow more opportunities for fusions), but the overall changes remain small compared to the dominant effects of $(n,\delta,u_n)$.
\end{itemize}

\paragraph*{Configuration comparison: $(K,L)=(20,4)$ vs $(40,6)$ (Figures~\ref{fig:L4K20}--\ref{fig:L6K40}).}
The second configuration is uniformly harder, and the figures show this primarily through \textbf{slower ARI improvement} and \textbf{higher residual splitting} at comparable $(n,\delta,u_n)$:
\begin{itemize}
\item \textbf{Lower ARI at fixed $(n,\delta,u_n)$.}
For many cells, ARI under $(40,6)$ is below the corresponding $(20,4)$ value (e.g., at $\delta=1$, $u_n=4$, $n=2000$: ARI $\approx 0.962$--$0.951$ for $(40,6)$ vs $\approx 0.991$--$0.994$ for $(20,4)$). This is the expected scaling effect: with more centres and more true groups, there are more boundaries to recover and more chances to commit either a split or a merge error.

\item \textbf{Same qualitative trade-off in $u_n$.}
Despite the increased difficulty, the ranking of $u_n$ is unchanged: $u_n=1$ remains over-conservative (high false split), $u_n=4$ can be too aggressive when $\delta$ is small (visible false merge), and $u_n=2$ is typically closest to optimal in ARI.  

\item \textbf{Near-perfect recovery in the easy regime.}
When $\delta=1$ and $n$ is large, both configurations achieve essentially perfect clustering for $u_n=4$ (ARI $\approx 1$ and false merge/split $\approx 0$ at $n=5000$), and very high performance for $u_n=2$. This indicates that the method scales well when separation is sufficient and local sample size is large enough.

\item \textbf{Where the configurations differ the most.}
The largest gaps appear in the intermediate regimes (e.g., $\delta=1$ with $n\in\{500,1000,2000\}$): $(40,6)$ tends to retain more splitting (higher false split) and thus slightly lower ARI. In contrast, for very large $n$ the gap shrinks, consistent with the idea that the remaining errors are driven mainly by finite-sample variability rather than a structural limitation.
\end{itemize}
 
\begin{figure}[t]
  \centering

  \begin{subfigure}[t]{0.32\textwidth}
    \centering
    \includegraphics[width=\linewidth]{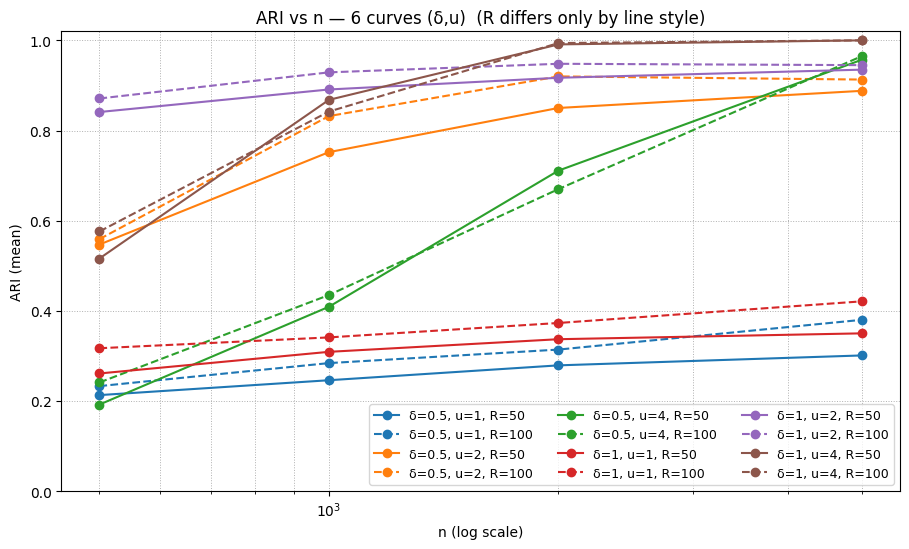}
    \caption{ARI}
    \label{fig:ariL4K20}
  \end{subfigure}\hfill
  \begin{subfigure}[t]{0.32\textwidth}
    \centering
    \includegraphics[width=\linewidth]{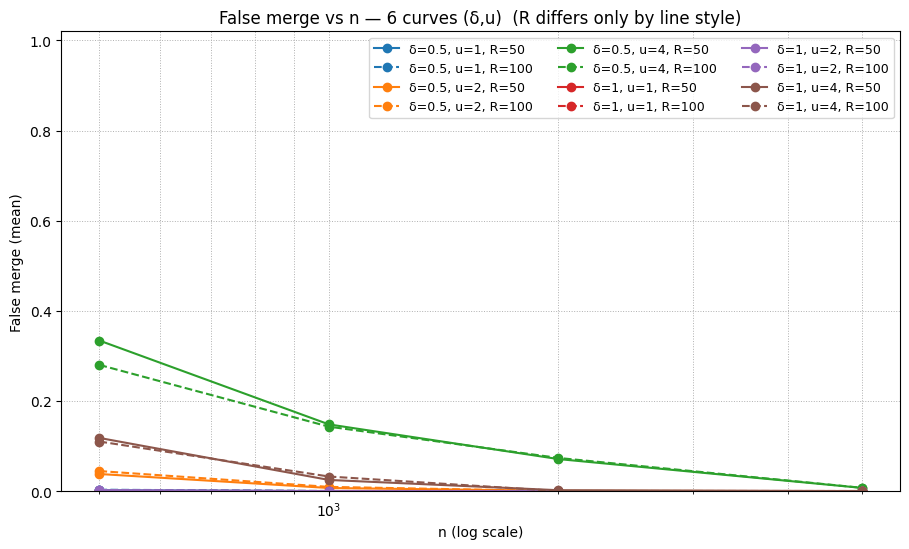}
    \caption{False merge}
    \label{fig:false-mergeL4K20}
  \end{subfigure}\hfill
  \begin{subfigure}[t]{0.32\textwidth}
    \centering
    \includegraphics[width=\linewidth]{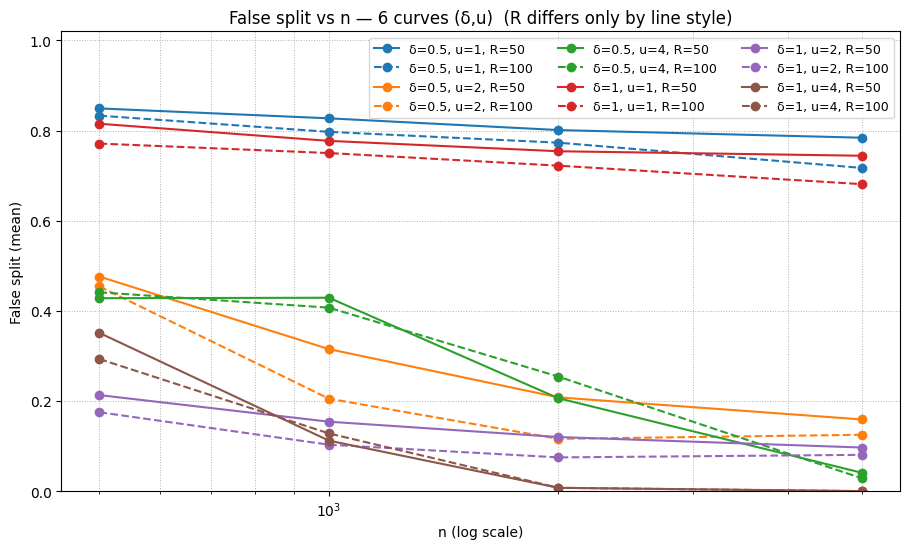}
    \caption{False split}
    \label{fig:false-splitL4K20}
  \end{subfigure}

  \caption{Performance curves versus $n$ (solid: $R=50$, dashed: $R=100$), for each $(\delta,u_n)$ setting for $L= 4$ and $K= 20$.}
  \label{fig:L4K20}
\end{figure}

\begin{figure}[t]
  \centering

  \begin{subfigure}[t]{0.32\textwidth}
    \centering
    \includegraphics[width=\linewidth]{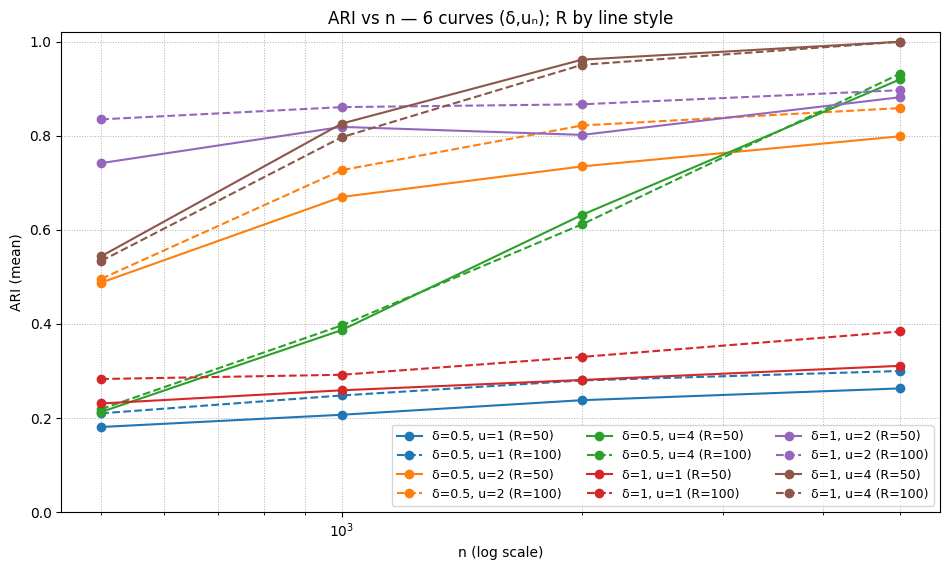}
    \caption{ARI}
    \label{fig:ariL6K40}
  \end{subfigure}\hfill
  \begin{subfigure}[t]{0.32\textwidth}
    \centering
    \includegraphics[width=\linewidth]{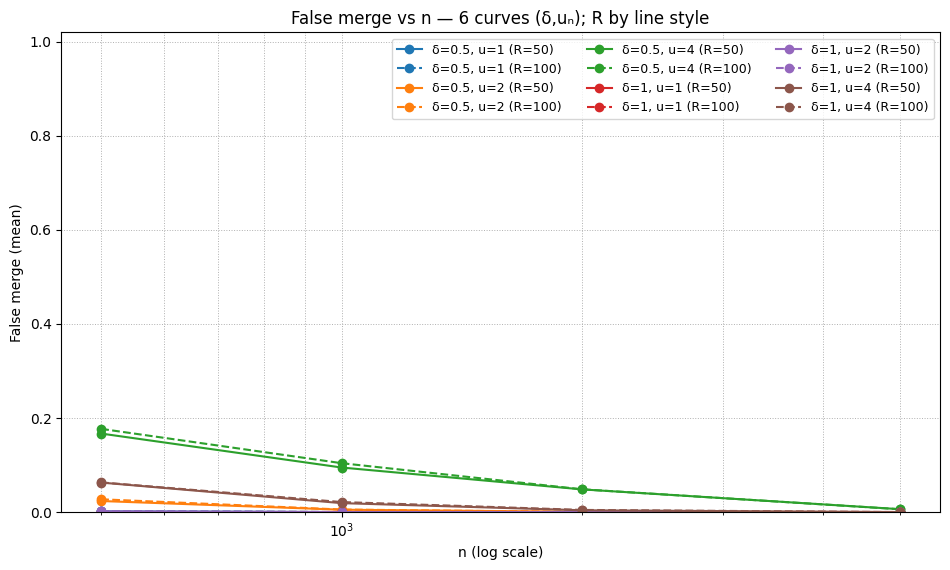}
    \caption{False merge}
    \label{fig:false-mergeL6K40}
  \end{subfigure}\hfill
  \begin{subfigure}[t]{0.32\textwidth}
    \centering
    \includegraphics[width=\linewidth]{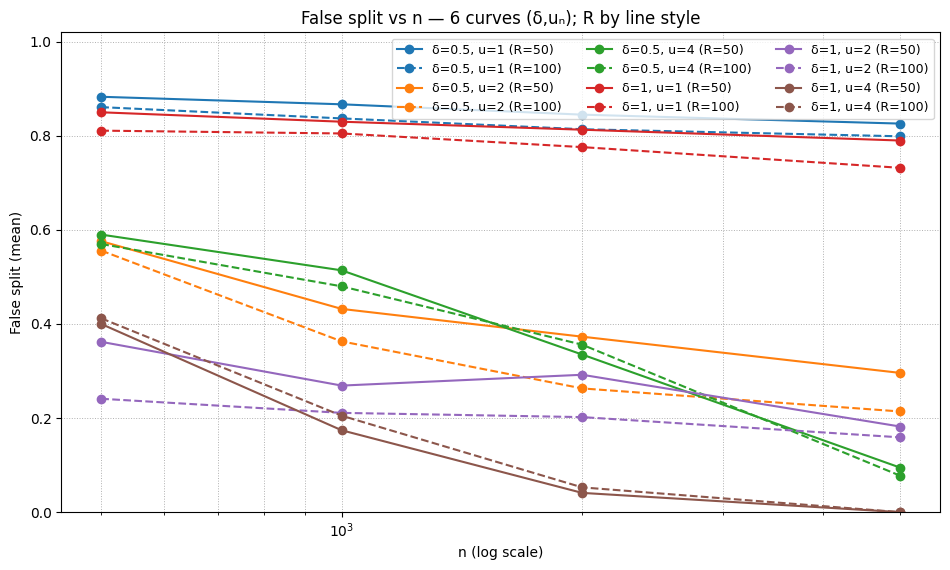}
    \caption{False split}
    \label{fig:false-splitL6K40}
  \end{subfigure}

  \caption{Performance curves versus $n$ (solid: $R=50$, dashed: $R=100$), for each $(\delta,u_n)$ setting for $L= 6$ and $K= 40$.}
  \label{fig:L6K40}
\end{figure}

\section{Real Data Application}
We illustrate the proposed methods with a real, large–scale study of commercial flights. The analysis is designed to mirror a federated setting in which only centre–level summaries are exchanged while raw data remain local.

\subsection{Arrival Delays Across Major U.S.\ Destination Airports (2007)}
\paragraph*{Data source and construction.}
We use the public U.S.\ airline on–time performance data for calendar year 2007 (ASA Data Expo 2009).\footnote{Download: Harvard Dataverse, ``Airline On-Time Performance'' (ASA Data Expo 2009), persistent identifier \texttt{doi:10.7910/DVN/HG7NV7}. A convenient landing page is \url{https://dataverse.harvard.edu/dataset.xhtml?persistentId=doi:10.7910/DVN/HG7NV7}.}
Starting from all 2007 records, we (i) remove \emph{cancelled} flights and (ii) retain observations with a non–missing arrival delay. We then define \emph{centres} as \emph{destination airports} and keep only destinations with at least $100{,}000$ usable flights after the above filters. For each retained destination $k$, we draw a simple random sample of exactly $100{,}000$ flights without replacement, yielding $K=22$ centres and a total of $N=2{,}200{,}000$ rows in the working dataset. Empirical delay rates vary substantially across centres (e.g., $\approx 38\%$ at EWR/LGA versus $\approx 19\%$ at SLC).

\paragraph*{Response and covariates.}
The binary response is
\(
Y=1\{\text{ArrDelay}\ge 15\},
\)
indicating an arrival delay of at least 15 minutes. Covariates comprise:
\begin{itemize}
\item \textbf{Distance} (miles), \textbf{DayOfWeek} ($1$--$7$), \textbf{Month} ($1$--$12$);
\item \textbf{ArrHourClass}: five bins from scheduled arrival time (CRSArrTime),
\(
\{0\!-\!5,\,6\!-\!8,\,9\!-\!14,\,15\!-\!21,\,22\!-\!00\},
\)
encoded with $N\!-\!1=4$ dummies using \(\texttt{0--5}\) as the reference.
\end{itemize}
We intentionally \emph{exclude} route/operator identifiers such as \textbf{Origin} and \textbf{UniqueCarrier} to avoid high–cardinality features with uneven support across centres and to ensure well–conditioned centre–specific information matrices. All indicator variables are coded as $\{0,1\}$, and the destination identifiers (\texttt{Centre}=\texttt{Dest}) are not used as predictors. In total, the design matrix has
\(
p \;=\; 23\;(\text{Distance, DayOfWeek, Month, ArrHour})\;=\;23
\)
covariates (excluding the intercept).

\paragraph*{Clustering results.}
The CoC-algorithm   produced the same partition for all tuning choices considered (namely $u_n \in \{1,2,4\}$ and $R \in \{50,100\}$): each airport was assigned to a singleton cluster. In other words, under the fitted local logistic-delay models, the procedure finds no statistically supported fusions between airport ``centres,'' suggesting that each airport exhibits a distinct delay profile with respect to the covariates included in the model.

This conclusion should nonetheless be interpreted with caution. Treating airports as fully independent units is a strong assumption that is unlikely to hold in an air-traffic system: common shocks and network effects (e.g., large-scale weather events, airline routing and schedule propagation, and air-traffic-control constraints) can induce cross-airport dependence that is not captured by purely local fits. Consequently, the singleton partition should be understood as evidence of separability \emph{within the imposed modelling and independence framework}, rather than as definitive proof that airports are intrinsically unrelated in their delay dynamics.

\section{Concluding Remarks}

This work develops a test-driven \emph{Clusters-of-Centres} (CoC) methodology for distributed inference when only centre-level summaries are available. The approach combines (i) multivariate Cochran-type global and two-block integration tests computed from summary statistics, and (ii) a multi-round bootstrap CoC algorithm that repeatedly re-evaluates candidate fusions across independently resampled summary sets. Under standard regularity conditions and a separation assumption between the true blocks, we establish a \emph{true-partition recovery} property: for a number of rounds \(R(n)\) growing with \(n\), the estimated partition matches the true partition with probability tending to one. Another proposed clustering procedure, based on a suitably chosen shrinkage rejection region, also recovers the true partition with probability tending to one.

Several questions remain open. First, from a practical perspective, our simulations indicate that the threshold \(t_n=u_n\sqrt{\log n}\) is the main tuning lever, as it governs the merge--split balance through the fusion rule; thus, an adaptive selection procedure deserves investigation. Second, because clustering is data-driven, establishing valid post-selection inference for the aggregated model fitted within each detected block is an important next step. Finally, it would be useful to extend the recovery guarantees and the error control of the procedure to regimes where the number of centres \(K\) increases with \(n\), which would require a careful analysis of multiplicity and separation conditions in high-combinatorial clustering sequences.

\bibliography{bibliography.bib}
\newpage

\section{Supplementary material}
\subsection{Proofs of the main results}

\subsubsection{Proof of Lemma~\ref{th::testHomogeneity}}
Define the aggregated estimator
\begin{displaymath}
\check{\theta}_n \, =\,  \Big( \sum_{k=1}^K V_k \Big)^{-1} \sum_{k=1}^K V_k \widehat{\theta}_{n, k}
\end{displaymath}
and define its feasible analogue
\begin{displaymath}
\widehat{\theta}_n \, =\,  \Big( \sum_{k=1}^K \widehat{V}_{n, k} \Big)^{-1} \sum_{k=1}^K \widehat{V}_{n, k} \widehat{\theta}_{n, k}.
\end{displaymath}
 
Since each $\widehat{V}_{n, k}$ converges in probability to its respective positive definite counterpart~$V_k$, 
the sum $\widehat{V}_n:=\sum_{k=1}^K \widehat{V}_{n, k}$ is positive definite with a probability tending to one.
Therefore,  $\widehat{\theta}_n$ is well defined with a probability tending to one.

Let $V:=\sum_{k=1}^K V_k$. Then
\begin{eqnarray*}
\sqrt{n}\big( \check{\theta}_n \, -\,  \theta_0 \big)
& = & \sqrt{n} \Big( V^{-1} \sum_{k=1}^K V_k \widehat{\theta}_{n, k} \quad - \quad \theta_0 \Big) \\
& = & V^{-1} \sum_{k=1}^K V_k \sqrt{n} \big( \widehat{\theta}_{n, k} \, -\,  \theta_0 \big) \\
& = & V^{-1} \sum_{k=1}^K U_{n, k} \, +\,  o_P(1).
\end{eqnarray*}
Furthermore,  since $\widehat{V}_{n, k}-V_k=o_P(1)$ and $\sqrt{n}\big(\widehat{\theta}_{n, k}-\theta_0\big)=O_P(1)$,  we obtain  
\begin{eqnarray*}
\lefteqn{ \sqrt{n}\big( \widehat{\theta}_n \, -\,  \check{\theta}_n \big) } \\
& = & \widehat{V}_n^{-1} \sum_{k=1}^K (\widehat{V}_{n, k}-V_k) \sqrt{n} (\widehat{\theta}_{n, k}-\theta_0)
\, +\,  \big( \widehat{V}_n^{-1} \, -\,  V^{-1} \big) \sum_{k=1}^K V_k (\widehat{\theta}_{n, k}-\theta_0) \\
& = & o_P(1)
\end{eqnarray*}
and so
\begin{displaymath}
\sqrt{n}\big( \widehat{\theta}_n \, -\,  \theta_0 \big) \, =\,  V^{-1} \sum_{k=1}^K U_{n, k} \, +\,  o_P(1).
\end{displaymath}
\vspace*{0.5cm}
 
It follows that
\begin{eqnarray*}
V \sqrt{n} (\widehat{\theta}_n \, -\,  \widehat{\theta}_{n, k})
& = & V\big( \sqrt{n}(\widehat{\theta}_n-\theta_0) \, -\,  \sqrt{n}(\widehat{\theta}_{n, k}-\theta_0) \big) \\
& = & \sum_{j=1}^K U_{n, j} \, -\,  V V_k^{-1} U_{n, k} \, +\,  o_P(1).
\end{eqnarray*}
Stacking these equations for $k=1, \ldots, K$ we obtain the $Kp$-dimensional vector
\begin{displaymath}
\sqrt{n} \overline{V} T_n
\, =\,  \left( \begin{array}{c} V \sqrt{n} (\widehat{\theta}_n-\widehat{\theta}_{n, 1}) \\
V \sqrt{n} (\widehat{\theta}_n-\widehat{\theta}_{n, 2}) \\ \vdots \\
V \sqrt{n} (\widehat{\theta}_n-\widehat{\theta}_{n, K}) \end{array} \right)
\, =\,  H \overline{U}_n \, +\,  o_P(1), 
\end{displaymath}
where $\overline{U}_n=(U_{n, 1}^\top \,  U_{n, 2}^\top  \cdots U_{n, K}^\top )^\top $.
 
Since $\overline{U}_n\Rightarrow N(0, \overline{Q})$,  where $\overline{Q}=\mbox{\rm diag}(Q_1, \ldots, Q_K)$, 
we obtain that
\begin{displaymath}
\sqrt{n} \overline{V} T_n \Rightarrow H \overline{Q}^{1/2} Z, 
\end{displaymath}
where $Z\stackrel{d}{=} N(0,  I_{Kp})$,  and thus
\begin{displaymath}
nT_n^\top \overline{V}^\top \overline{V}T_n \, \Rightarrow\,  Z^\top  \overline{Q}^{1/2} H^\top  H \overline{Q}^{1/2} Z
\, \stackrel{d}{=}\,  \sum_{\ell=1}^{Kp} \lambda_\ell \chi^2_\ell. 
\end{displaymath}
\hfill $\square$

 \subsubsection{Proof of the Proposition \ref{cor::testHomogemeity}}
     Since $\widehat{V}_{n, k}$ converges in probability to $V_k$  
and   $\sqrt{n}T_n$ is asymptotically normally distributed,  it follows from Lemma \ref{th::testHomogeneity}  that
\begin{displaymath}
nT_n^\top \widehat{\overline{V}}_n^\top \widehat{\overline{V}}_nT_n \, =\,  nT_n^\top \overline{V}^\top \overline{V}T_n \, +\,  o_P(1)
\, \Rightarrow\,  \sum_{\ell=1}^{Kp} \lambda_\ell \chi^2_\ell =: S.
\end{displaymath}

It remains to show that the empirical quantile 
$\hat{q}_{1-\alpha, n}$ converges to
    $q_{1-\alpha}$.
\\ Since  the eigenvalues of a symmetric matrix depend continuously on its entries,  and  
\begin{displaymath}
\widehat{\overline{Q}}^{1/2}_n\widehat{H}_n^\top \widehat{H}_n\widehat{\overline{Q}}^{1/2}_n
\, \stackrel{p}{\longrightarrow}\,  \overline{Q}^{1/2}H^\top H\overline{Q}^{1/2}
\end{displaymath}
it follows that the eigenvalues $\widehat{\lambda}_{1, n}\geq \widehat{\lambda}_{2, n}\geq\ldots\geq \widehat{\lambda}_{Kp}$
converge in probability to the respective counterparts $\lambda_1, \ldots, \lambda_{Kp}$.
Hence, 
\begin{displaymath}
\widehat{S}_n \, :=\,  \sum_{l=1}^{Kp} \widehat{\lambda}_{l, n} \chi^2_\ell
\, \Rightarrow\,  S.
\end{displaymath}
Since $H\overline{Q}^{1/2}\neq 0$ we have that $\lambda_1>0$ which implies that~$S$ has a continuous distribution.
Hence
\begin{equation}
\label{eq1}
\sup_x \big| \P\big( \widehat{S}_n \leq x \big) \, -\,  \P\big( S\leq x \big) \big| \, \stackrel{p}{\longrightarrow}\,  0
\end{equation}
see for example \citep[lemma 2.11]{van2000asymptotic}.
Let $q_{1-\alpha}$ and $\widehat{q}_{1-\alpha, n}$ be the respective $(1-\alpha)$-quantile of $S$ and $\widehat{S}_n$.
Since~$S$ has a strictly positive and continuous density on $(0, \infty)$ we obtain from (\ref{eq1}) that
\begin{displaymath}
\widehat{q}_{1-\alpha.n} \, \stackrel{p}{\longrightarrow}\,  q_{1-\alpha}.
\end{displaymath}
Let $Y_n:=nT_n^\top \widehat{\overline{V}}_n^\top  \widehat{\overline{V}}_nT_n - \widehat{q}_{1-\alpha, n} + q_{1-\alpha}$.
Then $Y_n\Rightarrow S$,  and we obtain that
\begin{eqnarray*}
\lefteqn{ \P\big( nT_n^\top \widehat{\overline{V}}_n^\top  \widehat{\overline{V}}_nT_n > \widehat{q}_{1-\alpha, n} \big) } \\
& = & \P\big( Y_n > q_{1-\alpha} \big) \, \mathop{\longrightarrow}\limits_{n\to\infty}\,  \P\big( S > q_{1-\alpha} \big)
\, =\,  \alpha.
\end{eqnarray*}
\hfill $\square$

\subsubsection{Proof of Lemma~\ref{lem::local_alt_power}}
It follows from \eqref{eq::asymptoticDecomposition} and \eqref{eq::alt_local} that
\begin{displaymath}
\sqrt{n} \widehat{\theta}_{n, k} \, =\,  V_k^{-1} U_{n, k} \, +\,  \sqrt{n} \theta_0 \, +\,  \Delta_k \, +\,  \varepsilon_{n, k}
\end{displaymath}
and
\begin{eqnarray*}
\sqrt{n} \widehat{\theta}_n & = & V^{-1} \sum_{j=1}^K V_j \sqrt{n} \widehat{\theta}_{n, j} \, +\,  o_P(1) \\
& = & V^{-1} \sum_{j=1}^K U_{n, j} \, +\,  \sqrt{n}\theta_0 \, +\,  V^{-1} \sum_{j=1}^K V_j \Delta_j \, +\,  o_P(1).
\end{eqnarray*}
This implies
\begin{eqnarray*}
\lefteqn{ V \sqrt{n} \big( \widehat{\theta}_n \, -\,  \widehat{\theta}_{n, k} \big) } \\
& = & \sum_{j=1}^K U_{n, j} \, -\,  V V_k^{-1} U_{n, k}
\, +\,  \sum_{j=1}^K V_j \Delta_j \, -\,  V\Delta_k \, +\,  o_P(1) 
\end{eqnarray*}
Therefore, 
\begin{displaymath}
\sqrt{n} \overline{V} T_n \, =\,  H \overline{U}_n \, +\,  \overline{W} \, +\,  o_P(1), 
\end{displaymath}
where $\overline{W}=(\overline W_1^\top , \ldots, \overline W_K^\top )^\top $.\\
It follows by the continuous mapping theorem that
\begin{displaymath}
nT_n^\top \widehat{\overline{V}}_n^\top  \widehat{\overline{V}}_nT_n = nT_n^\top \overline{V}^\top  \overline{V}T_n + o_p(1)
\, \Rightarrow\,  \big( H\overline{Q}^{1/2}Z + \overline{W} \big)^\top  \big( H\overline{Q}^{1/2}Z + \overline{W} \big), 
\end{displaymath}
where $Z\sim N(0, I_{Kp})$.

To see what the limit distribution is,  note first that 
$H\overline{Q}^{1/2}Z \stackrel{d}{=} \big(H\overline{Q}H^\top \big)^{1/2}Z\overset{d}{=}  N(0, H\overline{Q}H^\top )$.
The matrix $H\overline{Q}H^\top $ has the same set $\{\lambda_1, \ldots, \lambda_{Kp}\}$ of eigenvalues as $\overline{Q}^{1/2}H^\top H\overline{Q}^{1/2}$
and a spectral decomposition yields that
\begin{displaymath}
H\overline{Q}H^\top  \, =\,  O {\rm diag}(\lambda_1, \ldots, \lambda_{Kp}) O^\top 
\end{displaymath}
and,  hence, 
\begin{displaymath}
\big( H\overline{Q}H^\top  \big)^{1/2} \, =\,  O {\rm diag}(\sqrt{\lambda_1}, \ldots, \sqrt{\lambda_{Kp}}) O^\top , 
\end{displaymath}
where $O^\top O=OO^\top =I_{Kp}$.
We obtain that
\begin{eqnarray*}
\lefteqn{ \big( H\overline{Q}^{1/2}Z + \overline{W} \big)^\top  \big( H\overline{Q}^{1/2}Z + \overline{W} \big) } \\
& \stackrel{d}{=} & \big( O {\rm diag}(\sqrt{\lambda_1}, \ldots, \sqrt{\lambda_{Kp}}) O^\top Z \, +\,  OO^\top  \overline{W} \big)^\top 
\big( O {\rm diag}(\sqrt{\lambda_1}, \ldots, \sqrt{\lambda_{Kp}}) O^\top Z \, +\,  OO^\top  \overline{W} \big) \\
& = & \big( {\rm diag}(\sqrt{\lambda_1}, \ldots, \sqrt{\lambda_{Kp}}) O^\top Z \, +\,  O^\top  \overline{W} \big)^\top 
\big( {\rm diag}(\sqrt{\lambda_1}, \ldots, \sqrt{\lambda_{Kp}}) O^\top Z \, +\,  O^\top  \overline{W} \big) \\
& = & \sum_{j=1}^{Kp} \big( \sqrt{\lambda_j} Z_j^* \, +\,  \delta_j \big)^2 \\
& \stackrel{d}{=} & \sum_{j\colon\,  \lambda_j>0} \lambda_j \chi_j^2(\delta_j^2/\lambda_j) \, +\,  \sum_{j\colon\,  \lambda_j=0} \delta_j^2, 
\end{eqnarray*}
where $(\delta_1, \ldots, \delta_{Kp})^\top :=O^\top \overline{W}$ and $(Z_1^*, \ldots, Z_{Kp}^*)^\top :=O^\top Z\overset{d}{=}  N(0, I_{Kp})$.
For $j$ with $\lambda_j>0$,  $\chi_j^2(\delta_j^2/\lambda_j)$ are independent non-central $\chi_1^2$ with non-centrality parameter $\delta^2/\lambda$.
\hfill $\square$

\subsubsection{Proof of the lemma \ref{lem::CoC-algorithmOracle}}
   Let $\mathcal E_{1, n}$ be the event "at the sample sizes $n, $ the global homogeneity equality $\theta_{0, 1} = \cdots = \theta_{0, K}$ is rejected." Since $L>1,  \lim_{n \rightarrow \infty}\P(\mathcal E_{1, n}^c) = 0.$ 

    We now work on  $\mathcal E_{1, n}$.
    Let the later of \(\{i, j\}\),  say $j$ without  loss of generality,  be inserted when there are \(d \ge 1\) pure \(\mathcal P_q\)-clusters; label the one containing \(i\) as “1” without loss of generality. Let $p_{r, n}$ be the p-value of the $r-th$ test of the fusion   between $i$ and the $r-$th \(\mathcal P_q\)-clusters,  $r \leq d.$
Let
\(
A_n:=\{p_{1, n}\ge \alpha\}, \quad 
N:=\sum_{r=1}^d  1\{p_{r, n}\ge \alpha\}\in\{0, 1, \dots, d\}.
\)
We have, 
\(
\P(\text{pick cluster 1}\mid d)
=\P(A_n\mid R)\, \P(\text{pick cluster "1"}\mid A_n,  d) \leq \P(A_n\mid d), 
\)
since \(\P(\text{pick cluster "1"}\mid A_n^{c},  d)=0\).
Also,  $\lim_{n \rightarrow \infty}{\P(A_n\mid d)} =  1-\alpha, $ since the asymptotic level of the test is $\alpha.$
  Then,  for sufficiently large $n, $
\[
 \P\big(\widehat\ell_{n, i}=\widehat\ell_{n, j}\big)\ \le\ 1-\alpha.
\]
The second part follows since the power of the test of the proposition \ref{prop::integration} under the fixed non-local alternative tends to $1.$
\hfill $\square$

\begin{lem} 
\label{lem::growth}
Under Assumptions \textbf{A1}--\textbf{A6}, fix two disjoint clusters $S_i$ and $S_j$
($S_i\cap S_j=\emptyset$). Consider the test $J(i,j)$ that rejects for large values of
$Q_{n,r}$ using a deterministic cutoff $q>0$ (independent of $i$ and $j$).

There exists a constant $\bar\alpha=\bar\alpha(q)\in(0,1)$ such that under $H_0$,
\[
\limsup_{n\to\infty}\ \alpha^*_{J(i,j),n}(1)\ \le\ \bar\alpha
\qquad\text{in probability},
\]
and under any fixed alternative $\Delta_{i,j}\neq 0$,
\[
\beta^*_{J(i,j),n}(1)\xrightarrow{p}0 .
\]
\end{lem}

\subsubsection*{Proof of the lemma \ref{lem::growth}}
Work at a fixed bootstrap round $r$.
Set
\(
\widehat V^{(i)}_n:=\sum_{k\in S_i}\widehat V_{n, k}, \qquad
\widehat V^{(j)}_n:=\sum_{k\in S_j}\widehat V_{n, k}, \qquad
\widehat V^{(i, j)}_n:=\widehat V^{(i)}_n+\widehat V^{(j)}_n.
\)
Define the AEE  estimators by
\(
\widehat V^{(i)}_n\, \widehat{\widehat\theta}^{(i, r)}_n=\sum_{k\in S_i}\widehat V_{n, k}\, \widehat\theta^{(r)}_{n, k}, \qquad
\widehat V^{(j)}_n\, \widehat{\widehat\theta}^{(j, r)}_n=\sum_{k\in S_j}\widehat V_{n, k}\, \widehat\theta^{(r)}_{n, k}, 
\)
and the non–bootstrap versions $\widehat{\widehat\theta}^{(i)}_n, \widehat{\widehat\theta}^{(j)}_n$ with $\widehat\theta_{n, k}$ in place of $\widehat\theta^{(r)}_{n, k}$. The combined AEE satisfies
\(
\widehat V^{(i, j)}_n\, \widehat{\widehat\theta}^{(i, j, r)}_n
= \widehat V^{(i)}_n \widehat{\widehat\theta}^{(i, r)}_n + \widehat  V^{(j)}_n \widehat{\widehat\theta}^{(j, r)}_n.
\)
Define the stacked test vector
$$
\widehat T^{(i, j, r)}_n
:=
\begin{pmatrix}
\widehat{\widehat\theta}^{(i, j, r)}_n-\widehat{\widehat\theta}^{(i, r)}_n\\[1mm]
\widehat{\widehat\theta}^{(i, j, r)}_n-\widehat{\widehat\theta}^{(j, r)}_n
\end{pmatrix}\in\mathbb R^{2p}, 
$$
the cluster half–difference
$
\widehat\Delta^{\mathrm{cl}}_{i, j}:=\tfrac12\big(\widehat{\widehat\theta}^{(i)}_n-\widehat{\widehat\theta}^{(j)}_n\big), 
$
and the $2p$–vector of centrewise bootstrap differences (with $\sqrt n$ inside)
$$
B_n^{(i, j, r)}:=
\begin{pmatrix}
\displaystyle\sum_{k\in S_i}\sqrt n\, \widehat V_{n, k}\big(\widehat\theta^{(r)}_{n, k}-\widehat\theta_{n, k}\big)\\[2mm]
\displaystyle\sum_{k\in S_j}\sqrt n\, \widehat V_{n, k}\big(\widehat\theta^{(r)}_{n, k}-\widehat\theta_{n, k}\big)
\end{pmatrix}.
\text{ Also set the shift }
s^{(\mathrm{cl})}_{i, j}:=
\begin{pmatrix}
-2\sqrt n\, \widehat V^{(j)}_n\, \widehat\Delta^{\mathrm{cl}}_{i, j}\\[1mm]
\phantom{-}2\sqrt n\, \widehat V^{(i)}_n\, \widehat\Delta^{\mathrm{cl}}_{i, j}
\end{pmatrix}.
$$
A short add–subtract argument using
\[
\widehat V^{(i, j)}_n\big(\widehat{\widehat\theta}^{(i, j, r)}_n-\widehat{\widehat\theta}^{(i, r)}_n\big)
=\widehat V^{(j)}_n\big(\widehat{\widehat\theta}^{(j, r)}_n-\widehat{\widehat\theta}^{(i, r)}_n\big), 
\quad
\widehat V^{(i, j)}_n\big(\widehat{\widehat\theta}^{(i, j, r)}_n-\widehat{\widehat\theta}^{(j, r)}_n\big)
=\widehat V^{(i)}_n\big(\widehat{\widehat\theta}^{(i, r)}_n-\widehat{\widehat\theta}^{(j, r)}_n\big), 
\]
and
\(
\sqrt n\, \widehat V^{(i)}_n\big(\widehat{\widehat\theta}^{(i, r)}_n-\widehat{\widehat\theta}^{(i)}_n\big)
=\sum_{k\in S_i}\sqrt n\, \widehat V_{n, k}\big(\widehat\theta^{(r)}_{n, k}-\widehat\theta_{n, k}\big), 
\)
together with $I_p-\widehat V^{(i, j)}_n(\widehat V^{(i)}_n)^{-1}=-\widehat V^{(j)}_n(\widehat V^{(i)}_n)^{-1}$ and its counterpart for $j$,  yields the   stacked identity
\begin{equation}\label{eq:stacked-identity}
\sqrt n\, \widehat V^{(i, j)}_n\, \widehat T^{(i, j, r)}_n
=
\begin{pmatrix}
-\widehat V^{(j)}_n(\widehat V^{(i)}_n)^{-1} & I_p\\[1mm]
I_p & -\widehat V^{(i)}_n(\widehat V^{(j)}_n)^{-1}
\end{pmatrix}
B_n^{(i, j, r)}
\;+\; s^{(\mathrm{cl})}_{i, j}.
\end{equation}
For  fixed bootstrap round $r$, the quadratic form statistic satisfies
\(
Q_{n,r}
=\big\|\,M_n\,B_n^{(i,j,r)}+s^{(\mathrm{cl})}_{i,j}\,\big\|^2,
\quad
s^{(\mathrm{cl})}_{i,j}=2\sqrt n\,\widehat c_n^{(i,j)},
\)
where $M_n$ and $s^{(\mathrm{cl})}_{i,j}$ are $\mathcal  Z_n$--measurable
(hence fixed conditional on the data), while $B_n^{(i,j,r)}$ carries the bootstrap randomness.

\paragraph*{Size under $H_0$ is bounded away from $1$.}
Under  $H_0$,  $\sqrt n\,\widehat\Delta^{\mathrm{cl}}_{i,j}=O_p(1)$.
Under \textbf{A5},
it follows that
\(
\|s^{(\mathrm{cl})}_{i,j}\| = O_p(1).
\)
Fix a deterministic cutoff $q>0$ and write $\alpha^*_{J(i,j),n}(1):=\P(Q_{n,r}>q\mid \mathcal Z_n)$, where
$Q_{n,r}=\|M_nB_n^{(i,j,r)}+s^{(\mathrm{cl})}_{i,j}\|^2$.
Let $\eta\in(0,1)$ be arbitrary. By tightness of $s^{(\mathrm{cl})}_{i,j}$, there exists $R=R(\eta)$ such that
$\P(\|s^{(\mathrm{cl})}_{i,j}\|>R)\le \eta$ for all large $n$.
Since $s^{(\mathrm{cl})}_{i,j}$ is $\mathcal Z_n$--measurable, we have the decomposition
\[
\alpha^*_{J(i,j),n}(1)
\le \mathbf 1\{\|s^{(\mathrm{cl})}_{i,j}\|\le R\}\,\P(Q_{n,r}>q\mid \mathcal Z_n)
+\mathbf 1\{\|s^{(\mathrm{cl})}_{i,j}\|>R\}.
\]
On $\{\|s^{(\mathrm{cl})}_{i,j}\|\le R\}$, set $u:=s^{(\mathrm{cl})}_{i,j}$ so that $\|u\|\le R$.
By \textbf{A5--A6}, conditionally on $\mathcal Z_n$, we have $B_n^{(i,j,r)}\Rightarrow \mathcal N(0,\overline Q_{i,j})$ and
$M_n\xrightarrow{p}\overline V_{i,j}$, so by conditional Slutsky (in probability),
$M_nB_n^{(i,j,r)}\Rightarrow G_{i,j}:=\overline V_{i,j}Z$ with $Z\sim\mathcal N(0,\overline Q_{i,j})$.
The limit is non-degenerate, since $\Sigma_{i,j}:=\overline V_{i,j}\overline Q_{i,j}\overline V_{i,j}^\top$ is positive definite.
Then $G_{i,j}$ has a strictly positive continuous density on $\R^{2p}$ and the map
$u\mapsto \P(\|G_{i,j}+u\|\le \sqrt q)$ is continuous. Since $\{u:\|u\|\le R\}$ is compact, we may define
$\varepsilon_{R,q}:=\inf_{\|u\|\le R}\P(\|G_{i,j}+u\|\le \sqrt q)>0$.
Therefore, on $\{\|s^{(\mathrm{cl})}_{i,j}\|\le R\}$,
\[
\P(Q_{n,r}>q\mid \mathcal Z_n)
=\P(\|M_nB_n^{(i,j,r)}+u\|>\sqrt q\mid \mathcal Z_n)
\le 1-\varepsilon_{R,q}+o_p(1),
\]
and inserting this bound yields
\(
\alpha^*_{J(i,j),n}(1)
\le \mathbf 1\{\|s^{(\mathrm{cl})}_{i,j}\|\le R\}\,(1-\varepsilon_{R,q}+o_p(1))
+\mathbf 1\{\|s^{(\mathrm{cl})}_{i,j}\|>R\}.
\)
In particular, for any $\delta\in(0,\varepsilon_{R,q})$,
\[
\P\!\left(\alpha^*_{J(i,j),n}(1)>1-\varepsilon_{R,q}+\delta\right)
\le \P(\|s^{(\mathrm{cl})}_{i,j}\|>R)+o(1)\le \eta+o(1),
\]
so $\alpha^*_{J(i,j),n}(1)$ is bounded away from $1$ with probability at least $1-\eta$ for all large $n$.
\paragraph*{Power under fixed alternatives.}
Assume $\Delta_{i,j}\neq 0$ (fixed non-local alternative). Then
$\widehat\Delta^{\mathrm{cl}}_{i,j}\xrightarrow{p}\Delta^{\mathrm{cl}}_{i,j}\neq 0$ under \textbf{A5},
so
\(
\|s^{(\mathrm{cl})}_{i,j}\| = 2\sqrt n\,\|\widehat c_n^{(i,j)}\| \xrightarrow{p}\infty.
\)
Meanwhile, by \textbf{A5--A6}, conditionally on the data,
$M_nB_n^{(i,j,r)}=O_p(1)$ (in probability). Therefore,
\(
Q_{n,r}=\|M_nB_n^{(i,j,r)}+s^{(\mathrm{cl})}_{i,j}\|^2 \xrightarrow{p}\infty,
\)
hence for any fixed $q>0$,
\[
\beta^*_{J(i,j),n}(1)
=\mathbb P(Q_{n,r}\le q\mid \mathcal Z_n)\xrightarrow{p}0.
\]
This proves the claim. 
\hfill$\square$

\subsubsection{Proof of Theorem~\ref{th:CoC-algorithmOracleIterations}}
Let $C=\big\{\{1\}, \{2\}, \ldots\{K\}\big\}$ be the original partition, 
$\widehat{C}_n^{(0)}$ be the partition after the one-shot CoC-Algorithm (consisting of the
homogeneity test according to Proposition \ref{cor::testHomogemeity} and the Cochran-type test
according to Proposition \ref{prop::integration}). 
Furthermore,  denote by $\widehat{C}_n^{(r)}$ the partition obtained after $r$ bootstrap loops. 
In what follows we show that
\begin{itemize}
\item[(i)\quad] The probability that any inhomogeneous pair $(k, i)$ is merged tends to zero as $n\to\infty$.
\item[(ii)\quad] The probability that $\widehat{C}_n^{(R(n))}=\mathcal{P}$ tends to one as $n\to\infty$.
\end{itemize}

{\sl Proof of (i).}\\
First,  since the tests in Propositions \ref{cor::testHomogemeity} and \ref{prop::integration} are consistent,
and by the same arguments used in the proof of Lemma \ref{lem::growth},  we conclude that in the first
step of the multi-round CoC-algorithm, the probability that any inhomogeneous pair \((k, i)\) is
merged tends to zero as \(n\to\infty\).

Let $(k, i)$ be an inhomogeneous pair and let
\begin{displaymath}
J_{k, i} \, =\,  \big\{ \{S_1, S_2\}\colon \; k\in S_1, \,  i\in S_2, \,  S_1, S_2\subseteq\{1, 2, \ldots, K\}, \,  S_1\cap S_2=\emptyset \big\}
\end{displaymath}
be the collection of pairs of clusters whose union leads to the event that $k$ and $i$ fall in the same cluster,  each set being a cluster of homogeneous centres.
We have
 \begin{eqnarray*}
\lefteqn{ \P\big( \mbox{ $k$ and $i$ are combined } \big) } \\
& \leq & \E \sum_{\{S_1, S_2\}\in J_{k, i}} \P\big( \mbox{ $S_1$ and $S_2$ are combined } \big) \\
& \leq & \E \left[ \underbrace{\beta^*_{J_{k, i}, n}(0)}_{\text{1st global test}}
+   \underbrace{|J_{k, i}|\beta^*_{J(k, i), n}(0)}_{\text{integration tests inside one-shot CoC}}
\, +\,   \underbrace{\sum_{r=1}^{R(n)}   |J_{k, i}|\beta^*_{J(i, j), n}(r)}_{\text{successive rounds of CoC.}}\right] \\
& = & \E \beta^*_{J_{k, i}, n}(0) +  \E |J_{k, i}|(R(n) + 1)\beta^*_{J(k, i), n}(1) \\ 
& \le  &  C    \E   R(n) \max_{S_i,  S_j} \max_{i, j} \beta^*_{J(i, j), n}(1)
\, \mathop{\longrightarrow}_{n\to\infty}\,  0
\end{eqnarray*}
where $C$ is a positive constant and the limit is obtained by the Lebesgue dominated convergence theorem
under assumption \textbf{(A7)}.  Since the number of inhomogeneous pairs is finite we obtain that
\begin{displaymath}
\P\big( \mbox{ at least one inhomogeneous pair is combined } \big) \, \mathop{\longrightarrow}_{n\to\infty}\,  0.
\end{displaymath}

{\sl Proof of (ii).}\\
By part~(i), up to an event of probability $o(1)$, no inhomogeneous pair is ever merged throughout
all $R(n)$ rounds.  It therefore suffices to show that every homogeneous pair $(k,j)$ belonging to
the same true cluster $\mathcal{P}_q$ is eventually merged with probability tending to one.

Fix such a homogeneous pair $(k,j)$.  Let $J$ denote any integration test whose two sides contain
$k$ and $j$ respectively (at least one such test is evaluated at each round in which $k$ and $j$
belong to different clusters within the algorithm).  By Lemma~\ref{lem::growth}, there exists
$\bar\alpha=\bar\alpha(q)\in(0,1)$ such that
\[
\limsup_{n\to\infty}\,\alpha^*_{J,n}(1)\;\le\;\bar\alpha \qquad\text{in probability.}
\]
Conditionally on the original data $\mathcal{Z}_n$, the $R(n)$ bootstrap resamples are independent,
so the events $\{$test~$J$ rejects at round~$r\}$ are conditionally independent across $r$.
Therefore, the conditional probability that the pair $(k,j)$ fails to merge at every one of the
$R(n)$ rounds satisfies
\[
\P\!\big(\text{$(k,j)$ never merged}\;\big|\;\mathcal{Z}_n\big)
\;\le\;
\bigl(\alpha^*_{J,n}(1)\bigr)^{R(n)}.
\]
On the event $\{\alpha^*_{J,n}(1)\le\bar\alpha\}$, whose probability tends to $1$ by
Lemma~\ref{lem::growth}, this is at most $\bar\alpha^{R(n)}$.  Since $R(n)\to\infty$ and
$\bar\alpha<1$, we have $\bar\alpha^{R(n)}\to 0$.  Hence
\[
\P\!\big(\text{$(k,j)$ never merged}\big)
\;\le\;
\P\!\big(\alpha^*_{J,n}(1)>\bar\alpha\big) + \bar\alpha^{R(n)}
\;\longrightarrow\; 0.
\]
Taking a union bound over the finitely many homogeneous pairs $(k,j)$ that need to be merged
(at most $K(K-1)/2$) yields
\[
\P\!\big(\widehat{C}_n^{(R(n))}\neq\mathcal{P}\big)
\;\le\;
\P\!\big(\text{some inhomogeneous pair is merged}\big)
+\sum_{(k,j)\,\text{homogeneous}}\P\!\big(\text{$(k,j)$ never merged}\big)
\;\longrightarrow\; 0,
\]
which proves~(ii) and completes the proof of the theorem.
\hfill $\square$

\subsubsection{Proof of Lemma~\ref{lem::concentrationV}}
 Fix $k$. Using the triangle inequality,
\begin{align}
\big\|\widehat{V}_{n,k}-V_k\big\|_{\op}
&\le
\left\|
\frac1n\sum_{i=1}^n
\Big(
v_k(Z_{i,k},\widehat{\theta}_{n,k})
-
v_k(Z_{i,k},\theta_{0,k})
\Big)
\right\|_{\op} \nonumber \\
&\qquad
+
\left\|
\frac1n\sum_{i=1}^n v_k(Z_{i,k},\theta_{0,k})
-
\E\!\left[v_k(Z_{1,k},\theta_{0,k})\right]
\right\|_{\op}.
\label{eq::decompV}
\end{align}

\medskip
\noindent\textbf{Step 1: Control of the plug-in term.}
By the triangle inequality and \textnormal{SC1},
\begin{align}
\left\|
\frac1n\sum_{i=1}^n
\Big(
v_k(Z_{i,k},\widehat{\theta}_{n,k})
-
v_k(Z_{i,k},\theta_{0,k})
\Big)
\right\|_{\op}
&\le
\frac1n\sum_{i=1}^n
\big\|v_k(Z_{i,k},\widehat{\theta}_{n,k})-v_k(Z_{i,k},\theta_{0,k})\big\|_{\op}
\nonumber\\
&\le
\left(\frac1n\sum_{i=1}^n w_k(Z_{i,k})\right)\,
\big\|\widehat{\theta}_{n,k}-\theta_{0,k}\big\|.
\label{eq::pluginBound}
\end{align}

Consider the event
$
\mathcal E_{\theta}(n,k)
:=
\left\{
\big\|\widehat{\theta}_{n,k}-\theta_{0,k}\big\|
\le a_k\sqrt{\frac{\log n}{n}}
\right\}.
$
By~\eqref{eq::concentration},
\begin{equation}
\P\!\big(\mathcal E_{\theta}(n,k)\big)=1-o\!\left(n^{-1/2}\right).
\label{eq::EventTheta}
\end{equation}

Next, define
$
\mathcal E_{w}(n,k)
:=
\left\{
\frac1n\sum_{i=1}^n w_k(Z_{i,k})
\le 2\,\E\!\big[w_k(Z_{1,k})\big]
\right\}.
$
Since $(Z_{i,k})_{i=1}^n$ are i.i.d., $\E[w_k(Z_{1,k})]>0$ (by \textnormal{SC2}), and $\E[w_k(Z_{1,k})^s]<\infty$ for some $s>2$ (\textnormal{SC2}),
a standard $s$-moment inequality (Marcinkiewicz--Zygmund inequality) for sums of i.i.d.\ centred variables implies
\[
\E\left|
\sum_{i=1}^n\Big(w_k(Z_{i,k})-\E[w_k(Z_{1,k})]\Big)
\right|^{s}
\le
C_s\,n^{s/2}\,
\E\!\left|w_k(Z_{1,k})-\E[w_k(Z_{1,k})]\right|^{s},
\]
for some constant $C_s>0$ depending only on $s$. Therefore, by Markov's inequality applied
to the centred sum at level $n\,\E[w_k(Z_{1,k})]>0$,
\begin{align}
\P\!\big(\mathcal E_{w}(n,k)^c\big)
&=
\P\!\left(
\frac1n\sum_{i=1}^n w_k(Z_{i,k}) > 2\,\E[w_k(Z_{1,k})]
\right)
\nonumber\\
&\le
\frac{
\E\left|
\sum_{i=1}^n\Big(w_k(Z_{i,k})-\E[w_k(Z_{1,k})]\Big)
\right|^{s}
}{
\big(n\,\E[w_k(Z_{1,k})]\big)^{s}
} \nonumber \\
&
\le
\frac{C_s\,\E|w_k(Z_{1,k})-\E[w_k(Z_{1,k})]|^{s}}{\E[w_k(Z_{1,k})]^{s}}\;n^{-s/2}
=
o\!\left(n^{-1/2}\right),
\label{eq::EventW}
\end{align}
because $s>2$ implies $n^{-s/2}=o(n^{-1/2})$.

Combining \eqref{eq::pluginBound}--\eqref{eq::EventW}, on the event
$\mathcal E_{\theta}(n,k)\cap \mathcal E_w(n,k)$ we have
\begin{equation}
\left\|
\frac1n\sum_{i=1}^n
\Big(
v_k(Z_{i,k},\widehat{\theta}_{n,k})
-
v_k(Z_{i,k},\theta_{0,k})
\Big)
\right\|_{\op}
\le
2\,\E\!\big[w_k(Z_{1,k})\big]\;a_k\sqrt{\frac{\log n}{n}}.
\label{eq::Term1Final}
\end{equation}

\medskip
\noindent\textbf{Step 2: Control of the empirical fluctuation at $\theta_{0,k}$.}
Write
\(
\Delta_{i,k}
:=
v_k(Z_{i,k},\theta_{0,k})-\E\!\left[v_k(Z_{1,k},\theta_{0,k})\right],
\quad i=1,\ldots,n,
\)
so that
\[
\left\|
\frac1n\sum_{i=1}^n v_k(Z_{i,k},\theta_{0,k})
-
\E\!\left[v_k(Z_{1,k},\theta_{0,k})\right]
\right\|_{\op}
=
\left\|
\frac1n\sum_{i=1}^n \Delta_{i,k}
\right\|_{\op}.
\]

Then, for some constants $C_k, D_k>0$,
\begin{equation}
\P\!\left(
\left\|
\frac1n\sum_{i=1}^n \Delta_{i,k}
\right\|_{\op}
>
C_k\sqrt{\frac{\log n}{n}}
\right)
\le
D_k\,n^{-(s-2)/2}(\log n)^{-s/2},
\label{eq::Term2Prob}
\end{equation}
which follows from a (matrix) Fuk--Nagaev type inequality applied to
$\frac1n\sum_{i=1}^n\Delta_{i,k}$.
In particular, if $s\geq 3$, then the right-hand side in~\eqref{eq::Term2Prob} is
$o(n^{-1/2})$.

Consequently, for $s\geq 3$ there exists $c_k>0$ such that
\begin{equation}
\P\!\left(
\left\|
\frac1n\sum_{i=1}^n v_k(Z_{i,k},\theta_{0,k})
-
\E\!\left[v_k(Z_{1,k},\theta_{0,k})\right]
\right\|_{\op}
\le
c_k\sqrt{\frac{\log n}{n}}
\right)
=
1-o\!\left(n^{-1/2}\right).
\label{eq::Term2Final}
\end{equation}

\medskip
\noindent\textbf{Step 3: Conclusion.}
Combining \eqref{eq::decompV}, \eqref{eq::Term1Final}, and \eqref{eq::Term2Final}, we obtain that, on the event
\[
\mathcal E_{\theta}(n,k)\cap \mathcal E_w(n,k)\cap
\left\{
\left\|
\frac1n\sum_{i=1}^n v_k(Z_{i,k},\theta_{0,k})
-
\E\!\left[v_k(Z_{1,k},\theta_{0,k})\right]
\right\|_{\op}
\le
c_k\sqrt{\frac{\log n}{n}}
\right\},
\]
\[
\big\|\widehat{V}_{n,k}-V_k\big\|_{\op}
\le
\Big(2\,\E[w_k(Z_{1,k})]\,a_k+c_k\Big)\sqrt{\frac{\log n}{n}}.
\]
Setting
\(
b_k:=2\,\E\!\big[w_k(Z_{1,k})\big]\;a_k+c_k,
\)
and using \eqref{eq::EventTheta}, \eqref{eq::EventW}, and \eqref{eq::Term2Final} yields
\(
\P\!\left(
\big\|\widehat{V}_{n,k}-V_k\big\|_{\op}
\le
b_k\sqrt{\frac{\log n}{n}}
\right)
=
1-o\!\left(n^{-1/2}\right),
\)
which is~\eqref{eq::concentrationV}.
\hfill$\square$

\subsubsection{Proof of Proposition~\ref{prop::AEE_conc_BE}}
 Subtracting $\theta_{0,V}$ and inserting $\theta_{0,k}$ yields
\begin{align}
\widehat{\theta}_n-\theta_{0,V}
&=
\overline{\widehat V}_n^{-1}\left(\frac1K\sum_{k=1}^K \widehat V_{n,k}(\widehat{\theta}_{n,k}-\theta_{0,k})\right)
+\overline{\widehat V}_n^{-1}\left(\frac1K\sum_{k=1}^K \widehat V_{n,k}(\theta_{0,k}-\theta_{0,V})\right)
\nonumber\\
&=
\overline{\widehat V}_n^{-1}\left(\frac1K\sum_{k=1}^K \widehat V_{n,k}(\widehat{\theta}_{n,k}-\theta_{0,k})\right)
+\overline{\widehat V}_n^{-1}\left(\frac1K\sum_{k=1}^K (\widehat V_{n,k}-V_k)(\theta_{0,k}-\theta_{0,V})\right),
\label{eq::decomp_theta0V}
\end{align}
where in the last step we used the identity
\[
\sum_{k=1}^K V_k(\theta_{0,k}-\theta_{0,V})
=\sum_{k=1}^K V_k\theta_{0,k}-V\theta_{0,V}
=0.
\]
Using the triangle inequality gives
\begin{align}
\|\widehat{\theta}_n-\theta_{0,V}\|
&\le
\big\|\overline{\widehat V}_n^{-1}\big\|_{\op}\;
\frac1K\sum_{k=1}^K \|\widehat V_{n,k}\|_{\op}\,\|\widehat{\theta}_{n,k}-\theta_{0,k}\|
\nonumber\\[-2pt]
&\quad+
\big\|\overline{\widehat V}_n^{-1}\big\|_{\op}\;
\frac1K\sum_{k=1}^K \|\widehat V_{n,k}-V_k\|_{\op}\,\|\theta_{0,k}-\theta_{0,V}\|.
\label{eq::det_bound_agg_theta0V}
\end{align}

\medskip
\textbf{Step 1: Control of $\|\overline{\widehat V}_n^{-1}\|_{\op}$.} We have $\overline V\succ 0.$
By the triangle inequality,
\(
\big\|\overline{\widehat V}_n-\overline V\big\|_{\op}
\le
\frac1K\sum_{k=1}^K \|\widehat V_{n,k}-V_k\|_{\op}.
\)
On the event
\(
\mathcal E_V(n):=\bigcap_{k=1}^K
\left\{\|\widehat V_{n,k}-V_k\|_{\op}\le b_k r_n\right\},
\)
it holds that
\(
\big\|\overline{\widehat V}_n-\overline V\big\|_{\op}
\le
\left(\frac1K\sum_{k=1}^K b_k\right)r_n.
\)
Choose $n$ large enough so that
$
\left(\frac1K\sum_{k=1}^K b_k\right)r_n \le \frac12\,\lambda_{\min}(\overline V).
$
Such an $n_0$ is finite for every fixed configuration of $(b_k, \lambda_{\min}(\overline V))$.
Then Weyl's inequality implies, on $\mathcal E_V(n)$,
\(
\lambda_{\min}(\overline{\widehat V}_n)
\ge
\lambda_{\min}(\overline V)-\|\overline{\widehat V}_n-\overline V\|_{\op}
\ge \frac12\,\lambda_{\min}(\overline V),
\)
hence
\begin{equation}\label{eq::inv_barV_bd_theta0V}
\big\|\overline{\widehat V}_n^{-1}\big\|_{\op}
\le
\frac{2}{\lambda_{\min}(\overline V)}.
\end{equation}
Moreover, by \eqref{eq::concentrationV} and a union bound,
\begin{equation}\label{eq::prob_EV_theta0V}
\P\big(\mathcal E_V(n)\big)=1-o\!\left(n^{-1/2}\right).
\end{equation}

\medskip
\textbf{Step 2: Control of the right-hand side of \eqref{eq::det_bound_agg_theta0V}.}
Define the event
\(
\mathcal E_\theta(n):=\bigcap_{k=1}^K
\left\{\|\widehat{\theta}_{n,k}-\theta_{0,k}\|\le a_k r_n\right\}.
\)
By \eqref{eq::concentration} and a union bound,
\begin{equation}\label{eq::prob_Etheta_theta0V}
\P\big(\mathcal E_\theta(n)\big)=1-o\!\left(n^{-1/2}\right).
\end{equation}
On $\mathcal E_V(n)$, we have $\|\widehat V_{n,k}\|_{\op}\le \|V_k\|_{\op}+b_k r_n$, so on
$\mathcal E_V(n)\cap \mathcal E_\theta(n)$,
\begin{align*}
\frac1K\sum_{k=1}^K \|\widehat V_{n,k}\|_{\op}\,\|\widehat{\theta}_{n,k}-\theta_{0,k}\|
&\le
\frac1K\sum_{k=1}^K (\|V_k\|_{\op}+b_k r_n)\,a_k r_n
=
\left(\frac1K\sum_{k=1}^K a_k\|V_k\|_{\op}\right)r_n+O(r_n^2),
\end{align*}
and also
\(
\frac1K\sum_{k=1}^K \|\widehat V_{n,k}-V_k\|_{\op}\,\|\theta_{0,k}-\theta_{0,V}\|
\le
\left(\frac1K\sum_{k=1}^K b_k\,\|\theta_{0,k}-\theta_{0,V}\|\right)r_n.
\)
Combining these bounds with \eqref{eq::det_bound_agg_theta0V} and \eqref{eq::inv_barV_bd_theta0V}
shows that there exists a constant $C>0$ such that, on $\mathcal E_V(n)\cap\mathcal E_\theta(n)$,
\(
\|\widehat{\theta}_n-\theta_{0,V}\|\le C\,r_n,
\)
which together with \eqref{eq::prob_EV_theta0V}--\eqref{eq::prob_Etheta_theta0V} yields
\eqref{eq::conc_theta_agg}.

\medskip
\textbf{Step 3: Berry--Esseen bound.}

Let $G_1,\ldots,G_K$ be independent random vectors, independent of
$(\widehat{H}_{n,1},\ldots,\widehat{H}_{n,K})$, such that
$G_k\sim \Phi$ for each $k$.
For $j=0,1,\ldots,K$, define
\(
S_j
:=\frac1{\sqrt K}\left(\sum_{k=1}^j \widehat{H}_{n,k}+\sum_{k=j+1}^K G_k\right).
\)
Then $S_K=\widehat{H}_n$ and $S_0\sim\Phi$.
Fix any   set $B\in\mathcal A$. By the triangle inequality and telescoping,
\[
\big|\P(\widehat{H}_n\in B)-\Phi(B)\big|
=
\big|\P(S_K\in B)-\P(S_0\in B)\big|
\le
\sum_{j=1}^K \big|\P(S_j\in B)-\P(S_{j-1}\in B)\big|.
\]
Now fix $j\in\{1,\ldots,K\}$ and set
\(
R_j
:=\frac1{\sqrt K}\left(\sum_{k=1}^{j-1}\widehat{H}_{n,k}+\sum_{k=j+1}^K G_k\right),
\)
so that $S_j=R_j+\frac1{\sqrt K}\widehat{H}_{n,j}$ and
$S_{j-1}=R_j+\frac1{\sqrt K}G_j$.
Using independence of $\widehat{H}_{n,j}$ and $R_j$, and of $G_j$ and $R_j$, we obtain
\begin{align*}
\big|\P(S_j\in B)-\P(S_{j-1}\in B)\big|
&=
\left|
\E\!\left[\P\!\left(\frac1{\sqrt K}\widehat{H}_{n,j}+R_j\in B \,\middle|\,R_j\right)
-\P\!\left(\frac1{\sqrt K}G_j + R_j \in B \,\middle|\,R_j\right)\right]
\right|\\
&\le
\sup_{A\in\mathcal A }
\left|
\P\!\left(\frac1{\sqrt K}\widehat{H}_{n,j}\in A\right)
-\P\!\left(\frac1{\sqrt K}G_j\in A\right)
\right|.
\end{align*}
The last term equals
$
\sup_{A\in\mathcal A }
\left|
\P\!\left(\widehat{H}_{n,j}\in A\right)
-\P\!\left(G_j\in A\right)
\right|.
$
Therefore, using the local Berry--Esseen assumption \eqref{eq::BerryEssen},
\(
\big|\P(S_j\in B)-\P(S_{j-1}\in B)\big|
\le
\sup_{A\in\mathcal A}
\big|\P(\widehat{H}_{n,j}\in A)-\Phi(A)\big|
=
O\!\left(n^{-1/2}\log n\right).
\)
Summing over $j=1,\ldots,K$ yields
\(
\sup_{B\in\mathcal A}
\big|\P(\widehat{H}_n\in B)-\Phi(B)\big|
\le
\sum_{j=1}^K O\!\left(n^{-1/2}\log n\right)
=
O\!\left(n^{-1/2}\log n\right),
\)
where the implicit constant may depend on $K$.
This proves \eqref{eq::BE_theta_agg}.
 \hfill$\square$

\subsubsection{Proof of Theorem~\ref{thm::tail_bounds_WhatV2}}

%---------------------------------------------------------
\medskip
1.  By the triangle inequality,
\begin{equation}\label{eq::tail_What_triangle_pf_V2}
\P(\|\widehat W_n\|\ge t)
\le
\P(\|W_n^\circ\|\ge t/2)
+
\P(\|\widehat W_n-W_n^\circ\|\ge t/2),
\qquad t>0.
\end{equation}
Moreover, by the reverse triangle inequality,
\(
\|\widehat W_n\|
=\|W_n^\circ+(\widehat W_n-W_n^\circ)\|
\ge \|W_n^\circ\|-\|\widehat W_n-W_n^\circ\|,
\)
so that
\(
\{\|W_n^\circ\|\ge 3t/2,\ \|\widehat W_n-W_n^\circ\|\le t/2\}
\subseteq
\{\|\widehat W_n\|\ge t\}.
\)
Hence,
\begin{equation}\label{eq::tail_What_reverse_pf_V2}
\P(\|\widehat W_n\|\ge t)
\ge
\P(\|W_n^\circ\|\ge 3t/2)
-
\P(\|\widehat W_n-W_n^\circ\|\ge t/2).
\end{equation}

 We will show  that

\smallskip
\noindent\textbf{Step A (Gaussian comparison for $W_n^\circ$).}
Throughout Case 1, under $\theta_{0,1}=\cdots=\theta_{0,K}=\theta_0$, define the
auxiliary $Kp$-vector
\[
W_n^\circ
:=
\left(
\begin{array}{c}
V\sqrt n(\widehat\theta_n-\theta_0)\\
\vdots\\
V\sqrt n(\widehat\theta_n-\theta_0)
\end{array}
\right),
\qquad
\|W_n^\circ\|
=
\sqrt K\,\|V\sqrt n(\widehat\theta_n-\theta_0)\|.
\]
This object is distinct from $W_{n,0}$ defined in the theorem statement, which equals $0$
under homogeneity and captures deterministic signal under heterogeneity. The quantity
$W_n^\circ$ is introduced solely as an intermediate step in this proof to isolate the
stochastic part of $\widehat W_n$.

Note that under homogeneity the aggregated estimator satisfies
$\widehat{H}_n = \frac{\sqrt{n}}{\sqrt{K}}\widehat{V}_n(\widehat{\theta}_n - \theta_0)$,
which we use in the computation below.
Since $\overline{\widehat V}_n=\widehat V_n/K$, we can rewrite
\(
\widehat H_n
=
\sqrt{nK}\,\overline{\widehat V}_n(\widehat\theta_n-\theta_0)
=
\frac{\sqrt n}{\sqrt K}\,\widehat V_n(\widehat\theta_n-\theta_0),
\)
hence
\(
V\sqrt n(\widehat\theta_n-\theta_0)=\sqrt K\,V\widehat V_n^{-1}\widehat H_n.
\)
Therefore,
\begin{equation}\label{eq::Wn_relation_pf_V2}
\|W_n^\circ\|=K\,\|V\widehat V_n^{-1}\widehat H_n\|.
\end{equation}

Fix $\eta\in(0,1)$ and define
\(
\mathcal F_V(n,\eta):=\left\{\|\widehat V_n-V\|_{\op}\le \eta\,\lambda_{\min}(V)\right\}.
\)
 Note that for any fixed $\eta\in(0,1)$, the event $\mathcal F_V(n,\eta)$ has probability
tending to $1$ by \eqref{eq::concentrationV} and a union bound; the threshold
$\eta\,\lambda_{\min}(V)$ is bounded away from $0$ for fixed $\eta$, so the required
sample size $n_0(\eta)$ is finite for each $\eta$.
On $\mathcal F_V(n,\eta)$, Weyl's inequality yields
\(
\lambda_{\min}(\widehat V_n)\ge (1-\eta)\lambda_{\min}(V),
\quad
\lambda_{\max}(\widehat V_n)\le \lambda_{\max}(V)+\eta\lambda_{\min}(V).
\)
Since $V$ and $\widehat V_n$ are symmetric positive definite, we obtain the bounds
\[
\|V\widehat V_n^{-1}\|_{\op}
\le
\frac{\lambda_{\max}(V)}{\lambda_{\min}(\widehat V_n)}
\le
\frac{\lambda_{\max}(V)}{(1-\eta)\lambda_{\min}(V)}
=:
c_+(\eta), \text{ and }
\]
\[
s_{\min}(V\widehat V_n^{-1})
\ge
\frac{\lambda_{\min}(V)}{\lambda_{\max}(\widehat V_n)}
\ge
\frac{\lambda_{\min}(V)}{\lambda_{\max}(V)+\eta\lambda_{\min}(V)}
=:
c_-(\eta),
\]
where $c_+(\eta)$ and $c_-(\eta)$ are the same constants as defined in the main text.
Combining with \eqref{eq::Wn_relation_pf_V2}, we obtain on $\mathcal F_V(n,\eta)$ the sandwich bounds
\begin{equation}\label{eq::sandwich_Wn_pf_V2}
Kc_-(\eta)\,\|\widehat H_n\|
\le
\|W_n^\circ\|
\le
Kc_+(\eta)\,\|\widehat H_n\|.
\end{equation}

Let $G\sim\mathcal N(0,I_p)$ and set $\varepsilon_n:=O(n^{-1/2}\log n)$. Applying
Proposition~\ref{prop::AEE_conc_BE} to the set $\{x:\|x\|\le u\}$ yields
\begin{equation}\label{eq::BE_norm_pf_V2}
\big|\P(\|\widehat H_n\|\ge u)-\P(\|G\|\ge u)\big|\le \varepsilon_n,
\qquad u>0.
\end{equation}
Using \eqref{eq::sandwich_Wn_pf_V2} and \eqref{eq::BE_norm_pf_V2}, for any $s>0$ we obtain
\begin{align}
\P(\|W_n^\circ\|\ge s)
&\le
\P\!\left(\|\widehat H_n\|\ge \frac{s}{Kc_+(\eta)}\right)+\P(\mathcal F_V(n,\eta)^c)
\nonumber\\
&\le
\P\!\left(\|G\|\ge \frac{s}{Kc_+(\eta)}\right)+\varepsilon_n+\P(\mathcal F_V(n,\eta)^c),
\label{eq::tail_Wn_upper_pf_V2}\\[4pt]
\P(\|W_n^\circ\|\ge s)
&\ge
\P\!\left(\|\widehat H_n\|\ge \frac{s}{Kc_-(\eta)}\right)-\P(\mathcal F_V(n,\eta)^c)
\nonumber\\
&\ge
\P\!\left(\|G\|\ge \frac{s}{Kc_-(\eta)}\right)-\varepsilon_n-\P(\mathcal F_V(n,\eta)^c).
\label{eq::tail_Wn_lower_pf_V2}
\end{align}

\noindent It remains to bound $\P(\mathcal F_V(n,\eta)^c)$. Since
\(
\|\widehat V_n-V\|_{\op}
=
\Big\|\sum_{k=1}^K(\widehat V_{n,k}-V_k)\Big\|_{\op}
\le
\sum_{k=1}^K\|\widehat V_{n,k}-V_k\|_{\op},
\)
we have, for any fixed $\eta\in(0,1)$,
\(
\mathcal F_V(n,\eta)^c
\subseteq
\bigcup_{k=1}^K
\left\{
\|\widehat V_{n,k}-V_k\|_{\op}
>
\frac{\eta}{K}\lambda_{\min}(V)
\right\}.
\)
Therefore, by a union bound and \eqref{eq::concentrationV},
\begin{equation}\label{eq::FVc_small_pf_V2}
\P(\mathcal F_V(n,\eta)^c)=o(n^{-1/2}).
\end{equation}

%---------------------------------------------------------
\medskip
\noindent\textbf{Step B: control of $\|\widehat W_n-W_n^\circ\|$ and choice of $t_n$.}
From \eqref{eq::concentrationV} and Proposition~\ref{prop::AEE_conc_BE}, for
\begin{equation}\label{eq::Gamma_pf_V2}
\Gamma_n
=
C\sqrt K\Big(\sum_{k=1}^K b_k\Big)\frac{\log n}{\sqrt n}
+
\Big(\|V\|_{\op}+o(1)\Big)\Big(\sum_{k=1}^K a_k^2\Big)^{1/2}\sqrt{\log n},
\end{equation}
we have
\(
\P\big(\|\widehat W_n-W_n^\circ\|\ge \Gamma_n\big)=o(n^{-1/2}).
\)
Hence, for any $t>0$,
\(
\P\big(\|\widehat W_n-W_n^\circ\|\ge t/2\big)
\le
o(n^{-1/2})+\mathbf 1\{\Gamma_n\ge t/2\}.
\)
We now take $t=t_n=u_n\sqrt{\log n}$. Since $(\log n)/\sqrt n=o(\sqrt{\log n})$, the first term in
\eqref{eq::Gamma_pf_V2} is $o(\sqrt{\log n})$, and therefore
\(
\frac{\Gamma_n}{\sqrt{\log n}}
=
\Big(\|V\|_{\op}+o(1)\Big)\Big(\sum_{k=1}^K a_k^2\Big)^{1/2}
+
o(1).
\)
In particular, by the condition on $u_n$, for $n$ large enough,
\begin{equation}\label{eq::Gamma_bound_sqrtlog_V2}
\Gamma_n\le A_0\sqrt{\log n}.
\end{equation}
Combining \eqref{eq::Gamma_bound_sqrtlog_V2} with the condition on $u_n$, we obtain that for $n$ large enough,
\(
\Gamma_n\le \frac{t_n}{2}.
\)
Consequently, for $n$ large enough,
\begin{equation}\label{eq::indicator_vanish_pf_V2}
\mathbf 1\{\Gamma_n\ge t_n/2\}=0,
\qquad
\P\big(\|\widehat W_n-W_n^\circ\|\ge t_n/2\big)=o(n^{-1/2}).
\end{equation}

%---------------------------------------------------------
\medskip
\noindent\textbf{Step C: asymptotic form of the Gaussian tail.}
Let $R:=\|G\|$. Then $R$ has the $\chi_p$ distribution (since $G\sim\mathcal N(0,I_p)$), and
\[
\P(R\ge x)
=
\frac{2^{1-p/2}}{\Gamma(p/2)}
\int_x^\infty r^{p-1}e^{-r^2/2}\,dr
=
c_p\int_x^\infty r^{p-1}e^{-r^2/2}\,dr,
\qquad x>0,
\]
where $c_p=2^{1-p/2}/\Gamma(p/2)$ as defined in the main text.
A standard Laplace-type expansion yields, as $x\to\infty$,
\begin{equation}\label{eq::chi_tail_asymp_pf_V2}
\P(\|G\|\ge x)=c_p\,x^{p-2}e^{-x^2/2}\big(1+o(1)\big).
\end{equation}
Since $t_n\to\infty$, we have
\(
\frac{t_n}{2Kc_+(\eta)}\to\infty,
\qquad
\frac{3t_n}{2Kc_-(\eta)}\to\infty,
\)
and thus \eqref{eq::chi_tail_asymp_pf_V2} applies at both arguments.

%---------------------------------------------------------
\medskip
\noindent\textbf{Conclusion: proof of \eqref{eq::upper_tail_What_tn_V2}--\eqref{eq::lower_tail_What_tn_V2}.}
Applying \eqref{eq::tail_What_triangle_pf_V2} with $t=t_n$, then \eqref{eq::tail_Wn_upper_pf_V2} with $s=t_n/2$,
and using \eqref{eq::FVc_small_pf_V2} and \eqref{eq::indicator_vanish_pf_V2}, yields
\begin{align*}
\P(\|\widehat W_n\|\ge t_n)
&\le
\P\!\left(\|G\|\ge \frac{t_n}{2Kc_+(\eta)}\right)
+
O(n^{-1/2}\log n)
+
o(n^{-1/2})\\
&=
c_p\left(\frac{t_n}{2Kc_+(\eta)}\right)^{p-2}
\exp\!\left(-\frac{t_n^2}{8K^2c_+(\eta)^2}\right)
\Big(1+o(1)\Big)
+
O(n^{-1/2}\log n)
+
o(n^{-1/2}),
\end{align*}
where we used \eqref{eq::chi_tail_asymp_pf_V2} in the second line. This proves \eqref{eq::upper_tail_What_tn_V2}.
Similarly, applying \eqref{eq::tail_What_reverse_pf_V2} with $t=t_n$, then \eqref{eq::tail_Wn_lower_pf_V2} with $s=3t_n/2$,
and using \eqref{eq::FVc_small_pf_V2} and \eqref{eq::indicator_vanish_pf_V2}, yields
\begin{align*}
\P(\|\widehat W_n\|\ge t_n)
&\ge
\P\!\left(\|G\|\ge \frac{3t_n}{2Kc_-(\eta)}\right)
-
O(n^{-1/2}\log n)
-
o(n^{-1/2})\\
&=
c_p\left(\frac{3t_n}{2Kc_-(\eta)}\right)^{p-2}
\exp\!\left(-\frac{9t_n^2}{8K^2c_-(\eta)^2}\right)
\Big(1+o(1)\Big)
-
O(n^{-1/2}\log n)
-
o(n^{-1/2}),
\end{align*}
where we used \eqref{eq::chi_tail_asymp_pf_V2} again. This proves \eqref{eq::lower_tail_What_tn_V2}.
%=========================================================

{2. We follow the same lines of proof as in 1.}

\paragraph*{Decomposition of $\widehat W_n-W_{n,0}$.}
For each $k\in\{1,\ldots,K\}$,
$$
\widehat V_n\sqrt n(\widehat\theta_n-\widehat\theta_{n,k})
=
\widehat V_n\sqrt n(\widehat\theta_n-\theta_{0,V})
-\widehat V_n\sqrt n(\widehat\theta_{n,k}-\theta_{0,k})
+\widehat V_n\sqrt n(\theta_{0,V}-\theta_{0,k}),
$$
hence,
\begin{align}
&\widehat V_n\sqrt n(\widehat\theta_n-\widehat\theta_{n,k})
-
V\sqrt n(\theta_{0,V}-\theta_{0,k})
\nonumber\\
&\qquad=
\widehat V_n\sqrt n(\widehat\theta_n-\theta_{0,V})
-\widehat V_n\sqrt n(\widehat\theta_{n,k}-\theta_{0,k})
+(\widehat V_n-V)\sqrt n(\theta_{0,V}-\theta_{0,k}).
\label{eq::block_remainder_case2}
\end{align}
Stacking \eqref{eq::block_remainder_case2} over $k=1,\ldots,K$ yields
\begin{equation}\label{eq::W_hat_minus_W0_case2}
\widehat W_n-W_{n,0}
=
\left(
\begin{array}{c}
\widehat V_n\sqrt n(\widehat\theta_n-\theta_{0,V})
-\widehat V_n\sqrt n(\widehat\theta_{n,1}-\theta_{0,1})
+(\widehat V_n-V)\sqrt n(\theta_{0,V}-\theta_{0,1})
\\
\vdots
\\
\widehat V_n\sqrt n(\widehat\theta_n-\theta_{0,V})
-\widehat V_n\sqrt n(\widehat\theta_{n,K}-\theta_{0,K})
+(\widehat V_n-V)\sqrt n(\theta_{0,V}-\theta_{0,K})
\end{array}
\right).
\end{equation}

%=========================================================
\paragraph*{Deterministic bound for $\|\widehat W_n-W_{n,0}\|$.}

Using \eqref{eq::W_hat_minus_W0_case2} and the triangle inequality,
\begin{align}
\|\widehat W_n-W_{n,0}\|
&\le
\left\|
\left(
\begin{array}{c}
\widehat V_n\sqrt n(\widehat\theta_n-\theta_{0,V})\\
\vdots\\
\widehat V_n\sqrt n(\widehat\theta_n-\theta_{0,V})
\end{array}
\right)\right\|
+
\left\|
\left(
\begin{array}{c}
\widehat V_n\sqrt n(\widehat\theta_{n,1}-\theta_{0,1})\\
\vdots\\
\widehat V_n\sqrt n(\widehat\theta_{n,K}-\theta_{0,K})
\end{array}
\right)\right\|
\nonumber\\
&\quad+
\left\|
\left(
\begin{array}{c}
(\widehat V_n-V)\sqrt n(\theta_{0,V}-\theta_{0,1})\\
\vdots\\
(\widehat V_n-V)\sqrt n(\theta_{0,V}-\theta_{0,K})
\end{array}
\right)\right\|
\nonumber\\
&=
\sqrt K\,\|\widehat V_n\|_{\op}\,\sqrt n\,\|\widehat\theta_n-\theta_{0,V}\|
+
\|\widehat V_n\|_{\op}
\Big(\sum_{k=1}^K n\|\widehat\theta_{n,k}-\theta_{0,k}\|^2\Big)^{1/2}
\nonumber\\
&\quad+
\|\widehat V_n-V\|_{\op}\,\sqrt n\,
\Big(\sum_{k=1}^K \|\theta_{0,V}-\theta_{0,k}\|^2\Big)^{1/2}.
\label{eq::det_bound_W_hat_minus_W0_case2}
\end{align}

%=========================================================
\paragraph*{Concentration bound for $\|\widehat W_n-W_{n,0}\|$.}

From \eqref{eq::conc_theta_agg} (aggregated estimator), \eqref{eq::concentration}
(local estimators), and \eqref{eq::concentrationV} (sensitivity matrices),
\begin{equation}\label{eq::W_hat_minus_W0_conc_case2}
\P\!\left(\|\widehat W_n-W_{n,0}\|\le \Gamma_{n,\mathrm{alt}}\right)
=1-o(n^{-1/2}),
\end{equation}
where
\begin{align*}
\Gamma_{n,\mathrm{alt}}
&:=
\Big(\|V\|_{\op}+\Big(\sum_{k=1}^K b_k\Big)r_n\Big)
\Big(\sqrt K\,C+\Big(\sum_{k=1}^K a_k^2\Big)^{1/2}\Big)\sqrt{\log n}
\nonumber\\
&\quad+
\Big(\sum_{k=1}^K b_k\Big)\sqrt{\log n}\,
\Big(\sum_{k=1}^K \|\theta_{0,V}-\theta_{0,k}\|^2\Big)^{1/2}.
\end{align*}
In particular, $\Gamma_{n,\mathrm{alt}}=O(\sqrt{\log n})$.

%=========================================================
\paragraph*{A lower bound for $\P(\|\widehat W_n\|\le t_n)$.}
By the triangle inequality,
\(
\|\widehat W_n\|
\le
\|W_{n,0}\|+\|\widehat W_n-W_{n,0}\|.
\)
Therefore,
$\Big\{\|\widehat W_n-W_{n,0}\|\le t_n-\|W_{n,0}\|\Big\}
\subseteq
\Big\{\|\widehat W_n\|\le t_n\Big\}.$
Since $t_n-\|W_{n,0}\|=u_n\sqrt{\log n}$ by
$t_n  = \sqrt n\,\Delta_V+u_n\sqrt{\log n}$, we obtain
\begin{equation}\label{eq::lower_prob_start_case2}
\P(\|\widehat W_n\|\le t_n)
\ge
\P\!\left(\|\widehat W_n-W_{n,0}\|\le u_n\sqrt{\log n}\right).
\end{equation}

Now recall \eqref{eq::W_hat_minus_W0_conc_case2}. Since $\Gamma_{n,\mathrm{alt}}=O(\sqrt{\log n})$,
there exists a deterministic sequence $A_n>0$ such that
$\Gamma_{n,\mathrm{alt}}\le A_n\sqrt{\log n}$
and
$A_n=O(1).$
Hence, if $u_n\ge A_n$ (which holds for all $n$ large enough as soon as $\lim\inf u_n > \bar A_1$), then
\[
\{\|\widehat W_n-W_{n,0}\|\le \Gamma_{n,\mathrm{alt}}\}
\subseteq
\{\|\widehat W_n-W_{n,0}\|\le u_n\sqrt{\log n}\}.
\]
Combining this inclusion with \eqref{eq::W_hat_minus_W0_conc_case2} and \eqref{eq::lower_prob_start_case2} yields,
for all $n$ large enough,
\begin{equation}\label{eq::final_lower_bound_case2}
\P(\|\widehat W_n\|\le t_n)
\ge
\P\!\left(\|\widehat W_n-W_{n,0}\|\le \Gamma_{n,\mathrm{alt}}\right)
=
1-o(n^{-1/2}).
\end{equation}

\medskip
\noindent\textbf{Conclusion.}
Under heterogeneity, with the threshold
$t_n=\sqrt n\,\Delta_V+u_n\sqrt{\log n}$, we obtain the bound
\(
\P(\|\widehat W_n\|\le t_n)\ge 1-o(n^{-1/2}).
\)
Since $\Gamma_{n,\mathrm{alt}}=O(\sqrt{\log n})=o(\sqrt n)$ and $\Delta_V>0$, we have
\(
\frac{\Gamma_{n,\mathrm{alt}}}{\sqrt n\,\Delta_V}\to 0.
\)
Hence for any fixed $\varepsilon\in(0,1)$, for $n$ large enough,
\(
(1-\varepsilon)\sqrt n\,\Delta_V
\le
\sqrt n\,\Delta_V-\Gamma_{n,\mathrm{alt}},
\quad
(1+\varepsilon)\sqrt n\,\Delta_V
\ge
\sqrt n\,\Delta_V+\Gamma_{n,\mathrm{alt}}.
\)
Therefore,
\begin{align*}
\P\!\left(\|\widehat W_n\|\le (1-\varepsilon)\sqrt n\,\Delta_V\right)
&=
o(n^{-1/2}),\quad
 \P\!\left(\|\widehat W_n\|\le (1+\varepsilon)\sqrt n\,\Delta_V\right)
=
1-o(n^{-1/2}).
 \end{align*}
\hfill $\square$

\paragraph{Proof of Theorem~\ref{theo:boot_merge_split_control}}
The proof follows the same three-step structure
(Gaussian comparison, remainder control, $\chi_p$ tail asymptotics)
as the proof of Theorem~\ref{thm::tail_bounds_WhatV2}, under the
following replacements throughout: population quantities $V_k$,
$\theta_{0,k}$, $c_\pm(\eta)$, $\Delta_V$, $\bar A_0$, $\bar A_1$,
$\Gamma_{n,\mathrm{alt}}$ are replaced by their sample analogues
$\widehat V_{n,k}$, $\widehat\theta_{n,k}$, $c_\pm^*(\eta)$,
$\widehat\Delta_V^{(i,j)}$, $\bar A_0^*$, $\bar A_1^*$,
$\Gamma_{n,\mathrm{alt}}^*$; assumptions (A8)--(A9) are replaced by
(A10)--(A11); and $O(\cdot)$, $o(\cdot)$ are replaced by
$O_p(\cdot)$, $o_p(\cdot)$.

Three points specific to this setting are noted.
\begin{enumerate}
\item \emph{Conditioning.} Throughout, we condition on
$\mathcal{Z}_n$, so all substituted quantities are
$\mathcal{Z}_n$-measurable and treated as fixed. The bootstrap
randomness enters only through $\widehat\theta_{n,k}^{(r)}$ and
$\widehat V_{n,k}^{(r)}$.

\item \emph{Uniformity over pairs and rounds.} Since
$\widehat V_{n,k}$ does not depend on $r$, the bounds
$c_\pm^*(\eta)$ and $\Gamma_{n,\mathrm{alt}}^*$ are the same
$\mathcal{Z}_n$-measurable expression for every round $r$ and
every pair $(S_i,S_j)$. Taking the maximum over the finite sets
$\mathcal{C}$, $\mathcal{D}$, and $\{1,\ldots,R(n)\}$ therefore
preserves the bound.

\item \emph{Random signal.} In part~(ii), $\widehat\Delta_V^{(i,j)}$
is $\mathcal{Z}_n$-measurable and plays the role of the deterministic
$\Delta_V$ in Case~2 of Theorem~\ref{thm::tail_bounds_WhatV2}.
Conditionally on $\mathcal{Z}_n$, it is fixed, and the detectability
condition \eqref{eq::detectability_boot} ensures it dominates the
threshold $t_n+\Gamma_{n,\mathrm{alt}}^*$. The reverse triangle
inequality then gives the type-II bound exactly as in Case~2.
\end{enumerate}
The conclusions \eqref{eq::type1_bound}--\eqref{eq::type2_bound}
hold with probability tending to $1$,
because (A10)--(A11) are conditional-in-probability assumptions,
so the resulting bounds are $O_p$ and $o_p$.
\hfill $\square$
\subsection{Examples of models}
We illustrate the preceding results with several applications.
Throughout this section,  for each centre $k$,  the sample $(Z_{1, k}, \ldots, Z_{n, k})$ is assumed i.i.d. 
In all examples,  the universal resampling scheme  applies without modification. We provide a unified treatment of \emph{robust linear regression} (sec. ~6.2.1) and \emph{one-parameter GLMs} (sec. ~6.2.2),  and state sufficient conditions ensuring the validity of both the nonparametric and the weighted (multiplier) bootstrap. For sec. ~6.2.3 and~6.2.4,  we rely on existing results in the literature. The proof of Proposition~3 is given in supplementary material; the proofs of Propositions~4–6 are omitted for brevity as they follow standard arguments.
 We will also need of the  following “no linear-hyperplane support” condition ensuring identifiability of \(\theta_{0, k}\)  : 

\begin{description}
    \item[Ident.]  For each $k, $ 
\(
\mathbb{P}(\nu^\top X_{1, k} =0)=1 \ \Rightarrow\ \nu=0
\quad\text{for all }\nu\in\mathbb{R}^d.
\)
\end{description}

\subsubsection{Robust Linear Regression Models}
For each centre \(k \in \{1, \ldots, K\}\),  consider the linear regression model  : 
\begin{equation}
    \label{eq::RLR}
    Y_{i, k} \;=\; X_{i, k}^\top \theta_{0, k} + \varepsilon_{i, k},  
    \qquad i=1, \ldots, n, \; \theta_{0, k}\in\Theta, 
\end{equation}
where \(\Theta \subset \mathbb{R}^p\) is compact. The estimator \(\widehat\theta_{n, k}\) of the true parameter \(\theta_{0, k}\) is defined as 
\begin{equation}
    \label{eq::RLRestimator}
    \frac{1}{n}\sum_{i=1}^n 
\psi_{k, \delta}\!\bigl(Y_{i, k} - X_{i, k}^\top \widehat\theta_{n, k}\bigr)\,  X_{i, k} \;=\; 0, 
\end{equation}
where \(\psi_{k, \delta} = \varphi'_{k, \delta}\) is the derivative of a convex loss \(\varphi_{k, \delta}\); both may depend on a tuning (hyper-)parameter \(\delta\),  possibly centre-specific. Typical choices include the Huber \citep{huber2011robust} and pseudo-Huber losses.  Although the overall influence function may be unbounded (due to leverage in the covariates), 
this estimator still exhibits resistance to   outliers: large residuals receive
capped weights whenever \(\psi_{k, \delta}\) satisfies \textbf{RLR1.}(ii) below.

We rely on the following set of centre-specific assumptions:
\begin{description}
    \item[RLR1.] \textbf{(Loss regularity)} 
    (i) The loss \(\varphi_{k, \delta}\) is non-negative,  C$1$-differentiable,  convex and coercive,  i.e.
    \(\varphi_{k, \delta}(t)\to\infty\) as \(|t|\to\infty\).
    (ii) There exists \(\tau>0\) such that
    \[
      \sup_{t\in \R} |\psi_{k, \delta}(t)| 
      \;+\;
      \sup_{u\neq v}\left|\frac{\psi_{k, \delta}(u)-\psi_{k, \delta}(v)}{u-v}\right|
      \;\le\; \tau.
    \]

   \item[RLR2.] \textbf{(Errors)} 
(i) The error \(\varepsilon_{1, k}\) is independent of \(X_{1, k}\). 
(ii) The moment conditions
\[
\mathbb{E}\, \psi_{k, \delta}(\varepsilon_{1, k})=0
\quad\text{and}\quad
\mathbb{E}\, s_{k, \delta}(\varepsilon_{1, k})>0
\]
hold,  where \(s_{k, \delta}\) is defined as follows:
\(s_{k, \delta}(t)=\psi'_{k, \delta}(t)\) on the set where \(\psi_{k, \delta}\) is differentiable,  and we assume
\(\mathbb{P}\big(\varepsilon_{1, k}\in N_{\psi}\big)=0\),  with \(N_{\psi}\) the set of nondifferentiability points of \(\psi_{k, \delta}\).

\item[RLR3.] \textbf{(Design)} 
    \(\mathbb{E}\, \|X_{1, k}\|^4<\infty\).
\end{description}

Assumption \textbf{RLR1}(i) states the standard coercivity condition in robust estimation.
Assumption \textbf{RLR1}(ii) requires \(\psi_{k, \delta}\) to be bounded and globally Lipschitz; this is satisfied by many robust losses (e.g.,  Huber,  pseudo-Huber,  log-cosh). But Lipschitz function is differentiable almost everywhere,  so no separate differentiability assumption is needed in \textbf{RLR2}.
The independence in \textbf{RLR2}(i) is standard for regression models. For \textbf{RLR2}(ii),  one may 
(a) work with the classical derivative by assuming \(\mathbb{P}(\varepsilon_{1, k}\in N_\psi)=0\),  where \(N_\psi\) is the (null) set of non differentiability points of \(\psi_{k, \delta}\),  and take \(s_{k, \delta}=\psi'_{k, \delta}\). 
The moment condition \(\mathbb{E}\, s_{k, \delta}(\varepsilon_{1, k})>0\) is then well-defined.
The moment condition in \textbf{RLR2}(ii) is,  for example,  guaranteed if (a) \(\varphi_{k, \delta}\) is even and the distribution of \(\varepsilon_{1, k}\) is symmetric about zero (so \(\mathbb{E}\, \psi_{k, \delta}(\varepsilon_{1, k})=0\)),  and
(b) there exist constants \(\nu_1, \nu_2>0\) such that
\(
\inf_{|t|\le \nu_2}\ \, \psi'_{k, \delta}(t)\ \ge\ \nu_1
\quad\text{and}\quad
\mathbb{P}(|\varepsilon_{1, k}|\le \nu_2)>0.
\)
In particular,  this   lower bound holds for the Huber loss with \(\nu_1=1\),  \(\nu_2=\delta\) (taking any selection at the kinks),  and for pseudo-Huber and log-cosh with \(\nu_1=\psi'(\nu_2)\) since these scores are smooth. The conditions in \textbf{RLR3} specify the moment requirements used to derive asymptotic normality of the estimator.  

\begin{prop}
\label{prop::GoldenRobust}
Suppose that,  for each \(k\),  \(\theta_{0, k}\) lies in the interior of \(\Theta\),  and that \textbf{RLR1.} to \textbf{RLR3.},  (\textbf{A7}) and \textbf{Ident.} hold. Then the partition \(\widehat{C}_n^{(R(n))}\) produced by Algorithm~\ref{alg:CoC-algorithm_bootstrap_iterations},  with,  for each \(k\) and each bootstrap replicate \(r=1, \ldots, R(n)\), 
$$
\frac{1}{n}\sum_{i=1}^n 
\psi_{k, \delta}\!\bigl(Y_{i, k}^{(r)} - X_{i, k}^{{(r)}^\top} \widehat{\theta}_{n, k}^{(r)}\bigr)\, X_{i, k}^{(r)} \;=\; 0, 
$$
$$
\widehat V_{n, k}  \, =\,  \frac{1}{n}\sum_{i=1}^n 
\psi'_{k, \delta}\!\bigl(Y_{i, k}   - X_{i, k}  ^\top \widehat{\theta}_{n, k}  \bigr)\, X_{i, k}  X_{i, k}^\top, 
\quad
\widehat Q_{n, k} \, =\,  \frac{1}{n}\sum_{i=1}^n 
\psi_{k, \delta}^2\!\bigl(Y_{i, k}   - X_{i, k}^{ \top }\widehat{\theta}_{n, k}  \bigr)\, X_{i, k}   X_{i, k}^{\top}, 
$$
where \(\{(Y_{i, k}^{(r)},  X_{i, k}^{(r)}) : i=1, \ldots, n\}\) comes from either
(i) the nonparametric bootstrap (resampling with replacement) or
(ii) the weighted bootstrap with weights of mean \(1\) and variance \(1\), 
enjoys the Golden-Partition Recovery property.
\end{prop}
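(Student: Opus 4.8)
The plan is to deduce the conclusion directly from Theorem~\ref{th:CoC-algorithmOracleIterations}: since that theorem already delivers the Golden-Partition Recovery property whenever assumptions \textbf{A1.}--\textbf{A7.} hold, and since \textbf{A7.} and \textbf{Ident.} are imposed as hypotheses here, it suffices to verify that the estimating equation \eqref{eq::RLRestimator}, together with the stated plug-ins $\widehat V_{n,k},\widehat Q_{n,k}$ and either admissible bootstrap scheme, satisfies \textbf{A1.}--\textbf{A6.} at every fixed centre $k$. I would work centre by centre (legitimate because $K$ is fixed) and identify the population quantities as $V_k=\E[\psi'_{k,\delta}(\varepsilon_{1,k})X_{1,k}X_{1,k}^\top]$, $Q_k=\E[\psi_{k,\delta}^2(\varepsilon_{1,k})X_{1,k}X_{1,k}^\top]$, and $U_{n,k}=n^{-1/2}\sum_{i=1}^n\psi_{k,\delta}(\varepsilon_{i,k})X_{i,k}$.

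First I would establish the Bahadur expansion \eqref{eq::asymptoticDecompositionHeterogeneous} together with \textbf{A1.}--\textbf{A3.}. Existence, uniqueness, and consistency of $\widehat\theta_{n,k}$ follow from the convexity and coercivity in \textbf{RLR1.}(i) by standard convex M-estimation. For \textbf{A1.}, independence of $\varepsilon_{1,k}$ and $X_{1,k}$ (\textbf{RLR2.}(i)) factorises $V_k=\E[s_{k,\delta}(\varepsilon_{1,k})]\,\E[X_{1,k}X_{1,k}^\top]$; the scalar factor is strictly positive by \textbf{RLR2.}(ii) and the Gram matrix is positive definite by \textbf{Ident.}, so $V_k\succ0$. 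For \textbf{A2.}, the summands of $U_{n,k}$ are i.i.d.\ with mean zero (by independence and $\E\psi_{k,\delta}(\varepsilon_{1,k})=0$) and finite second moment, since $\psi_{k,\delta}$ is bounded by \textbf{RLR1.}(ii) and $\E\|X_{1,k}\|^2<\infty$ by \textbf{RLR3.}; the Lindeberg--Lévy CLT then gives $U_{n,k}\Rightarrow\mathcal N(0,Q_k)$ with $Q_k\ne0$. For \textbf{A3.}, the global Lipschitz bound on $\psi_{k,\delta}$ in \textbf{RLR1.}(ii) controls the linearisation error of the estimating equation via a stochastic-equicontinuity argument, yielding a remainder $\varepsilon_{n,k}=o_p(1)$ even when $\psi_{k,\delta}$ has kinks (the nondifferentiability set being null under \textbf{RLR2.}(ii)).

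Next I would dispatch the remaining assumptions. Assumption \textbf{A4.} is structural, as the samples are collected at distinct centres and hence independent across $k$. For \textbf{A5.}, the plug-ins are empirical averages over the classes $\{\psi'_{k,\delta}(y-x^\top\theta)xx^\top\}$ and $\{\psi^2_{k,\delta}(y-x^\top\theta)xx^\top\}$ indexed by $\theta$ near $\theta_{0,k}$; these are Glivenko--Cantelli under the Lipschitz/boundedness of \textbf{RLR1.}(ii) and the fourth-moment condition \textbf{RLR3.}, so combining the uniform law of large numbers with the consistency of $\widehat\theta_{n,k}$ gives $\widehat V_{n,k}\xrightarrow{p}V_k$ and $\widehat Q_{n,k}\xrightarrow{p}Q_k$. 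For \textbf{A6.}, I would invoke the bootstrap central-limit theory of \cite{arcones1992bootstrap}: the same Lipschitz, boundedness, and moment conditions place both the nonparametric and the weighted (mean- and variance-one) bootstrap within their framework, yielding the conditional CLT $\sqrt n(\widehat\theta^{(r)}_{n,k}-\widehat\theta_{n,k})\Rightarrow\mathcal N(0,V_k^{-1}Q_kV_k^{-1})$ in probability.

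The main obstacle is \textbf{A6.} (and, relatedly, the remainder control in \textbf{A3.}): because the admissible losses need only be Lipschitz, the score $\psi_{k,\delta}$ may fail to be differentiable at isolated points, so a naive Taylor expansion of the bootstrapped estimating equation is unavailable. The rigorous route is through empirical-process tools---showing the relevant function class is Donsker and that the bootstrap empirical process converges conditionally---which is precisely what is drawn from \cite{arcones1992bootstrap} (and, for the quantile-type boundary case, \cite{hahn1995bootstrapping}). Once \textbf{A1.}--\textbf{A6.} are in hand, the conclusion follows at once from Theorem~\ref{th:CoC-algorithmOracleIterations}.
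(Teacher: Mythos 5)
Your overall architecture matches the paper's: both proofs work centre by centre, identify $V_k=\mathbb{E}[\psi'_{k,\delta}(\varepsilon_{1,k})X_{1,k}X_{1,k}^\top]$, $Q_k=\mathbb{E}[\psi^2_{k,\delta}(\varepsilon_{1,k})X_{1,k}X_{1,k}^\top]$ and $U_{n,k}=n^{-1/2}\sum_i\psi_{k,\delta}(\varepsilon_{i,k})X_{i,k}$, verify \textbf{A1.}--\textbf{A6.}, and then invoke Theorem~\ref{th:CoC-algorithmOracleIterations}. Where you genuinely diverge is in the machinery: the paper verifies the four hypotheses of \cite[Theorem~10.16]{kosorok2008introduction} for bootstrapped Z-estimators --- identifiability (established via a truncation/Fatou coercivity argument plus local strong convexity from $\mathbb{E}\,\psi'_{k,\delta}(\varepsilon_{1,k})>0$ and \textbf{Ident.}), a Glivenko--Cantelli score class, a local Donsker class with $L_2$-continuity, and a differentiable population score with nonsingular derivative --- and that single theorem simultaneously yields the Bahadur expansion (\textbf{A1.}--\textbf{A3.}) \emph{and} the conditional bootstrap CLT (\textbf{A6.}) for both the multinomial and the mean-one/variance-one multiplier weights. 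You instead assemble the pieces separately (Lindeberg--L\'evy for \textbf{A2.}, stochastic equicontinuity for \textbf{A3.}, \cite{arcones1992bootstrap} for \textbf{A6.}), which buys modularity but loses the one-stop treatment of the two resampling schemes. Three points need repair. First, \cite{arcones1992bootstrap} addresses Efron's nonparametric bootstrap; it does not cover the weighted (multiplier) bootstrap of scheme (ii), which is precisely what Kosorok's weighted-bootstrap theorem handles --- indeed the paper's own discussion of \textbf{A6.} cites Arcones--Gin\'e only for (semi)parametric bootstraps --- so for scheme (ii) you must route through \cite[Theorem~10.16]{kosorok2008introduction} or an exchangeably-weighted-bootstrap result. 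Second, your claim that uniqueness of $\theta_{0,k}$ follows ``from convexity and coercivity'' is too quick: a convex coercive objective can have a flat set of minimizers; uniqueness requires the local strict convexity that your factorisation $V_k=\mathbb{E}[s_{k,\delta}(\varepsilon_{1,k})]\,\mathbb{E}[X_{1,k}X_{1,k}^\top]\succ 0$ supplies (this is exactly the paper's step (i--3)), so you have the ingredient but never invoke it, and coercivity of the \emph{population} objective itself needs the paper's Fatou/truncation argument rather than being automatic. Third, in your \textbf{A5.} argument the class $\{\psi'_{k,\delta}(y-x^\top\theta)\,xx^\top\}$ is \emph{not} Lipschitz in $\theta$ for kinked scores such as Huber's ($\psi'$ is then indicator-like), so its Glivenko--Cantelli property must come from a VC-type argument combined with $\mathbb{P}(\varepsilon_{1,k}\in N_\psi)=0$ to get continuity of the limit at $\theta_{0,k}$, not from the Lipschitz envelope you cite. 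With these substitutions the proposal is sound and delivers the same conclusion by an essentially equivalent, if more piecemeal, route.
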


 \paragraph{Proof of Proposition~\ref{prop::GoldenRobust}}
Relying on Theorem~10.16 of \cite{kosorok2008introduction},  we verify:
\begin{enumerate} 
\item Identifiability:
\[
\theta_{0, k} \;=\; \arg\min_{\theta\in\Theta} \; \mathbb{E}\, \varphi_{k, \delta}\!\big(Y_{1, k}-\theta^\top X_{1, k}\big)
\quad\text{is uniquely defined.}
\]
\item For $y\in\mathbb{R}$ and $x=(x_1, \dots, x_p)$,  set
$g_{i, \theta}(y, x):=\psi_{k, \delta}\!\big(y-\theta^\top x\big)\, x_i$.
Then $\mathcal{F}_i:=\{g_{i, \theta}:\theta\in\Theta\}$ is Glivenko–Cantelli.
\item For some $\eta>0$,  $\mathcal{F}_{i, \eta}:=\{g_{i, \theta}:\theta\in\Theta, \, \|\theta-\theta_{0, k}\|\le\eta\}$ is Donsker and
\[
\mathbb{E}\big\|
\psi_{k, \delta}(Y_{1, k}-X_{1, k}^\top\theta)\, X_{1, k}
-
\psi_{k, \delta}(Y_{1, k}-X_{1, k}^\top\theta_{0, k})\, X_{1, k}
\big\|^2 \;\to\; 0 \quad \text{as }\|\theta-\theta_{0, k}\|\to 0.
\]
\item The vector $\psi_{k, \delta}(Y_{1, k}-\theta_{0, k}^\top X_{1, k})X_{1, k}$ is square integrable and
$\theta \mapsto \mathbb{E}\big[\psi_{k, \delta}(Y_{1, k}-\theta^\top X_{1, k})X_{1, k}\big]$
is differentiable at $\theta_{0, k}$ with a nonsingular derivative matrix.
\end{enumerate}

\paragraph*{(i) Identifiability.}
\textit{(i–1) Coercivity of the population loss.}
Let $\theta_m$ be any sequence with $\|\theta_m\|\to\infty$. By compactness of the unit sphere, 
there is a subsequence (not relabeled) with $u_m:=\theta_m/\|\theta_m\|\to u$ and $\|u\|=1$.
Set $Z:=u^\top X_{1, k}$. By \textbf{RLR3},  $\mathbb{P}(|Z|>0)>0$.

For $\omega$ with $|Z(\omega)|>0$,  by continuity of $v\mapsto v^\top X_{1, k}(\omega)$, 
$|u_m^\top X_{1, k}(\omega)| \ge |Z(\omega)|/2$ for all large $m$. Hence
\[
|\theta_m^\top X_{1, k}(\omega)|
=\|\theta_m\|\, |u_m^\top X_{1, k}(\omega)|
\ \ge\ \tfrac12\, \|\theta_m\|\, |Z(\omega)|
\ \xrightarrow[m\to\infty]{}\ \infty.
\]
Thus $|Y_{1, k}(\omega)-\theta_m^\top X_{1, k}(\omega)|\to\infty$ for all $\omega$ with $|Z(\omega)|>0$.
Since $\varphi_{k, \delta}$ is coercive   by assumption,  it follows that
\[
\varphi_{k, \delta}\!\big(Y_{1, k}-\theta_m^\top X_{1, k}\big)\ \to\ \infty
\quad\text{pointwise on }\{|Z|>0\}.
\]

Fix $M>0$ and define the truncation $\varphi_{k, \delta}^{(M)}(t):=\min\{\varphi_{k, \delta}(t), \, M\}$.
Then,  on $\{|Z|>0\}$, 
\[
\varphi_{k, \delta}^{(M)}\!\big(Y_{1, k}-\theta_m^\top X_{1, k}\big)\ \to\ M, 
\]
while on $\{|Z|=0\}$ we   have $\liminf_{m\to\infty}\varphi_{k, \delta}^{(M)}(Y_{1, k}-\theta_m^\top X_{1, k})\ge 0$.
Therefore, 
\[
\liminf_{m\to\infty}\, 
\varphi_{k, \delta}^{(M)}\!\big(Y_{1, k}-\theta_m^\top X_{1, k}\big)
\ \ge\ M\, \mathbf{1}_{\{|Z|>0\}}.
\]
By Fatou’s lemma, 
\[
\liminf_{m\to\infty}\, 
\mathbb{E}\, \varphi_{k, \delta}^{(M)}\!\big(Y_{1, k}-\theta_m^\top X_{1, k}\big)
\ \ge\ M\, \mathbb{P}(|Z|>0).
\]
Because $\varphi_{k, \delta}^{(M)}\le \varphi_{k, \delta}$ and $M>0$ is arbitrary,  we conclude that
\[
\mathbb{E}\, \varphi_{k, \delta}\!\big(Y_{1, k}-\theta_m^\top X_{1, k}\big)\ \xrightarrow[\ \|\theta_m\|\to\infty\ ]{}\ \infty.
\]
Hence the population objective
\(
g_k(\theta):=\mathbb{E}\, \varphi_{k, \delta}\!\big(Y_{1, k}-\theta^\top X_{1, k}\big)
\)
is coercive.

\medskip
\textit{(i–2) $\theta_{0, k}$ is a minimizer.}
Because $|\psi_{k, \delta}|\le \tau$ and $\mathbb{E}\|X_{1, k}\|<\infty$, 
we may differentiate under the expectation (dominated convergence),  obtaining
\[
\nabla_\theta g_k(\theta)
= -\, \mathbb{E}\!\left[\psi_{k, \delta}\!\big(Y_{1, k}-\theta^\top X_{1, k}\big)\, X_{1, k}\right].
\]
At $\theta=\theta_{0, k}$,  with $Y_{1, k}=X_{1, k}^\top\theta_{0, k}+\varepsilon_{1, k}$ and $\varepsilon_{1, k}\, \text{ independent from } X_{1, k}$, 
\[
\nabla_\theta g_k(\theta_{0, k})
= -\, \mathbb{E}\big[\psi_{k, \delta}(\varepsilon_{1, k})\big]\;\mathbb{E}[X_{1, k}]
= 0, 
\]
since $\mathbb{E}\, \psi_{k, \delta}(\varepsilon_{1, k})=0$.
As $g_k$ is convex,  any stationary point is a (global) minimizer; hence $\theta_{0, k}$ minimizes $g_k$.

\medskip
\textit{(i–3) Local strong convexity at $\theta_{0, k}$ and uniqueness.}
When $\psi_{k, \delta}$ is differentiable (assumption \textbf{RLR2}),  the Hessian exists and
\[
\nabla_\theta^2 g_k(\theta)
= \mathbb{E}\!\left[\psi'_{k, \delta}\!\big(Y_{1, k}-\theta^\top X_{1, k}\big)\, X_{1, k}X_{1, k}^\top\right].
\]
For any nonzero $u\in\mathbb{R}^p$, 
\[
u^\top \nabla_\theta^2 g_k(\theta_{0, k})\, u
= \mathbb{E}\!\left[\psi'_{k, \delta}(\varepsilon_{1, k})\right]\; u^\top \mathbb{E}\!\left[X_{1, k}X_{1, k}^\top\right] u
\;>\; 0, 
\]
by \textbf{RLR2(ii)} (ensuring $\mathbb{E}[\psi'_{k, \delta}(\varepsilon_{1, k})]>0$) and \textbf{RLR3} (non-collinearity: $\mathbb{E}[X_{1, k}X_{1, k}^\top]\succ 0$).
Thus $g_k$ is strictly convex in a neighborhood of $\theta_{0, k}$. Since $g_k$ is convex globally, 
it cannot have two distinct minimizers.
Therefore $\theta_{0, k}$ is the \emph{unique} minimizer of $g_k$,  equivalently the unique solution to the population estimating equation
\[
\mathbb{E}\!\left[\psi_{k, \delta}\!\big(Y_{1, k}-X_{1, k}^\top\theta\big)X_{1, k}\right]=0.
\]

{\sl Proof of (ii).}

By \textbf{RLR1(ii)} (Lipschitz/ bounded slope of $\psi_{k, \delta}$),    for any $\theta_1, \theta_2\in\Theta$ and $(y, x)\in\mathbb{R}\times\mathbb{R}^p$, 
\[
\begin{aligned}
\big|g_{i, \theta_1}(y, x)-g_{i, \theta_2}(y, x)\big|
&= \big|\psi_{k, \delta}(y-\theta_1^\top x)-\psi_{k, \delta}(y-\theta_2^\top x)\big|\cdot |x_i| \\
&\le  \tau\, \|x\|^2\, \|\theta_1-\theta_2\|.
\end{aligned}
\]
 
By \textbf{RLR3},  $\mathbb{E}\, m(Y_{1, k}, X_{1, k})=\tau\, \mathbb{E}\|X_{1, k}\|^2<\infty$.
Since $\Theta$ is compact and $\theta\mapsto g_{i, \theta}(y, x)$ is continuous for each $(y, x)$, 
the family $\mathcal{F}_i:=\{g_{i, \theta}:\theta\in\Theta\}$ is $L_1(P_{1,k})$–Lipschitz in $\theta$
with an integrable envelop,  this implies  that $\mathcal{F}_i$ is $P_{1,k}$–Glivenko–Cantelli.

Now define the vector-valued map $g_\theta(y, x):=\psi_{k, \delta}(y-\theta^\top x)\, x\in\mathbb{R}^p$
and the class $\mathcal{F}:=\{g_\theta:\theta\in\Theta\}$.
Since each coordinate class $\mathcal{F}_i$ is $P_{1,k}$–Glivenko–Cantelli,  $\mathcal{F}$ is Glivenko–Cantelli in the vector sense (see,  eg. \cite[ Part~III,  Ch.~5]{bolthausen2002lectures}).

{\sl Proof of (iii).} 
Now,   we need to require $\E m^2(Y_{1, k}, X_{1, k})< \infty$ and also $\E g^2_{i, \theta}(Y_{1, k}, X_{1, k}) < \infty$ for $\theta \in \Theta.$
All of this condition are met   following \textbf{RLR1.(ii)} and  the integrability condition in  \textbf{RLR3.}.
For the second condition of (iii),  since the score is continuous in $\theta$ as $\varphi_{k, \delta}$ is C$1$-differentiable,  it is sufficient to show that  
$$\E \sup_{\|\theta - \theta_0\|\leq \eta} \|\psi_{k,  \delta}(Y_{1, k} - \theta^\top X_{1, k})X_{1, k}\|^2 < \infty$$
and the result   follows by the dominated convergence theorem.
This latter condition holds since $\psi_{k,  \delta}$ is bounded and $\E \|X_{1, k}\|^4 < \infty.$ As for the point (ii.) the class $\mathcal{F}_\eta:=\{g_\theta:\theta\in\Theta, \, \|\theta-\theta_{0, k}\|\le\eta\}$ is $P_{1,k}-$Donsker following eg. \cite[ Part~III,  Ch.~6]{bolthausen2002lectures}.

{\sl Proof of (iv).} 

We   deduce this point from \textbf{RLR3.} and since   $\psi_{n, k}$ is bounded.

{\sl Conclusion} (i),  (ii),  (iii) and (iv) allow us to obtain (\textbf{A1}) to (\textbf{A4}) and (\textbf{A6}).   (\textbf{A5})    is also verified from the continuous mapping theorem for convergence in probability following the consistency of the estimator. 
\hfill $\square$

\subsubsection{One Parameter Exponential Family GLM}

In this framework,  conditional on the  covariate \(X_{i, k}\),  the response \(Y_{i, k}\) follows a canonical one-parameter exponential family:
\begin{equation}
\label{eq::glm}
\mathbb{P}(Y_{i, k}\in\Delta\mid X_{i, k})
\;=\;
\int_\Delta
c(y)\exp\!\bigl\{y\, X_{i, k}^{\!\top}\theta_{0, k}-a(X_{i, k}^{\!\top}\theta_{0, k})\bigr\}\, \mu(dy), 
\quad
\theta_{0, k}\in\Theta, \;
\Delta\subseteq\mathcal{Y}\subseteq\mathbb{R}, 
\end{equation}
where \(\mu\) denotes Lebesgue measure for continuous \(Y_{i, k}\) or counting measure for discrete \(Y_{i, k}\),  and \(a(\cdot)\) is the cumulant function. The negative log-likelihood and its score function are given by
\[
\ell(\theta)
\;=\;
\sum_{i=1}^n\!\bigl\{a(X_{i, k}^{\!\top}\theta)-Y_{i, k}X_{i, k}^{\!\top}\theta\bigr\}, 
\qquad
s(\theta)
\;=\;
-\sum_{i=1}^n\!\bigl(Y_{i, k}-a'(X_{i, k}^{\!\top}\theta)\bigr)X_{i, k}.
\]

The estimator \(\widehat\theta_{n, k}\) of the true parameter \(\theta_{0, k}\) is defined as 
\begin{equation}
    \label{eq::GLMestimator}
    \frac{1}{n}\sum_{i=1}^n 
 \!\bigl(Y_i-a'(X_i^{\!\top}\widehat\theta_{n, k})\bigr)X_i \;=\; 0.
\end{equation}

We make use of  the following set of centre-specific assumptions:
\begin{description}
    \item[GLM1.] There exists a positive real-valued function $A$ such that 
    $$
    |a'(X_{1, k}^\top \theta_1) - a'(X_{1, k}^\top \theta_2)| \leq A(X_{1, k})\|\theta_1-\theta_2\|.
    $$
    \item[GLM2.] The following moment condition hold :
    $$
    \E a''(X_{1, k}^\top \theta_{0, k})\|X_{1, k}\|^2 < \infty \,  \text{ and } \,  \E A^2(X_{1, k})\|X_{1, k}\|^2 < \infty.
    $$
    \item[GLM3.] For some $\eta > 0, $
    $$
    \E \sup_{\|\theta - \theta_{0, k}\| \leq \eta}\|(Y_{1, k}-a'(X_{1, k}^\top \theta))X_{1, k}\|^2 < \infty.
    $$
\end{description}

\begin{prop}
\label{prop::GLM}
Suppose that,  for each \(k\),  \(\theta_{0, k}\) lies in the interior of \(\Theta\),  and that \textbf{GLM1.} to \textbf{GLM3.},  (\textbf{A7}) and \textbf{Ident.} hold. Then the partition \(\widehat{C}_n^{(R(n))}\) produced by Algorithm~\ref{alg:CoC-algorithm_bootstrap_iterations},  with,  for each \(k\) and each bootstrap replicate \(r=1, \ldots, R(n)\), 

$$
\frac{1}{n}\sum_{i=1}^n 
 \!\bigl(Y_{i, k}^{(r)} -a'(X_{i, k}^{(r)\!\top}\widehat{\theta}_{n, k}^{(r)})\bigr)X_{i, k}^{(r)} \;=\; 0, 
$$
$$
\widehat V_{n, k}  \, =\,  \frac{1}{n}\sum_{i=1}^n 
a''\!\bigl(X_{i, k}^{\, \top} \widehat{\theta}_{n, k} \bigr)\, X_{i, k}  X_{i, k}^{\, \top}, 
\quad
\widehat Q_{n, k} \, =\,  \frac{1}{n}\sum_{i=1}^n 
\bigl(Y_{i, k}  -a'(X_{i, k}^{\!\top}\widehat{\theta}_{n, k})\bigr)^2\, X_{i, k}  X_{i, k}^{\, \top}, 
$$
where \(\{(Y_{i, k}^{(r)},  X_{i, k}^{(r)}) : i=1, \ldots, n\}\) comes from either
(i) the nonparametric bootstrap  or
(ii) the weighted bootstrap with weights of mean \(1\) and variance \(1\), 
enjoys the Golden-Partition Recovery property.
\end{prop}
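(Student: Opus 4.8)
The plan is to derive the result as a direct application of Theorem~\ref{th:CoC-algorithmOracleIterations}: since \textbf{A7.} is assumed in the hypotheses, it suffices to verify that the GLM specification, with the stated estimators and either resampling scheme, satisfies \textbf{A1.}--\textbf{A6.}; the Golden-Partition Recovery conclusion then follows verbatim. Write $\Psi_{n,k}(\theta)=\frac1n\sum_{i=1}^n\bigl(Y_{i,k}-a'(X_{i,k}^\top\theta)\bigr)X_{i,k}$ for the sample estimating function and $\Psi_k(\theta)=\E\,\Psi_{n,k}(\theta)$ for its population counterpart, so that $\Psi_k(\theta_{0,k})=0$ because $\E[Y_{1,k}\mid X_{1,k}]=a'(X_{1,k}^\top\theta_{0,k})$. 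Since $\ell(\cdot)$ is convex (it is the negative log-likelihood of a canonical exponential family) and $a''\ge0$, standard Z-estimation theory gives consistency of $\widehat\theta_{n,k}$, and a first-order expansion of $\Psi_{n,k}$ about $\theta_{0,k}$, whose Jacobian is $-\frac1n\sum_i a''(X_{i,k}^\top\theta)X_{i,k}X_{i,k}^\top$, yields the Bahadur representation \eqref{eq::asymptoticDecompositionHeterogeneous} with $V_k=\E\,a''(X_{1,k}^\top\theta_{0,k})X_{1,k}X_{1,k}^\top$ and $U_{n,k}=\frac1{\sqrt n}\sum_i\bigl(Y_{i,k}-a'(X_{i,k}^\top\theta_{0,k})\bigr)X_{i,k}$. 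The Lipschitz bound \textbf{GLM1.} controls the linearization error and \textbf{GLM2.} forces the empirical Jacobian to converge, so the remainder is $\varepsilon_{n,k}=o_p(1)$, which is \textbf{A3.}

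For the positive-definiteness required in \textbf{A1.}, I would use \textbf{Ident.}: for any $\nu\neq0$, $\nu^\top V_k\nu=\E\,a''(X_{1,k}^\top\theta_{0,k})(\nu^\top X_{1,k})^2>0$, since the variance function $a''$ is strictly positive and $\P(\nu^\top X_{1,k}=0)<1$. Assumption \textbf{A2.} is the multivariate CLT applied to the i.i.d.\ mean-zero summands of $U_{n,k}$, whose covariance $Q_k=\E\bigl(Y_{1,k}-a'(X_{1,k}^\top\theta_{0,k})\bigr)^2X_{1,k}X_{1,k}^\top$ is finite by \textbf{GLM3.} at $\theta_{0,k}$; moreover, correct specification gives $\E[(Y_{1,k}-a'(X_{1,k}^\top\theta_{0,k}))^2\mid X_{1,k}]=a''(X_{1,k}^\top\theta_{0,k})$, so $Q_k=V_k$ is positive definite, hence nonzero. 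Independence across centres (\textbf{A4.}) holds by the distributed design. For \textbf{A5.}, the plug-in matrices $\widehat V_{n,k}$ and $\widehat Q_{n,k}$ are empirical means; combining a uniform law of large numbers over a neighbourhood of $\theta_{0,k}$, made integrable by \textbf{GLM2.}--\textbf{GLM3.} and the Lipschitz control \textbf{GLM1.}, with the consistency of $\widehat\theta_{n,k}$ gives $\widehat V_{n,k}\xrightarrow{p}V_k$ and $\widehat Q_{n,k}\xrightarrow{p}Q_k$.

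The remaining and genuinely technical step is the conditional bootstrap CLT \textbf{A6.} for both schemes, and this is where I expect the main obstacle to lie. Fixing a centre, I would treat the bootstrap estimating equation through the Z-estimator ``master theorem'' for the nonparametric and multiplier bootstraps: the score class $\{(x,y)\mapsto(y-a'(x^\top\theta))x:\theta\in\Theta\}$ has a square-integrable envelope by \textbf{GLM3.} and is smoothly indexed by $\theta$, hence Donsker, while its Jacobian at $\theta_{0,k}$ is the nonsingular matrix $V_k$ from \textbf{A1.} Invoking the bootstrap delta-method/continuous-mapping results for Z-estimators---\citep[Sec.~10.3 and 13.2.3]{kosorok2008introduction} for the nonparametric bootstrap and the multiplier-bootstrap arguments of \citet{arcones1992bootstrap}---then yields, conditionally on $\mathcal Z_n$ and in probability, $\sqrt n(\widehat\theta^{(r)}_{n,k}-\widehat\theta_{n,k})\Rightarrow\mathcal N(0,V_k^{-1}Q_kV_k^{-1})$, which is exactly \textbf{A6.} The delicate points are establishing conditional stochastic equicontinuity of the resampled empirical process and checking that the random centring $\widehat\theta_{n,k}$ and the data-dependent Jacobian do not perturb the Gaussian limit; both are handled by the envelope and Lipschitz conditions \textbf{GLM1.}--\textbf{GLM3.} With \textbf{A1.}--\textbf{A6.} verified and \textbf{A7.} assumed, Theorem~\ref{th:CoC-algorithmOracleIterations} applies and delivers $\P\bigl(\widehat{C}_n^{(R(n))}=\mathcal P\bigr)\to1$.
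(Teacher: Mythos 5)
Your proposal is correct and takes essentially the same route as the paper: although the paper omits the proof of Proposition~\ref{prop::GLM} as ``standard,'' its template proof of Proposition~\ref{prop::GoldenRobust} does exactly what you do, namely verify identifiability, the Glivenko--Cantelli/Donsker properties of the score class with the Lipschitz-in-parameter and envelope controls, and the nonsingular derivative, so that the Z-estimator master theorem of \cite{kosorok2008introduction} delivers \textbf{A1.}--\textbf{A6.} for both resampling schemes, after which Theorem~\ref{th:CoC-algorithmOracleIterations} gives the conclusion. Your supplementary observations (positive definiteness of $V_k$ via \textbf{Ident.} and strict positivity of $a''$, and the information equality $Q_k=V_k$ under correct specification, which yields $Q_k\neq 0$) are sound and consistent with the paper's argument.
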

 
\paragraph*{Examples (logistic and Poisson,  canonical links).}
\emph{Logistic regression:} \(Y\in\{0, 1\}\),  \(a(\eta)=\log(1+e^{\eta})\), 
\(a'(\eta)=\mu(\eta)=\frac{e^{\eta}}{1+e^{\eta}}\in(0, 1)\), 
\(a''(\eta)=\mu(\eta)\{1-\mu(\eta)\}\le 1/4\).
Then
\(|a'(\eta_1)-a'(\eta_2)|\le \tfrac14|\eta_1-\eta_2|\), 
so \textbf{GLM1} holds with \(A(X)=\tfrac14\|X\|\).
For \textbf{GLM2},  since \(a''(\eta)\le 1/4\),  it suffices that \(\E\|X\|^2<\infty\) (first part) and \(\E\|X\|^4<\infty\) (second part,  via \(A(X)=\tfrac14\|X\|\)).
Because \(Y\) and \(a'\) are bounded,  \textbf{GLM3} reduces to \(\E\|X\|^2<\infty\).

\emph{Poisson regression.}
Let \(a(\eta)=e^\eta\) so that \(a'(\eta)=a''(\eta)=e^\eta\).
Since \(\Theta\) is compact,  set \(M:=\sup_{\theta\in\Theta}\|\theta\|<\infty\).
For any \(\theta_1, \theta_2\in\Theta\),  the mean–value theorem and Cauchy–Schwarz give
\[
|a'(X^\top\theta_1)-a'(X^\top\theta_2)|
\le e^{X^\top\xi}\, |X^\top(\theta_1-\theta_2)|
\le e^{M\|X\|}\, \|X\|\, \|\theta_1-\theta_2\|, 
\]
for some \(\xi\) on the segment \([\theta_1, \theta_2]\subset\Theta\).
Hence \textbf{GLM1} holds   with \(A(X)=e^{M\|X\|}\|X\|\).
For \textbf{GLM2},  it suffices that
\(
 \E\{e^{2M\|X\|}\|X\|^4\}<\infty.
\)
For \textbf{GLM3},  since \(\E(Y^2\mid X)=e^{X^\top\theta_{0, k}}+e^{2X^\top\theta_{0, k}}
\le 2 e^{2M\|X\|}\),  a sufficient condition is
\(
\E\{e^{2M\|X\|}\|X\|^2\}<\infty.
\)

\subsubsection{Quantile Regression}
 Fix a quantile level $\tau\in(0, 1)$. At centre $k$,  let $\{(Y_{i, k}, X_{i, k})\}_{i=1}^n$ be the  observations with $X_{i, k}\in\mathbb{R}^p$ and model $Y_{i, k}=X_{i, k}^\top\theta_{0, k}+ \varepsilon_{i, k}$,  where $\theta_{0, k}\in \Theta \subseteq \mathbb{R}^p$ is the true parameter and $\varepsilon_{i, k}$ has conditional $\tau$-quantile zero given $X_{i, k}$. The local estimator is the quantile regression  
\[
\widehat\theta_{n, k}\in\arg\min_{\theta\in\mathbb{R}^p}\;\frac{1}{n}\sum_{i=1}^n \rho_\tau\bigl(Y_{i, k}-X_{i, k}^\top\theta\bigr), 
\qquad
\rho_\tau(u)=u\bigl(\tau-\mathbf{1}\{u<0\}\bigr).
\]
Let $\psi_\tau(u)=\tau-\mathbf{1}\{u<0\}$ and write $\varepsilon_{i, k}=Y_{i, k}-X_{i, k}^\top\theta_{0, k}$. Under standard conditions for quantile regression, 
\[
\sqrt{n}\, \bigl(\widehat\theta_{n, k}-\theta_{0, k}\bigr)
\;=\;
V_k^{-1}\left(\frac{1}{\sqrt{n}}\sum_{i=1}^n X_{i, k}\, \psi_\tau(\varepsilon_{i, k})\right)\;+\;o_p(1), 
\]
with
\[
V_k \;:=\; \mathbb{E}\!\left[\, f_{\varepsilon|X, k}\!\left(0\, \big|\, X_{i, k}\right)\, X_{i, k}X_{i, k}^\top\right], 
\qquad
Q_k \;:=\; \tau(1-\tau)\, \mathbb{E}\!\left[\, X_{i, k}X_{i, k}^\top\right], 
\]
where $f_{\varepsilon \mid X, k}(0 \mid x)$ denotes the conditional density of $\varepsilon_{i, k}$ at $0$ given $X_{i, k}=x$,  which is assumed to be supported by $\mathbb{R}$.
Hence \textbf{A1}–\textbf{A4} hold with
\[
\varepsilon_{n, k}\;:=\;\frac{1}{\sqrt{n}}\sum_{i=1}^n X_{i, k}\, \psi_\tau(\varepsilon_{i, k})
\;\Rightarrow\; \mathcal N\!\bigl(0, Q_k\bigr), 
\quad V_k\succ 0.
\]

\emph{Consistent plug-in estimators (\textbf{A5}).}
Define
\[
\widehat Q_{n, k}\;:=\;\tau(1-\tau)\, \frac{1}{n}\sum_{i=1}^n X_{i, k}X_{i, k}^\top, 
\qquad
\widehat V_{n, k}\;:=\;\frac{1}{n}\sum_{i=1}^n \widehat f_{\varepsilon|X, k}\!\left(0\, \big|\, X_{i, k}\right)\, X_{i, k}X_{i, k}^\top, 
\]
where $\widehat f_{\varepsilon|X, k}(0\, |\, x)$ is any consistent estimator of $f_{\varepsilon|X, k}(0\, |\, x)$. Then $\widehat V_{n, k}\xrightarrow{p}V_k$ and $\widehat Q_{n, k}\xrightarrow{p}Q_k$. The sufficient conditions ensuring the validity of the nonparametric bootstrap in the next proposition are adapted from \cite[Theorem~3]{hahn1995bootstrapping}.

\begin{prop}
\label{prop::QR}
Suppose that,  for each \(k\),  \(\theta_{0, k}\) lies in the interior of \(\Theta\),    (\textbf{A7})  \textbf{Ident.} hold and that 
\begin{enumerate} 
 \item[(i)] There exists a measurable envelope $F(X_{1, k})$ such that
  \[
    f_{\varepsilon|X, k}(u\mid X_{1, k})\le F(X_{1, k})\quad\text{for all }u\in\mathbb{R}\ \text{a.s., } 
 \text{ and } 
    \mathbb{E}\big[(1+F(X_{1, k}))\, \|X_{1, k}\|^{2}\big]<\infty .
  \]
  \item[(ii)] There are positive numbers $\nu_1$ and $\nu_2$ such that $f_{\varepsilon|X, k}(u\mid X_{1, k}) \ge \nu_1$ for all $|u|\le \nu_2.$
\item[(iii)] The matrices
  \(
    V_k:=\mathbb{E}\!\left[f_{\varepsilon|X, k}(0\mid X_{1, k})\, X_{1, k}X_{1, k}^\top\right]
    \quad\text{and}\quad
    Q_k:=\tau(1-\tau)\, \mathbb{E}\!\left[X_{1, k}X_{1, k}^\top\right]
  \)
  are nonsingular.
\end{enumerate} Then the partition \(\widehat{C}_n^{(R(n))}\) produced by Algorithm~\ref{alg:CoC-algorithm_bootstrap_iterations},  with,  for each \(k\) and each bootstrap replicate \(r=1, \ldots, R(n)\), 
$$
\widehat\theta_{n, k}\in\arg\min_{\theta\in\mathbb{R}^p}\;\frac{1}{n}\sum_{i=1}^n \rho_\tau\bigl(Y_{i, k}^{(r)} -X_{i, k}^{(r)\!\top}\theta\bigr)
$$
\[
\widehat Q_{n, k}\;:=\;\tau(1-\tau)\, \frac{1}{n}\sum_{i=1}^n X_{i, k}X_{i, k}^\top, 
\qquad
\widehat V_{n, k}\;:=\;\frac{1}{n}\sum_{i=1}^n \widehat f_{\varepsilon|X, k}\!\left(0\, \big|\, X_{i, k}\right)\, X_{i, k}X_{i, k}^\top, 
\]
where \(\{(Y_{i, k}^{(r)},  X_{i, k}^{(r)}) : i=1, \ldots, n\}\) comes from  
 the nonparametric bootstrap  
enjoys the Golden-Partition Recovery property.
\end{prop}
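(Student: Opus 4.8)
The plan is to deduce Proposition~\ref{prop::QR} directly from the Golden-Partition Recovery theorem, Theorem~\ref{th:CoC-algorithmOracleIterations}: since that theorem already delivers $\P\!\big(\widehat C_n^{(R(n))}=\mathcal P\big)\to1$ under \textbf{A1.}--\textbf{A7.}, it suffices to verify that the quantile-regression construction of the proposition satisfies each of these assumptions (together with \textbf{Ident.}, which guarantees that $\theta_{0,k}$ is identified as the unique minimiser of the population check-loss and hence that $\widehat\theta_{n,k}$ is consistent). Thus the entire task reduces to a model-specific check of regularity conditions against the known quantile-regression asymptotics.

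First I would record \textbf{A1.}--\textbf{A4.}. The Bahadur representation displayed immediately before the proposition gives $\sqrt n(\widehat\theta_{n,k}-\theta_{0,k})=V_k^{-1}U_{n,k}+o_p(1)$ with $U_{n,k}=n^{-1/2}\sum_{i=1}^n X_{i,k}\psi_\tau(\varepsilon_{i,k})$, so \textbf{A3.} is read off from the remainder. Condition~(iii) makes $V_k$ and $Q_k$ nonsingular, giving \textbf{A1.} and the positive-semidefiniteness in \textbf{A2.}; the moment bound $\mathbb E[(1+F(X_{1,k}))\|X_{1,k}\|^2]<\infty$ in~(i) supplies the finite second moment needed for the Lindeberg CLT yielding $U_{n,k}\Rightarrow\mathcal N(0,Q_k)$, where $Q_k=\tau(1-\tau)\,\mathbb E[X_{1,k}X_{1,k}^\top]$ because $\mathbb E[\psi_\tau(\varepsilon_{i,k})\mid X_{i,k}]=0$ and $\operatorname{Var}(\psi_\tau(\varepsilon_{i,k})\mid X_{i,k})=\tau(1-\tau)$. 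The density lower bound~(ii) keeps the Hessian $V_k=\mathbb E[f_{\varepsilon|X,k}(0\mid X_{1,k})X_{1,k}X_{1,k}^\top]$ away from degeneracy and underwrites \textbf{Ident.}, while independence \textbf{A4.} is inherited from the across-centre independence of the samples.

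Next I would verify \textbf{A5.}, which the surrounding text already asserts: $\widehat Q_{n,k}=\tau(1-\tau)\,n^{-1}\sum_i X_{i,k}X_{i,k}^\top\xrightarrow{p}Q_k$ by the weak law under the second-moment bound in~(i), and $\widehat V_{n,k}\xrightarrow{p}V_k$ provided the plug-in conditional-density estimator $\widehat f_{\varepsilon|X,k}(0\mid\cdot)$ is consistent and is dominated (via the envelope $F$) so that dominated convergence applies inside the average.

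The crux is \textbf{A6.}, the conditional bootstrap CLT, and this is where the main obstacle lies: the check loss is non-differentiable and its subgradient is discontinuous, so smooth-M-estimation bootstrap arguments do not apply verbatim. Here I would invoke the quantile-regression bootstrap theory of \cite{hahn1995bootstrapping}. Conditions~(i)--(iii) are exactly those adapted from his Theorem~3 (an integrable density envelope, a positive density lower bound near the quantile, and nonsingular limiting matrices), under which the nonparametric (with-replacement) bootstrap distribution of $\sqrt n(\widehat\theta^{(r)}_{n,k}-\widehat\theta_{n,k})$ converges, conditionally on $\mathcal Z_n$ and in probability, to $\mathcal N(0,V_k^{-1}Q_kV_k^{-1})$ --- precisely the statement of \textbf{A6.}. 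The delicate point, which Hahn resolves through the convexity of the bootstrapped objective and a stochastic-equicontinuity argument for the bootstrapped subgradient, is the interchange of resampling with the non-smooth argmin; I would cite that result rather than reprove it. With \textbf{A1.}--\textbf{A6.} in hand and \textbf{A7.} assumed in the statement, Theorem~\ref{th:CoC-algorithmOracleIterations} applies and yields the Golden-Partition Recovery property.
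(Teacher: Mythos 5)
Your proposal is correct and follows exactly the route the paper intends: the paper omits a detailed proof of this proposition as ``standard,'' but its surrounding text and the appendix proof of Proposition~\ref{prop::GoldenRobust} make clear that the argument is precisely a verification of \textbf{A1.}--\textbf{A6.} (with the Bahadur representation and plug-in consistency read off from the displayed quantile-regression asymptotics, and \textbf{A6.} supplied by Theorem~3 of \cite{hahn1995bootstrapping}, whose conditions are what items (i)--(iii) were adapted from), followed by an application of Theorem~\ref{th:CoC-algorithmOracleIterations}. Your identification of the conditional bootstrap CLT for the non-smooth check loss as the crux, resolved by citing Hahn rather than reproving it, matches the paper's stated reliance on that result.
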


\subsubsection{U-statistics}

We end this section on examples with the class of $U$-statistics.
For each centre \(k\in\{1, \ldots, K\}\),  let \(h_k:\mathcal X\times\mathcal X\to\mathbb R\) be a symmetric kernel and define the population target
\(
\theta_{0, k}\;:=\;\mathbb{E}\big[h_k(X_{1, k}, X_{2, k})\big].
\)
In this section,    take the estimator
\[
\widehat\theta_{n, k} \;:=\;\frac{2}{n(n-1)}\sum_{1\le i<j\le n} h_k(X_{i, k}, X_{j, k}), 
\]
for the nonparametric bootstrap replicate \(r\).

 Let
\(
\nu_k(x):=\mathbb{E}\!\big[h_k(x, X_{2, k})\big], \qquad
\psi_k(x):=\nu_k(x)-\theta_{0, k}, \qquad
\sigma_{1, k}^2:=\mathbb Var\!\big(\psi_k(X_{1, k})\big).
\) 
Then the  Hoeffding’s decomposition gives
\[
\widehat\theta_{n, k}-\theta_{0, k}
\;=\;\frac{2}{n}\sum_{i=1}^n \psi_k(X_{i, k})\;+\;R_{n, k}, 
\qquad R_{n, k}=o_p(n^{-1/2}), 
\]
so,  if \(\sigma_{1, k}^2>0\) and \(\mathbb{E}[h_k(X_{1, k}, X_{2, k})^2]<\infty\). The conditions used to establish the validity of the nonparametric bootstrap in the following proposition are standard; see,  for example,  \cite[Chap.~3]{shao2012jackknife}.

\begin{prop} 
\label{prop::Ustat}
Assume that the kernel \(h_k\) satisfies,   
\[
\mathbb{E}\!\big[h_k(X_{1, k}, X_{2, k})^2\big]<\infty, \qquad
\mathbb{E}\!\big|h_k(X_{1, k}, X_{1, k})\big|<\infty, \qquad
x\mapsto \mathbb{E}\!\big[h_k(x, X_{2, k})\big]\ \text{is not a.s.\ constant}
\]
and the assumption \textbf{A7} holds.
 Then the partition \(\widehat{C}_n^{(R(n))}\) produced by Algorithm~\ref{alg:CoC-algorithm_bootstrap_iterations},  with,  for each \(k\) and each bootstrap replicate \(r=1, \ldots, R(n)\),    
 $$
 \widehat\theta_{n, k} ^{(r)}\;:=\;\frac{2}{n(n-1)}\sum_{1\le i<j\le n} h_k(X_{i, k}^{(r)}, X_{j, k}^{(r)})
 $$
\[
\widehat\nu_{i, k}:=\frac{1}{n-1}\sum_{j\ne i} h_k(X_{i, k}, X_{j, k}), 
\qquad
\widehat V_{n, k}:=\sqrt{\frac{4}{n}\sum_{i=1}^n \big(\widehat\nu_{i, k}-\widehat\theta_{n, k}\big)^2}.
\]
 $\widehat Q_{n, k}:=1$
where \(\{X_{i, k}^{(r)} : i=1, \ldots, n\}\) comes from  
 the nonparametric bootstrap  
enjoys the Golden-Partition Recovery property.
\end{prop}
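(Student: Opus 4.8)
The plan is to reduce the claim to Theorem~\ref{th:CoC-algorithmOracleIterations}: since \textbf{A7} is assumed here, it suffices to verify that the $U$-statistic model satisfies \textbf{A1}--\textbf{A6}, after which the Golden-Partition Recovery conclusion $\P(\widehat C_n^{(R(n))}=\mathcal P)\to1$ follows verbatim. Observe first that here $p=1$ and $\theta_{0,k}=\E[h_k(X_{1,k},X_{2,k})]$ is automatically identified as an expectation, so no separate \textbf{Ident.} hypothesis is required (unlike in Propositions~\ref{prop::GoldenRobust}--\ref{prop::QR}).

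For \textbf{A1}--\textbf{A3} I would start from the Hoeffding decomposition recorded above, $\widehat\theta_{n,k}-\theta_{0,k}=\tfrac2n\sum_{i=1}^n\psi_k(X_{i,k})+R_{n,k}$ with $R_{n,k}=o_p(n^{-1/2})$. Multiplying by $\sqrt n$ identifies the H\'ajek-projection term as $V_k^{-1}U_{n,k}$ and the remainder as $\varepsilon_{n,k}=\sqrt n\,R_{n,k}\xrightarrow{p}0$, giving \textbf{A3}. The moment bound $\E[h_k(X_{1,k},X_{2,k})^2]<\infty$ makes $\psi_k(X_{1,k})$ square-integrable, so the ordinary CLT for the i.i.d.\ sum yields the Gaussian limit in \textbf{A2}; the non-degeneracy hypothesis that $x\mapsto\E[h_k(x,X_{2,k})]$ is not a.s.\ constant is precisely $\sigma_{1,k}^2>0$, which guarantees $Q_k\neq0$ in \textbf{A2} and a strictly positive scalar sensitivity in \textbf{A1}.

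Assumption \textbf{A4} is immediate, since the within-centre samples are drawn independently across centres and hence so are the linear terms $U_{n,k}$. For \textbf{A5}, $\widehat Q_{n,k}\equiv1$ is trivially consistent, while $\tfrac1n\sum_i(\widehat\nu_{i,k}-\widehat\theta_{n,k})^2$ is the standard leave-one-out (jackknife-type) estimator of the projection variance $\sigma_{1,k}^2$; its consistency under $\E[h_k(X_{1,k},X_{2,k})^2]<\infty$ together with $\E|h_k(X_{1,k},X_{1,k})|<\infty$ is classical (see, e.g., \cite[Chap.~3]{shao2012jackknife}), whence the stated $\widehat V_{n,k}\xrightarrow{p}V_k$.

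The crux is \textbf{A6}, the conditional bootstrap CLT. The plan is to apply Hoeffding's decomposition to the bootstrapped $U$-statistic $\widehat\theta^{(r)}_{n,k}$ formed on the resampled measure and to show that, conditionally on $\mathcal Z_n$ and in probability, its leading linear part $\tfrac2n\sum_i\psi_k(X^{(r)}_{i,k})$ obeys the bootstrap CLT for i.i.d.\ means, producing the limit $\mathcal N(0,V_k^{-1}Q_kV_k^{-1})$ demanded by \textbf{A6}. The delicate feature is that with-replacement resampling creates repeated indices, so the bootstrapped kernel sum acquires diagonal terms $h_k(X^{(r)}_{i,k},X^{(r)}_{i,k})$; the hypothesis $\E|h_k(X_{1,k},X_{1,k})|<\infty$ is exactly what renders these diagonal contributions asymptotically negligible, while non-degeneracy ensures the linear term dominates the higher-order (degenerate) pieces. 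I expect this verification of \textbf{A6}---in particular the uniform control of the diagonal and degenerate remainders needed for \emph{in-probability} conditional convergence---to be the main obstacle; it is the known bootstrap consistency for non-degenerate $U$-statistics, which I would invoke from the cited reference rather than re-derive. With \textbf{A1}--\textbf{A7} established, Theorem~\ref{th:CoC-algorithmOracleIterations} applies and gives the Golden-Partition Recovery property for $\widehat C_n^{(R(n))}$.
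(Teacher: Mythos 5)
Your proposal follows what is essentially the paper's own route: the paper omits this proof as one that ``follows standard arguments'' and points to \cite[Chap.~3]{shao2012jackknife} for the bootstrap conditions, and your plan --- reduce to Theorem~\ref{th:CoC-algorithmOracleIterations}, get \textbf{A1}--\textbf{A3} from Hoeffding's decomposition and the CLT for the H\'ajek projection (with non-degeneracy giving $\sigma_{1,k}^2>0$), \textbf{A4} from independence across centres, \textbf{A5} from the jackknife-type variance estimator, and \textbf{A6} from the known bootstrap consistency for non-degenerate $U$-statistics, with $\mathbb{E}|h_k(X_{1,k},X_{1,k})|<\infty$ controlling the diagonal terms created by with-replacement resampling --- is exactly that standard argument, correctly including the observation that \textbf{Ident.} is not needed here.

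There is, however, one concrete slip in your verification of \textbf{A5}, inherited from the statement but which a careful check must surface rather than gloss over. With $\widehat Q_{n,k}\equiv 1$, matching the Bahadur form $\sqrt n(\widehat\theta_{n,k}-\theta_{0,k})=V_k^{-1}U_{n,k}+o_p(1)$, $U_{n,k}\Rightarrow\mathcal N(0,1)$, to the projection term $(2/\sqrt n)\sum_{i=1}^n\psi_k(X_{i,k})\Rightarrow\mathcal N(0,4\sigma_{1,k}^2)$ forces $V_k=(2\sigma_{1,k})^{-1}$. But the displayed estimator $\widehat V_{n,k}=\bigl(\tfrac{4}{n}\sum_{i=1}^n(\widehat\nu_{i,k}-\widehat\theta_{n,k})^2\bigr)^{1/2}$ converges in probability to $2\sigma_{1,k}$, i.e.\ to $V_k^{-1}$, so your sentence ``whence the stated $\widehat V_{n,k}\xrightarrow{p}V_k$'' is false as written. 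This is not a harmless normalization: for the scalar two-singleton integration test with weights $w_i,w_j$ and plug-in variances $q_i=q_j=1$, the statistic converges to $(w_i^2+w_j^2)(s_i^2+s_j^2)\chi^2_1$ with $s_k^2=4\sigma_{1,k}^2$, while the reference eigenvalue is $(w_i^2+w_j^2)(q_i/w_i^2+q_j/w_j^2)$; these agree for $w_k=(2\sigma_{1,k})^{-1}$ and disagree for $w_k=2\sigma_{1,k}$, so no scale invariance rescues the mismatched pair and the test level would be wrong. The fix is one line --- take $\widehat V_{n,k}$ to be the reciprocal of the displayed square root (or, equivalently, keep the displayed $\widehat V_{n,k}$ and set $\widehat Q_{n,k}:=\widehat V_{n,k}^{\,4}$) --- after which your verification of \textbf{A1}--\textbf{A6} is complete and Theorem~\ref{th:CoC-algorithmOracleIterations} applies as you claim.
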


 \subsection{An alternative to weaken the independency assumption}\label{sec:weakAssumptions}

The independence assumptions \textbf{(A1)}--\textbf{(A4)} can be replaced by the following weaker conditions, which allow for cross-centre dependence.

\begin{itemize}
    \item[(A1)] For each \(k\), \(V_k\in\mathbb{R}^{p\times p}\) is symmetric positive definite.

    \item[(A2)] Let \(\overline{U}_n := (U_{n,1}^\top,\ldots,U_{n,K}^\top)^\top\in\mathbb{R}^{Kp}\). As \(n\to\infty\),
    \(\overline{U}_n \Rightarrow \mathcal{N}(0,\overline{Q})\),
    where \(\overline{Q}\in\mathbb{R}^{(Kp)\times(Kp)}\) is positive semidefinite. Moreover, the diagonal blocks \(Q_{kk}\) of \(\overline{Q}\) are positive definite.

    \item[(A3)] For each \(k\), \(\varepsilon_{n,k}\xrightarrow{p} 0\) as \(n\to\infty\).
\end{itemize}

\subsubsection{Sufficient conditions for (A2)}

To justify \textbf{(A2)}, we assume that cross-centre dependence is \emph{structural}: it appears only when the data streams are correctly aligned.

Write \(n=b_n m_n\) with \(b_n\to\infty\) and \(m_n\to\infty\). A \emph{matching partition} \(\mathcal{M}_n\) is a tuple \((\mathcal{P}_{n,1},\ldots,\mathcal{P}_{n,K})\), where each \(\mathcal{P}_{n,k}\) partitions \(I_k\) into \(m_n\) disjoint blocks of size \(b_n\). For a given \(\mathcal{M}_n\), define the matched block score vector
\(S_m(\mathcal{M}_n) := (S_{m,1}^\top,\ldots,S_{m,K}^\top)^\top\in\mathbb{R}^{Kp}\),
where, for \(m=1,\ldots,m_n\),
\[
S_{m,k} \;:=\; \frac{1}{\sqrt{b_n}}\sum_{i\in I_k^{m}} u_{i,k}\in\mathbb{R}^p,
\qquad
S_m \;:=\; \big((S_{m,1})^\top,\ldots,(S_{m,K})^\top\big)^\top .
\]

\paragraph{Dependence assumptions.}
We postulate the existence of an (unobserved) alignment that captures the joint distribution across centres.

\begin{itemize}
    \item[\textbf{(B0)}]
    There exists a specific (unobserved) sequence of matching partitions \(\mathcal{M}_n^*\) (the \emph{true matching}) such that the stacked block score vectors
    \(S_m^* := S_m(\mathcal{M}_n^*)\in\mathbb{R}^{Kp}\) are jointly well-defined on a common probability space, with
    \[
    S_{m,k}^* \;:=\; \frac{1}{\sqrt{b_n}}\sum_{i\in I_k^{*m}} u_{i,k}\in\mathbb{R}^p,
    \qquad
    S_m^* \;:=\; \big((S_{m,1}^*)^\top,\ldots,(S_{m,K}^*)^\top\big)^\top .
    \]
    \begin{itemize}
        \item Under \(\mathcal{M}_n^*\), the vectors \(\{S_m^*\}_{m=1}^{m_n}\) exhibit non-trivial cross-centre covariance, i.e., \(\mathrm{Cov}(S_{m,k}^*,S_{m,\ell}^*)\neq 0\) for some \(k\neq \ell\).
        \item Under any mismatched partition \(\mathcal{M}_n\neq \mathcal{M}_n^*\), cross-centre covariance based on paired blocks is attenuated, and (under random misalignment) converges to \(0\).
    \end{itemize}
\end{itemize}

\paragraph{Remark.}
If centres are independent, then all off-diagonal blocks of \(\mathbb{V}\mathrm{ar}(S_1^*)\) are zero, and any tuple of partitions across centres yields an equally valid matching from the covariance perspective.

\paragraph{Example: latent factor generative process (Regime 1).}
Cross-centre dependence is induced by a block-level common driver, while independence across blocks is enforced by renewing that driver from block to block. Concretely, for each \(m=1,\ldots,m_n\), draw a latent factor \(W_m\) i.i.d.\ from a fixed distribution, and independently draw an auxiliary innovation \(\xi_m\) i.i.d.\ with \(\xi_m\perp (W_1,\ldots,W_{m_n})\). Given \((W_m,\xi_m)\), generate the entire \(m\)-th matched block across centres,
\(\mathcal{Z}_m := \{Z_{i,k}: i\in I_k^{*m},\ k=1,\ldots,K\}\),
as \(\mathcal{Z}_m=g(W_m,\xi_m)\), where the map \(g\) does not depend on \(m\). This implies that the blocks \(\{\mathcal{Z}_m\}_{m=1}^{m_n}\) are i.i.d., hence \(\{S_m^*\}_{m=1}^{m_n}\) is i.i.d.\ as well. To encode conditional independence across centres given \(W_m\), one may take \(\xi_m=(\xi_{m,1},\ldots,\xi_{m,K})\) with \(\xi_{m,1},\ldots,\xi_{m,K}\) conditionally independent given \(W_m\), and set \(\{Z_{i,k}: i\in I_k^{*m}\}=g_k(W_m,\xi_{m,k})\). Then, even if \(S_{m,k}^*\perp S_{m,\ell}^*\mid W_m\), one typically has \(\mathrm{Cov}(S_{m,k}^*,S_{m,\ell}^*)\neq 0\) because both depend on \(W_m\), matching the first bullet of \textbf{(B0)}. A mismatched partition pairs blocks driven by \(W_m\) with blocks driven by \(W_{\tilde m}\) for \(\tilde m\neq m\); since \(W_m\perp W_{\tilde m}\), the shared driver is removed and cross-centre covariance is attenuated, converging to \(0\) under random misalignment.  

\paragraph{Example: time series with offsets (Regime 2).}
Cross-centre dependence is carried by a common latent time process, while misalignment destroys it because the latent process mixes over time. Let centre \(k\) observe a regular grid with offset \(t_{i,k}=\tau_k+i\), \(i=1,\ldots,n\), where \(\tau_k\in\mathbb{Z}\). Let \(\{W_t:t\in\mathbb{Z}\}\) be strictly stationary and mixing. Conditional on the path \(\{W_t\}\), assume the centres generate their series independently (e.g., via centre-specific dynamics driven by \(W_{t_{i,k}}\)), so cross-centre dependence can only arise through the shared \(\{W_t\}\). The true matching is determined by \emph{absolute time}: partition \(\mathbb{Z}\) into contiguous windows \(J^m=\{t\in\mathbb{Z}:(m-1)b_n<t\le mb_n\}\), and define \(I_k^{*m}=\{i:t_{i,k}\in J^m\}=\{i:(m-1)b_n-\tau_k<i\le mb_n-\tau_k\}\). Then \(S_{m,k}^*\) and \(S_{m,\ell}^*\) are built from observations driven by the same latent segment \(\{W_t:t\in J^m\}\), and the cross-centre covariance \(Q_{k\ell}\) is preserved. In contrast, ignoring \(\tau_k\) pairs blocks corresponding to different time windows \(J^m\) and \(J^{\tilde m}\); by mixing, these segments become asymptotically independent when \(|m-\tilde m|\) is large, so the induced cross-centre correlation vanishes. This explains why contiguous blocking can recover each \(Q_{kk}\) (marginally), whereas recovering \(Q_{k\ell}\) requires offset-aware alignment.

While the existence of \(\mathcal{M}_n^*\) ensures the validity of the asymptotic distribution in \textbf{(A2)}, identifying this partition in practice is often infeasible. Consequently, the specific off-diagonal blocks \(Q_{k\ell}\) of \(\overline{Q}\) remain unknown, while the diagonal blocks \(Q_{kk}\) (local asymptotic variances) are estimable regardless of the matching. Also, independence is important in the derivation of the results of the theorem \ref{thm::tail_bounds_WhatV2}.

\subsection{Algorithm}
{
 
\spacingset{1}
\begin{algorithm}[!htb]
\SetAlgoLined
\KwIn{Significance level \(\alpha\); plateau length \(N_{\max}\);
      fixed matrices \(\{\widehat V_{n, k}, \, \widehat Q_{n, k}\}_{k=1}^K\);
      \(R\) bootstrap sets of point estimators \(\{\widehat\theta^{(r)}_{n, k}\}_{k=1}^K\),  \(r=1, \dots, R\).}
\KwOut{Final partition \(C^\star\).}

%--- Round 1: one–shot CoC on the first replicate (reuse \widehat V, \widehat Q)
\(C^{(1)} \leftarrow
  \mathrm{one\_shot\_CoC}\big(
    \{\widehat\theta^{(1)}_{n, k}\}_{k=1}^K, \, 
    \{\widehat V_{n, k}\}_{k=1}^K, \, 
    \{\widehat Q_{n, k}\}_{k=1}^K, \, 
    \alpha\big)\);\;
\(n_1 \leftarrow |C^{(1)}|\);\quad
\(\textit{runlen} \leftarrow 1\);\quad
\(r \leftarrow 2\);\;

%--- Multi-round loop: reuse the R replicates cyclically; keep \widehat V, \widehat Q fixed
\While{\(\textit{runlen} < N_{\max}\)}{
  \(i \leftarrow 1 + ((r-1) \bmod R)\) \tcp*{replicate index for this round}

  Let \(B_1, \dots, B_L\) be the blocks of \(C^{(r-1)}\)\;
  \(C^{(r)} \leftarrow \{B_1\}\) \tcp*{start new partition with \(B_1\)}

  \For{\(\ell = 2\) \KwTo \(L\)}{
      \(\mathcal E \leftarrow \varnothing\) \tcp*{clusters with \(p \ge \alpha\)}

      \ForEach{cluster \(D \in C^{(r)}\)}{
         \(p_D \leftarrow
            \mathrm{integration\_test}\big(
                D, \, B_\ell \, \big|\, 
                \{\widehat\theta^{(i)}_{n, k}\}_{k=1}^K, \, 
                \{\widehat V_{n, k}\}_{k=1}^K, \, 
                \{\widehat Q_{n, k}\}_{k=1}^K
            \big)\)\;
         \If{\(p_D \ge \alpha\)}{add \(D\) to \(\mathcal E\)\;}
      }

      \uIf{\(\mathcal E = \varnothing\)}{
           \(C^{(r)} \leftarrow C^{(r)} \cup \{B_\ell\}\)  \tcp*{new cluster}
      }\Else{
           \(D^\star \leftarrow \arg\max_{D \in \mathcal E} p_D\)\;
           Merge \(B_\ell\) into \(D^\star\)\;
      }
  }

  \(n_r \leftarrow |C^{(r)}|\)\;
  \If{\(n_r = n_{r-1}\)}{\(\textit{runlen} \leftarrow \textit{runlen} + 1\)}\Else{\(\textit{runlen} \leftarrow 1\)}\;
  \(r \leftarrow r + 1\)\;
}

\Return{\(C^\star \leftarrow C^{(r-1)}\)}
\caption{Cyclical multi-round bootstrap CoC algorithm  }
\label{alg:CoC-algorithm_cyclic_tiebreak}
\end{algorithm}
}

 {\spacingset{1}
\begin{algorithm}[!htb]
\SetAlgoLined
\KwIn{Estimators $(\widehat\theta_{n, k}, \widehat V_{n, k}, \widehat Q_{n, k})$ for $k=1, \dots, K$,  level $\alpha$.}
\KwOut{Clusters $C$.}

% Global homogeneity check
Compute $p_{\rm global}\gets\text{global\_homogeneity\_test}(\{\widehat\theta_{n, k}, \widehat V_{n, k}, \widehat Q_{n, k}\}_{k=1}^K)$ via Proposition~\ref{cor::testHomogemeity}\;
\uIf{$p_{\rm global}\ge \alpha$}{
  \Return{$\{\{1, \dots, K\}\}$}  \tcp*{All centres homogeneous}
}

% Otherwise,  sequential clustering
Initialize $C\leftarrow\{\{1\}\}$\;
\For{$j\leftarrow 2$ \KwTo $K$}{
  \tcp{compute p-values against each existing cluster}
  Compute $L = |C|$\;
  \ForEach{cluster index $\ell=1, \dots, L$}{
    $p_\ell \leftarrow\mathrm{integration\_test}(C_\ell, j)$ via Proposition~\ref{prop::integration}\;
  }
  % collect eligible cluster indices
  $\mathcal E \leftarrow \{\ell:\; p_\ell \ge \alpha\}$\;
  \uIf{$\mathcal E=\varnothing$}{
    $C\leftarrow C\cup\{\{j\}\}$\;  \tcp*{new singleton cluster}
  }
  \Else{
    \tcp{choose the cluster with largest p-value; tie-break: smallest index}
    $\ell^\star \leftarrow \displaystyle\arg\max_{\ell\in\mathcal E} p_\ell$ \;
    Merge centre $j$ into $C_{\ell^\star}$ (i.e.\ $C_{\ell^\star}\leftarrow C_{\ell^\star}\cup\{j\}$)\;
  }
}
\Return{$C$}
\caption{CoC-algorithm: Clusters of Centres Algorithm}
\label{alg:CoC-algorithm_updated}
\end{algorithm}
}

\end{document}